\theoremstyle{plain}
\newtheorem{theorem}{Theorem}[section]
\newtheorem{lemma}[theorem]{Lemma}
\newtheorem{proposition}[theorem]{Proposition}
\newtheorem{corollary}[theorem]{Corollary}
\theoremstyle{definition}
\newtheorem{definition}[theorem]{Definition}
\theoremstyle{remark}
\newtheorem{remark}[theorem]{Remark}
\newtheorem{assumption}{Assumption}
\newtheorem{claim}{Claim}
\numberwithin{claim}{theorem}
\def\claimproof#1{\begin{proof}[Proof of Claim \ref{#1}] $\triangleright$}
\def\endclaimproof{\let\oldqed\qedsymbol\renewcommand\qedsymbol{$\triangleleft$}\end{proof}\renewcommand\qedsymbol{\oldqed}}
\numberwithin{equation}{section}
\newcommand{\indicator}[1]{\mathds{1}_{#1}}
\newcommand{\dist}{\operatorname{dist}}
\renewcommand{\cp}{\mathrm{cp}}
\newcommand{\set}[1]{\left\{#1\right\}}
\newcommand{\floor}[1]{\left\lfloor#1\right\rfloor}
\renewcommand{\max}{\mathrm{max}}
\global\long\def\per{\mathrm{per}}
\global\long\def\eps{\epsilon}
\global\long\def\E{\mathbb{E}}
\global\long\def\Z{\mathbb{Z}}
\global\long\def\P{\mathbb{P}}
\def\R{\mathbb{R}}
\def\N{\mathbb{N}}
\DeclareMathOperator\diam{diam}
\DeclareMathOperator\CE{CE}
\begin{document}

\title{Liquid-vapor transition in a model of a continuum particle system with finite-range modified Kac pair potential}

\author[1]{Qidong He}
\author[2]{Ian Jauslin}
\author[3]{Joel Lebowitz}
\author[4]{Ron Peled}
\affil[1,2,3]{Department of Mathematics, Rutgers University}
\affil[3]{Department of Physics, Rutgers University}
\affil[4]{Department of Mathematics, University of Maryland, College Park}

\date{}

\maketitle

\abstract

We prove the existence of a phase transition in dimension $d>1$ in a continuum particle system interacting with a pair potential containing a modified attractive Kac potential of range $\gamma^{-1}$, with $\gamma>0$. 
This transition is ``close'',  for small positive $\gamma$, to the one proved previously by Lebowitz and Penrose in the van der Waals limit $\gamma\downarrow0$.
It is of the type of the liquid-vapor transition observed when a fluid, like water, heated at constant pressure, boils at a given temperature. 
Previous results on phase transitions in continuum systems with stable potentials required the use of unphysical four-body interactions or special symmetries between the liquid and vapor.

The pair interaction we consider is obtained by partitioning space into cubes of volume $\gamma^{-d}$, and letting the Kac part of the pair potential be uniform in each cube and act only between adjacent cubes. 
The ``short-range'' part of the pair potential is quite general (in particular, it may or may not include a hard core), but restricted to act only between particles in the same cube.

Our setup, the ``boxed particle model'', is a special case of a general ``spin'' system, for which we establish a first-order phase transition using reflection positivity and the Dobrushin--Shlosman criterion.

\tableofcontents

\section{Introduction}

In 1998, Lebowitz, Mazel, and Presutti~\cite{lebowitz1998rigorous} wrote:
``An outstanding problem in equilibrium statistical mechanics is to derive rigorously the existence of a liquid-vapor phase transition (LVT) in a continuous system of particles interacting with any kind of reasonable potential, say Lennard--Jones or hard core plus attractive square well.''
This situation has remained largely unchanged over the past quarter century.
This is so despite the fact that the LVT is ubiquitous.
It is observed every time we boil a pot of water and is displayed prominently in textbooks as a paradigm of phase transitions in physical systems; see Figure \ref{fig:phase-diagram}.
The LVT is also observed in all computer simulations of systems with the above type of pair potentials \cite{hansen1969phase,mcdonald1972equation}.
These simulations show that such pair potentials are, in fact, adequate for describing the commonly observed LVT, so why can we not prove it mathematically?
In fact, the LVT is qualitatively described by approximate theories with mean-field-type interactions (where all particles interact with the same strength), dating back to the nineteenth century; see below and \cite{van73,Maxwell75}. 
However, for pair potentials without any symmetry, it has been proven rigorously only in the case of an attractive Kac type pair potential of the form $\gamma^d\varphi(\gamma r)$, where $d$ is the spatial dimension, in the infinite-range limit $\gamma\downarrow 0$, thus the need for a rigorous proof of the existence of the LVT in a continuum particle system with finite range (or rapidly decaying) pair interactions. 
We do this here for a simplified model for small $\gamma>0$.
We give a brief historical background of the LVT in Section \ref{sec:history}.

Mathematically speaking, we are interested in proving, for a continuous system of particles with stable pair interactions (see Appendix \ref{app:ruelle}) and no special symmetries, the existence of more than one infinite-volume Gibbs measure having different densities for some ranges of inverse temperature $\beta=1/T$ (setting Boltzmann's constant equal to $1$) and chemical potential $\lambda$ \cite{ruelle1971existence}.
For a system of particles in a region $\Lambda\subset\R^d$ with pair interactions $u(x-x')$, $x,x'\in\R^d$, the probability of having $N$ particles in a configuration $X_{\Lambda}=(x_1,\dots,x_N)\in\Lambda^N$ given a specified configuration in $\Lambda^c$, $Y_{\Lambda^c}$ (boundary condition), is given, in the grand-canonical Gibbs measure, by 
\begin{equation}
\label{eqn:grand-canonical finite volume Gibbs measure}
    \P^{\Lambda}_{\beta,\lambda}(\dd{X}_{\Lambda}\mid Y_{\Lambda^c})=\frac{1}{\Xi^{\Lambda}_{\beta,\lambda}}
    \frac{1}{N!}e^{-\beta[-\lambda N+U(X_\Lambda\mid Y_{\Lambda^c})]}\prod_{i=1}^N\dd{x}_i,
\end{equation}
where 
\begin{equation}
\label{eqn:particle model Hamiltonian}
    U(X_\Lambda\mid Y_{\Lambda^c})=\sum_{1\le i<j\le N}u(x_i-x_j)+\sum_{i,k}u(x_i-y_k)
\end{equation}
and $\Xi^{\Lambda}_{\beta,\lambda}$, the grand-canonical partition function, which depends on $Y_{\Lambda^c}$, is a normalizing factor.
We are interested in the behavior of macroscopic systems, idealized by taking the thermodynamic limit of $\Lambda\uparrow\R^d$.
The appropriate infinite-volume Gibbs measure is then characterized by the Dobrushin-Lanford-Ruelle (DLR) equation \cite{dobruschin1968description,lanford1969observables}, which specifies the conditional probability measure for any region $\Lambda\subset\R^d$ given a configuration in $\Lambda^c$ as in \eqref{eqn:grand-canonical finite volume Gibbs measure}.
The question then is whether this measure is unique.
If it is not unique, then we say that there is a coexistence of phases at that value of $\beta$ and $\lambda$.
When two translation-invariant, extremal Gibbs measures without any symmetry coexist but have different densities, $\rho_v<\rho_l$, we say that the system has an LVT, with $\rho_v$ and $\rho_l$ being the densities of the vapor and the liquid.

We can also look at the LVT from a macroscopic point of view: the average density $\rho$ for the finite system in the region $\Lambda$ is given by
\begin{equation}
    \rho^{\Lambda}(\beta,\lambda):=\frac{1}{\beta\abs{\Lambda}}\pdv{\lambda}\log\Xi^\Lambda_{\beta,\lambda},
\end{equation}
whose limit as $\Lambda\uparrow\R^d$, denoted by $\rho(\beta,\lambda)$, is a monotone increasing function of $\lambda$, which will have a discontinuity at the LVT, that is, at some value of $\lambda=\lambda_\ast$ where the two Gibbs measures (phases) coexist:
\begin{equation}
    \rho_v=\lim_{\lambda\uparrow\lambda_\ast}\rho(\beta,\lambda)<\rho_l=\lim_{\lambda\downarrow\lambda_\ast}\rho(\beta,\lambda).
\end{equation}

From a physical point of view, it is more natural to consider the LVT in the canonical ensemble where the density of the system, $\rho$, rather than its chemical potential, is specified. 
The LVT then corresponds to a linear segment for the canonical free energy \eqref{eq:Lebowitz-Penrose introduction} as a function of the density $\rho$, which, as explained in the next section, corresponds to the physical coexistence of the liquid and vapor phases.
It is in this ensemble in which Lebowitz--Penrose \cite{lebowitz1966rigorous} proved \eqref{eq:Lebowitz-Penrose introduction} for potentials of the form \eqref{eqn:introduction_pair-potential}.
The equivalence to the grand-canonical picture with two equilibrium states at some value of $\lambda$ is discussed in detail by Gates--Penrose \cite{gates1969vani,gates1970vanii,gates1970vaniii}. 
We shall skip over the fine details of this equivalence and use the physical picture in the rest of the introduction and then switch to the grand-canonical one in describing our results.

\begin{figure}
\centering
\includegraphics[width=0.5\linewidth]{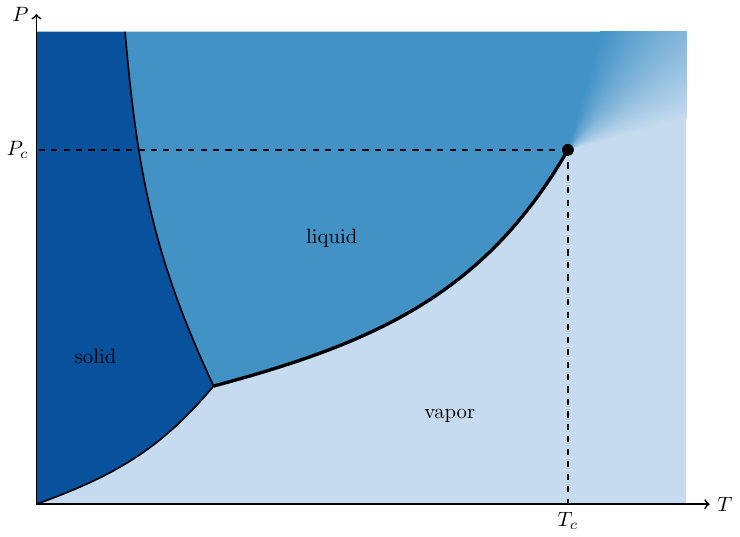}
\caption{A schematic phase diagram of a fluid in the temperature-pressure plane $(T,P)$.
There exists a critical point $(T_{c},P_{c})$ below which an LVT occurs, and the density (at which the free energy is minimized) jumps discontinuously when crossing the liquid-vapor transition line (thick black line).
The other lines correspond to fluid-solid transitions.}
\label{fig:phase-diagram}
\end{figure}

Before describing our results, we give a brief, selective history of the LVT problem.

\subsection{History of the liquid-vapor phase transition}
\label{sec:history}

\paragraph{Origin of the problem}

Experimental studies of the LVT go back to the beginning of the 19th century \cite{andrews1869xviii}, when physicists began looking for an expression for the pressure valid for densities beyond that of the very dilute gas, $p=\rho T$ (setting Boltzmann's constant equal to $1$), which would also describe the LVT in which liquid and vapor coexist at the same pressure.
In 1873, van der Waals in his doctoral thesis \cite{van73} derived heuristically an equation of state for the pressure as a function of temperature and density, which gave a qualitative understanding of the LVT,
\begin{equation}\label{eqn:introduction_van-der-Waals}
p(T,\rho)=\frac{T\rho}{1-\rho b}-\frac{1}{2}a\rho^{2}.
\end{equation}
Augmented by Maxwell's equal area construction in 1875 \cite{Maxwell75} (see Figure \ref{fig:maxwell-construction}), which ensures that the system is thermodynamically stable as described below, this equation gives, with suitable choices of empirical parameters $a,b>0$, a good qualitative description of the LVT observed in real systems \cite{Bernal33,Millot92}. 

The idea behind this equation of state (EOS) is that the force between atoms is strongly repulsive (hard-core like) at short distances and weakly attractive at large distances, which respectively give rise to the first and second terms on the RHS of \eqref{eqn:introduction_van-der-Waals}.
In fact, the second term can be derived by assuming a ``mean field'' attractive interaction between the particles, i.e., every pair of particles interact with a potential independent of the distance between them, whose strength is inversely proportional to the size of the system.
The first term in \eqref{eqn:introduction_van-der-Waals} can be written as $T/(\rho^{-1}-b)$, where the denominator represents the effective volume available to each particle.
It is, in fact, the exact pressure of a one-dimensional system of hard rods of diameter $b$ and can be considered an approximation for a strong short-range repulsion in higher-dimensional systems.

The canonical free energy density $f(T,\rho)$ is defined as the thermodynamic limit
\begin{equation}
    f(T,\rho):=-\lim_{\substack{N\to\infty,\Lambda\uparrow\R^d\\N/\abs{\Lambda}\to\rho}}\frac{T}{\abs{\Lambda}}\log Z(T,N,\Lambda),
\end{equation}
where 
\begin{equation}
    Z(T,N,\Lambda):=\frac{1}{N!}\int_{\Lambda^N}\dd{x}_1\dots\dd{x}_N e^{-\frac{1}{T} U(X_\Lambda)}
\end{equation}
is the canonical partition function, $\abs{\Lambda}$ is the volume of $\Lambda$, and $U(X_\Lambda)$ is the interaction potential of the $N$ particles in $\Lambda$, i.e., the first sum in \eqref{eqn:particle model Hamiltonian}.
Using the fact (see e.g. \cite{Ruelle69}) that the pressure is given by
\begin{equation}
\label{eqn:introduction_pressure}
    p(T,\rho)=\rho^2\pdv{\rho}\left[\frac{1}{\rho}f(T,\rho)\right],
\end{equation}
\eqref{eqn:introduction_van-der-Waals} is equivalent to the following expression for $f(T,\rho)$:
\begin{equation}
\label{eqn:introduction_free-energy}
    f(T,\rho)=-T\rho\log\frac{1-b\rho}{\rho}-\frac{1}{2}a\rho^2,
\end{equation}
up to a term independent of $\rho$.
The Maxwell construction thus corresponds to the Gibbs double tangent construction for $f(T,\rho)$; see Figure \ref{fig:constructions}.

The question then, as now, is how to derive the liquid-vapor phase transition from a ``realistic'' pair interaction between the particles. 

\begin{figure}
    \centering
    \begin{subfigure}[t]{0.46\textwidth}
        \centering
        \includegraphics[width=0.9\columnwidth]{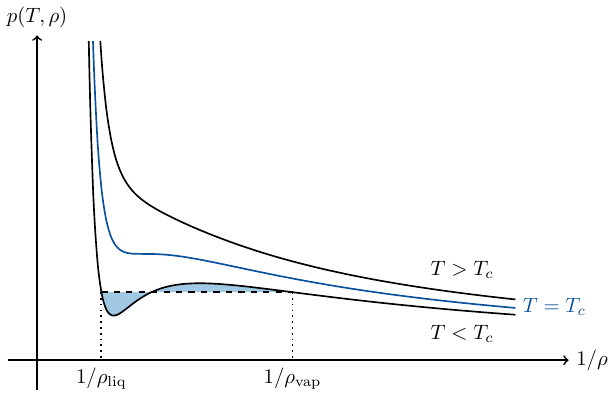}
        \caption{The solid line represents isotherms as given by \eqref{eqn:introduction_van-der-Waals} while the dotted line is the Maxwell construction giving equal areas to the solid color regions.
        This gives the coexistence of liquid and vapor phases at the same $T$ and $p$.}
        \label{fig:maxwell-construction}
    \end{subfigure}
    \hspace{12pt}
    \begin{subfigure}[t]{0.46\textwidth}
        \centering
        \includegraphics[width=0.9\columnwidth]{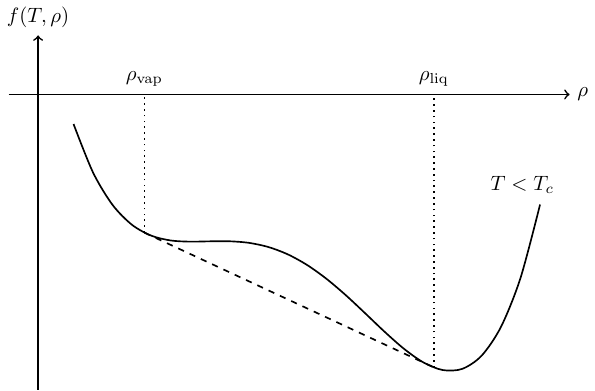}
        \caption{The Gibbs double tangent construction for $T<T_c$.}
        \label{fig:double-tangent}
    \end{subfigure}
    \caption{}
    \label{fig:constructions}
\end{figure}

\paragraph{Mathematical state-of-the-art}

A rigorous derivation of \eqref{eqn:introduction_van-der-Waals} was made by Kac, Uhlenbeck, and Hemmer (KUH) \cite{kac1963van}, who considered a one-dimensional system of hard rods with an attractive long-range pair interaction $\varphi_\gamma(\abs{x_i-x_j})$ of the Kac form,
\begin{equation}
\varphi_{\gamma}(r)=-\alpha\gamma e^{-\gamma r}.
\end{equation}
They proved that, in the thermodynamic limit, such a system of particles has an EOS $p_{\gamma}(T,\rho)$ which becomes, in the limit $\gamma\downarrow0$, the same as the van der Waals EOS \eqref{eqn:introduction_van-der-Waals} with Maxwell's construction \emph{built in}, rather than having to be added \emph{ad hoc} (as was the case before).
Lebowitz and Penrose (LP) \cite{lebowitz1966rigorous} generalized the result of KUH to arbitrary dimensions, proving, for systems with a pair potential of the form 
\begin{equation}
\label{eqn:introduction_pair-potential}
u_{\gamma}(r)=v(r)-\alpha\gamma^{d}\varphi(\gamma r),
\end{equation}
where $\varphi(r)\ge 0$, $\int_{\R^{d}}\dd{r}\varphi(r)=1$, and $v(r)$ has a hard-core part, the validity of the Gibbs double tangent construction, i.e., they proved, for both continuum and lattice systems, that
\begin{equation}\label{eq:Lebowitz-Penrose introduction}
f_+(T,\rho):=\lim_{\gamma\downarrow0}f_{\gamma}(T,\rho)=\CE\set{-\frac{1}{2}\alpha\rho^{2}+f_{0}(T,\rho)},
\end{equation}
where $f_{\gamma}(T,\rho)$ is the free energy density of the system with potential $u_{\gamma}(r)$, and $f_{0}(T,\rho)$ is the free energy density of the reference system with pair potential $v(r)$ containing a hard core.
The convex envelope (CE) of a function $\phi$ is the largest convex function smaller than $\phi$; see Figure~\ref{fig:double-tangent}.
The existence and good thermodynamic properties, e.g., convexity, of the free energies $f_{\gamma}(T,\rho)$ and $f_0(T,\rho)$ are assured by general results on the existence of the thermodynamic limit for systems with stable and tempered potentials; see Theorem \ref{theo:particle}. 

In one dimension, particle systems have no phase transition for potentials that decay faster than $r^{-2}$.
Hence, the free energy density $f_\gamma(T,\rho)$ is strictly convex in $\rho$ if $v(r)$ decays faster than $r^{-2}$ as $r\rightarrow\infty$ \cite{Ruelle69}, so the transition, as characterized by a linear portion of $f_\gamma$ as a function of $\rho$, only appears in the $\gamma\downarrow0$ limit.
This is not the case in $d\ge 2$ dimensions, where we expect not only a transition for $\gamma>0$ \cite{presutti2008scaling} but that the transition line should be, for small $\gamma>0$, close to its limiting value as $\gamma\downarrow0$.
This has been established by Presutti \cite{presutti2008scaling} for lattice gases on $\Z^{d}$ ($d\ge 2$) when $v(r)$ is just the single-site hard-core exclusion \cite{cassandro1996phase,bodineau1997phase,bovier1997low}.
However, for general lattice systems, without the particle-hole symmetry present in this case, there is no proof of the existence of an LVT at $\gamma>0$ close to that for $\gamma\downarrow 0$.
The situation is similar for continuum systems with stable pair potentials: the only proof of the existence of an LVT for continuum systems with finite-range potentials and no symmetries is given by Lebowitz, Mazel, and Presutti \cite{lebowitz1998rigorous,lebowitz1999liquid}, who had to resort to \emph{unphysical}, \emph{long-range four-body} repulsion to take the place of the short-range repulsive pair interactions to ensure stability of the system against collapse by the attractive Kac potential.
Their result, therefore, did not alleviate the need for proving the LVT for realistic pair potentials.
We also note here that Ruelle \cite{ruelle1971existence} proved the existence of a phase transition for the continuum two-species symmetric Widom-Rowlinson model and that Johansson \cite{johansson1995separation} proved a phase transition for a 1D continuum model whose pair potential decays like $r^{-\alpha}$, $\alpha\in(1,2)$.

\paragraph{The present work}

To obtain insight into the LVT for pair potentials of the form \eqref{eqn:introduction_pair-potential} at small $\gamma>0$ and its relation to the mean-field limit $\gamma\downarrow 0$, we study here a simplified model which we call the ``box model'' and which will be described in detail below.
In this model, we prove the existence of an LVT for small $\gamma>0$ given some properties of the free energy $f_{0}(T,\rho)$ of the reference system, the one in which the Kac potential is absent.
These properties are essentially the same as those used by LP to prove the LVT in the limit $\gamma\downarrow 0$, some of which are direct consequences of the existence of the thermodynamic limit for particle systems with superstable and tempered interactions \cite[\S3.1]{Ruelle69}; see Appendix \ref{app:ruelle}.
We show that, in the box model framework, an LVT occurs \emph{close} to the $\gamma\downarrow 0$ limit, in a precise sense, for sufficiently small $\gamma>0$.
Our result also applies to soft-core potentials, which LP did not consider.

The main benefit of the ``box model'' is that it is reflection positive. We remark that the idea of using reflection positivity to justify mean-field predictions has appeared before, in a different context, in the works~\cite{biskup2003rigorous,biskup2006mean} (see also~\cite[Section 4]{biskup2009reflection}).

\subsection{The boxed particle model and the box model}\label{sec:the box model}

In this paper, we introduce and study the \emph{box model}, which is a generalization of a particle model called the \emph{boxed particle model} obtained by modifying the pair potential in \eqref{eqn:introduction_pair-potential} that was studied by Lebowitz and Penrose in the mean-field limit $\gamma\downarrow 0$.
Let us first define the boxed particle model, which we will generalize to the box model in Section \ref{sec:box_model_def}.

The boxed particle model is defined by modifying \eqref{eqn:introduction_pair-potential} (primarily in the Kac potential part $\varphi$, but the short-range part will also be modified).
To do this, we partition $\R^d$ into a lattice of \emph{mesoscopic} cubes of side length $\gamma^{-1}$, and replace the long-range Kac interaction of LP with one that is constant for particles inside each of the cubes, and has a (possibly different) constant value for particles in adjacent cubes.
We will be quite general about the other interaction between particles given by $v(r)$ in \eqref{eqn:introduction_pair-potential}, and merely assume that it is superstable and tempered \cite{Ruelle69}.
However, we will neglect the interaction between different cubes due to $v(r)$.
A possible choice for $v(r)$ is a hard-core interaction, but we will be more general than that and allow interactions like the Lennard--Jones or Morse pair potentials (see Appendix \ref{app:ruelle} for more details).

More formally, we define the boxed particle model as a model for a system of particles interacting via the following pair interaction:
\begin{equation}
\label{eqn:boxed pair interaction}
  u_\gamma(x,y):=
  \begin{cases}
      v(x-y)-J_1\gamma^d & \mbox{if }\operatorname{Box}_\gamma(x)=\operatorname{Box}_\gamma(y)\\
      -J_2\gamma^d & \mbox{if }\operatorname{Box}_\gamma(x)\sim\operatorname{Box}_\gamma(y)\\
      0 & \mbox{otherwise}
  \end{cases}
\end{equation}
where $J_1>-2dJ_2$ and $J_2>0$ are constants, $\operatorname{Box}_\gamma(x):=\gamma^{-1}\floor{\gamma x}+[0,\gamma^{-1})^d$ denotes the unique cube in the mesoscopic lattice containing $x\in\R^d$, and the symbol $\sim$ means that two cubes are nearest neighbors, i.e., they share a $(d-1)$-dimensional face, and $v$ is a superstable and tempered potential (see Appendix \ref{app:ruelle} and \cite{Ruelle69}).
We fix the chemical potential in the boxed particle model to $\lambda+\frac12J_1\gamma^d$ (we add $\frac12J_1\gamma^d$ to simplify the notation in the ``spin'' model introduced in the next paragraph).

It is straightforward to check that the box model is equivalent to a ``spin'' model on the lattice $\mathbb Z^d$ with nearest neighbor interactions, where each point $v$ corresponds to a mesoscopic lattice cube as above, and the ``spin'' at $v$ is given by the density $\eta_v:=N_v/\gamma^{-d}$ of particles inside the cube corresponding to $v$.
Indeed, by integrating over the positions of the $N_v$ particles in each cube $v$, we find that the boxed particle model is equivalent to the following effective Hamiltonian on configurations of densities in the cubes:
\begin{equation}
\begin{split}
\label{eq:Hamiltonian_box}
H_{\lambda,\gamma}(\eta) 
:= &\gamma^{-d}\left[-\lambda\sum_v\eta_v- \frac{1}{2}J_{1}\sum_v \eta_v^{2} - J_{2}\sum_{v\sim w}\eta_v\eta_w + \sum_v f_{\gamma}(\eta_v)\right]\\
=&\gamma^{-d}\left[-\lambda\sum_v\eta_v- \frac{1}{2}(J_1+2d J_2)\sum_v \eta_v^{2} + \frac{1}{2}J_{2}\sum_{v\sim w}|\eta_v-\eta_w|^2 + \sum_v f_{\gamma}(\eta_v)\right],
\end{split}
\end{equation}
and
\begin{equation}
    f_\gamma(\eta_v):=-\frac{1}{\beta\gamma^{-d}}\log\frac1{(\eta_v\gamma^{-d})!}\int_{([0,\gamma^{-1})^{d})^{\eta_v\gamma^{-d}}}\dd{x}_1\cdots \dd{x}_{\eta_v\gamma^{-d}}\prod_{i<j}e^{-\beta v(x_i-x_j)}.
\end{equation}
is the canonical free energy density for particles in a cube interacting via the pair potential $v(\cdot)$ with free boundary conditions.

As noted earlier, we will study the boxed particle model in the grand-canonical ensemble, which is equivalent to the canonical ensemble \cite{Ruelle69}.
For the connection between the two ensembles in the limit $\gamma\downarrow0$, see \cite{gates1969vani,gates1970vanii,gates1970vaniii}.

\subsubsection{Formal description of the box model} \label{sec:box_model_def}

Let us now define the \emph{box model}, which is a slight generalization of the boxed particle model.
Specifically, we will relax the condition that $f_\gamma$ be the free energy of a particle model inside the mesoscopic cube, and merely require that it be a function that satisfies the conditions detailed in the rest of this section.
(Thus, whereas the boxed particle model is in fact a model of particles interacting via a pair potential, the box model is more general.)

When the pair potential $v(\cdot)$ has a hard core, $\eta$ takes values in a bounded subset of $\gamma^d\N$.
Another generalization we will make is to allow $\eta$ to take discrete or continuous values, as we will now describe.

Let $0<\gamma\le 1$ (the scale parameter). 
We allow both discrete and continuous non-negative values of $\eta:\Z^d\to S_\gamma$ where either $S_\gamma=[0,\infty)$, termed the \emph{continuous case}, or $S_\gamma=\gamma^d\Z\cap[0,\infty)$, termed the \emph{discrete case}.

Fix $\alpha, J_2>0$ (the coupling strengths) and $\beta>0$ (the inverse temperature)---these parameters will be held fixed throughout our arguments and will be omitted from the notation. Let $\lambda\in\R$ (the chemical potential). The formal Hamiltonian of the box model is given by
\begin{equation}\label{eq:Hamiltonian_box_rewritten}
H_{\lambda,\gamma}(\eta) 
\coloneqq \gamma^{-d}\left(\sum_{v}\left[-\lambda\eta_v- \frac{1}{2}\alpha\eta_v^{2} + f_{\gamma}(\eta_v)\right]+\frac{1}{2}J_{2}\sum_{v\sim w}\abs{\eta_{v}-\eta_{w}}^{2}\right),
\end{equation}
where $f_{\gamma}:S_\gamma\to\R\cup\{+\infty\}$ is a measurable function for each $\gamma$, satisfying the convergence assumption in Section \ref{sec:convergence_assumptions} (in particular, these conditions allow for $f_\gamma$ to be the free energy for a system of particles interacting via a superstable, tempered pair interaction; see Section \ref{sec:convergence_assumptions} for details, and so the boxed particle model introduced above is a special case of the box model). 
We note that $f_{\gamma}$ may depend on the fixed parameters $\alpha$, $J_2$, and $\beta$, but we will not make this dependence explicit.

Given an integer $L\ge 2$, the Gibbs measure of the box model on the discrete torus $\Lambda_L:=\mathbb Z^d/(L \mathbb Z)^d$ is
\begin{equation}\label{eq:box model finite-volume Gibbs measure periodic}
    \frac{1}{\Xi^{L, \per}_{\lambda,\gamma}}e^{-\beta H_{\lambda,\gamma}^{L,\per}(\eta)}\prod_{v\in\Lambda_L}\dd{\nu}_\gamma(\eta_v)
\end{equation}
where $\nu_\gamma$ is Lebesgue measure on $S_\gamma=[0,\infty)$ in the continuous case and $\nu_\gamma$ is the normalized counting measure $\gamma^d\sum_{\rho\in S_\gamma}\delta_\rho$ in the discrete case, where $H_{\lambda,\gamma}^{L,\per}$ is given by the  expression~\eqref{eq:Hamiltonian_box_rewritten} for $H_{\lambda,\gamma}$, changing the sums to run over $v\in\Lambda_L$ and $\{v,w\}\in E(\Lambda_L)$ (the edge set of the discrete torus graph), and where
\begin{equation}
\label{eq:box model finite-volume partition function periodic}
    \Xi^{L,\per}_{\lambda,\gamma}:=\int e^{-\beta H_{\lambda,\gamma}^\Lambda(\eta)}\prod_{v\in\Lambda}\dd{\nu}_\gamma(\eta_v)
\end{equation}
is the partition function (normalizing constant). One may similarly define Gibbs measures with free or prescribed boundary conditions.

\subsubsection{Convergence assumptions}\label{sec:convergence_assumptions}

We assume that the functions $(f_\gamma)_{0<\gamma\le 1}$ satisfy one of the following two conditions: (in the discrete case, for convenience in stating the assumption, we extend $f_{\gamma}$ to $[0,\infty)$ by a linear interpolation)
\begin{enumerate}
    \item Hard-core case: There is $\rho_{\cp}\in(0,\infty)$ and a continuous $f:[0,\rho_{\cp})\to\R$ such that:
\begin{enumerate}
    \item For every $\rho_0\in[0,\rho_{\cp})$,
    \begin{equation}\label{eq:uniform convergence below eta cp}
        \lim_{\gamma\downarrow0}f_{\gamma}(\rho) = f(\rho)\quad\text{uniformly in $\rho\in[0,\rho_0]$.}
    \end{equation}
    \item 
    \begin{equation}\label{eq:convergence at eta cp}
        \liminf_{\substack{\gamma\downarrow0\\\rho\to\rho_{\cp}}} f_{\gamma}(\rho) \ge \lim_{\rho\uparrow\rho_{\cp}} f(\rho)\in(-\infty,\infty]
    \end{equation}
    (the existence of the limit $\lim_{\rho\uparrow\rho_{\cp}} f(\rho)$ in $(-\infty,\infty]$ is part of the assumption).
    \item There is $\rho_{\max}\in(\rho_{\cp},\infty)$ such that $f_{\gamma}(\rho)=\infty$ for all $\rho>\rho_{\max}$ and $0<\gamma\le 1$. In addition, for every $\rho_1>\rho_{\cp}$,
    \begin{equation}\label{eq:uniform convergence above eta cp}
        \lim_{\gamma\downarrow0}\inf_{\rho\ge \rho_1}f_{\gamma}(\rho) = \infty.
    \end{equation}
\end{enumerate}
\item Soft-core case: There is a continuous $f:[0,\infty)\to\R$ such that:
\begin{enumerate}
    \item For every $\rho_0\in[0,\infty)$,
    \begin{equation}\label{eq:uniform convergence below infinity}
        \lim_{\gamma\downarrow0}f_{\gamma}(\rho) = f(\rho)\quad\text{uniformly in $\rho\in[0,\rho_0]$.}
    \end{equation}
    \item
    \begin{equation}\label{eq:growth at infinity}
        \alpha_{\max}:=\liminf_{\substack{\gamma\downarrow0\\\rho\to\infty}}\frac{f_{\gamma}(\rho)}{\frac{1}{2}\rho^2}\in(0,\infty].
    \end{equation}
\end{enumerate}
\end{enumerate}
To unify some of our later statements, we set $\alpha_{\max}\coloneqq\infty$ in the hard-core case.
We note that the above assumptions imply that the box model is well-defined for sufficiently small $\gamma$ and $\alpha<\alpha_\max$, in the sense made precise in Lemma~\ref{lem:model is well defined}.

\begin{remark}
\begin{enumerate}
\item In the hard-core case, it follows from~\eqref{eq:uniform convergence below eta cp} that
$\liminf_{\substack{\gamma\downarrow0\\\rho\to\rho_{\cp}}} f_{\gamma}(\rho) \le \liminf_{\rho\uparrow\rho_{\cp}} f(\rho)$. This complements the first inequality in~\eqref{eq:convergence at eta cp}, making it an equality.

\item
In the soft-core case, we note that~\eqref{eq:uniform convergence below infinity} and~\eqref{eq:growth at infinity} imply that
\begin{equation}\label{eq:growth of f at infinity}
    \liminf_{\rho\to\infty}\frac{f(\rho)}{\frac{1}{2}\rho^2}\ge\alpha_\max.
\end{equation}

\item
The above assumptions are satisfied for the boxed particle model, for which we recall that $f_{\gamma}(\rho)$ (see (\ref{eq:box model finite-volume partition function periodic})) is the free energy for a system of particles that take positions in the hypercubic volume $\gamma^{-d}$ and interact via a superstable and tempered pair potential.
This is a consequence of well-known results by Ruelle \cite{Ruelle69, ruelle1963classical}.
For more detailed references and a proof that these conditions are satisfied for superstable and tempered potentials, see Appendix \ref{app:ruelle} and Proposition \ref{prop:convergence_assumptions}.
In particular, the assumptions are verified when the interaction is a hard core, or the Lennard--Jones potential, or the Morse potential; see Proposition \ref{prop:superstable_examples}.
\end{enumerate}
\end{remark}

\subsubsection{Main results}
\label{sec:main results}

Our first result identifies the limiting grand-canonical pressure of the box model, showing that it coincides with the expression obtained by Gates--Penrose in the limit $\gamma\downarrow 0$ \cite{gates1969vani}. 
Define the {\it canonical mean-field free energy density} (following the nomenclature of \cite[(10.2.1.3)]{presutti2008scaling})
\begin{equation}\label{eq:mean-field free energy density}
    \phi_{\lambda}(\rho) 
:= -\lambda\rho-\frac{1}{2}\alpha\rho^{2}+ f(\rho).
\end{equation}

\begin{theorem}
\label{thm:comparison_with_GP}
In every dimension $d\ge 1$,
\begin{equation}  
\lim_{\substack{L\to\infty\\\gamma\downarrow0}}\frac{1}{\beta\gamma^{-d}|\Lambda_L|}\log\Xi^{L, \per}_{\lambda,\gamma} = -\inf_{\rho}\phi_{\lambda}(\rho),
\end{equation}
with the infimum over $\rho\in[0,\rho_\cp)$ in the hard-core case and over $\rho\in[0,\infty)$ in the soft-core case.
\end{theorem}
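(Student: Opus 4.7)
The plan is to establish matching upper and lower bounds for
\begin{equation*}
p_\gamma^L := \frac{1}{\beta\gamma^{-d}|\Lambda_L|}\log\Xi^{L,\per}_{\lambda,\gamma},
\end{equation*}
exploiting two features of $H_{\lambda,\gamma}^{L,\per}$: the gradient term $\tfrac{1}{2}J_2\sum_{v\sim w}|\eta_v-\eta_w|^2$ is nonnegative, and the prefactor $\gamma^{-d}$ turns the site-wise $\nu_\gamma$-integrals into Laplace integrals as $\gamma\downarrow 0$. Write $\phi_{\lambda,\gamma}(\eta) := -\lambda\eta - \tfrac{1}{2}\alpha\eta^2 + f_\gamma(\eta)$, so $\phi_{\lambda,\gamma}\to\phi_\lambda$ on the domain where $f$ is defined.

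\textbf{Upper bound.} Dropping the nonnegative gradient term factorizes the integrand, giving
\begin{equation*}
\Xi^{L,\per}_{\lambda,\gamma} \le Z_\gamma^{|\Lambda_L|}, \qquad Z_\gamma := \int e^{-\beta\gamma^{-d}\phi_{\lambda,\gamma}(\eta)}\,d\nu_\gamma(\eta),
\end{equation*}
so it suffices to show $\limsup_{\gamma\downarrow 0}\tfrac{1}{\beta\gamma^{-d}}\log Z_\gamma \le -\inf_\rho\phi_\lambda(\rho)$. In the hard-core case, $\nu_\gamma$ is supported in $[0,\rho_{\max}]$, so $Z_\gamma \le C\,e^{-\beta\gamma^{-d}\inf_{[0,\rho_{\max}]}\phi_{\lambda,\gamma}}$ (with $C=\rho_{\max}$ in the continuous case and $C=\rho_{\max}+\gamma^d$ in the discrete case). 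A compactness argument, picking nearly minimizing $\rho_\gamma\in[0,\rho_{\max}]$ and splitting by whether a subsequential limit lies in $[0,\rho_\cp)$, equals $\rho_\cp$, or lies in $(\rho_\cp,\rho_{\max}]$, shows $\liminf_{\gamma\downarrow 0}\inf_{[0,\rho_{\max}]}\phi_{\lambda,\gamma} \ge \inf_{[0,\rho_\cp)}\phi_\lambda$ via~\eqref{eq:uniform convergence below eta cp}, \eqref{eq:convergence at eta cp}, and~\eqref{eq:uniform convergence above eta cp}, respectively. In the soft-core case, fix $\tilde\alpha\in(\alpha,\alpha_{\max})$ and use~\eqref{eq:growth at infinity} to get, for all sufficiently small $\gamma$ and all $\eta\ge R$ (some $R$ large), the bound $\phi_{\lambda,\gamma}(\eta) \ge \tfrac{1}{4}(\tilde\alpha-\alpha)\eta^2$. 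Then $Z_\gamma$ splits into a piece over $[0,R]$, handled by~\eqref{eq:uniform convergence below infinity} as in the hard-core case, and a Gaussian-type tail whose contribution is subexponentially small in $\gamma^{-d}$.

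\textbf{Lower bound.} Fix $\rho^*$ in the relevant domain and $\epsilon>0$ small (with $\rho^*-\epsilon\ge 0$). Restricting each $\eta_v$ to $I_\epsilon := [\rho^*-\epsilon,\rho^*+\epsilon]$ in the integral defining $\Xi^{L,\per}_{\lambda,\gamma}$ gives $|\eta_v-\eta_w|\le 2\epsilon$ on every edge, so the gradient term contributes at most $2dJ_2\epsilon^2\gamma^{-d}|\Lambda_L|$. Thus
\begin{equation*}
\Xi^{L,\per}_{\lambda,\gamma} \ge e^{-\beta\gamma^{-d}|\Lambda_L|\big(\sup_{I_\epsilon}\phi_{\lambda,\gamma}+2dJ_2\epsilon^2\big)}\,\nu_\gamma(I_\epsilon)^{|\Lambda_L|}.
\end{equation*}
Since $\nu_\gamma(I_\epsilon)\ge 2\epsilon - \gamma^d$, dividing through by $\beta\gamma^{-d}|\Lambda_L|$ makes the $\nu_\gamma(I_\epsilon)^{|\Lambda_L|}$ factor contribute $\tfrac{1}{\beta\gamma^{-d}}\log\nu_\gamma(I_\epsilon)\to 0$. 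Uniform convergence of $f_\gamma$ on $I_\epsilon$ then yields $\liminf_{\gamma\downarrow 0}p_\gamma^L \ge -\sup_{I_\epsilon}\phi_\lambda - 2dJ_2\epsilon^2$; continuity of $\phi_\lambda$ at $\rho^*$ permits $\epsilon\downarrow 0$ to give $-\phi_\lambda(\rho^*)$, and optimizing over $\rho^*$ gives the matching lower bound.

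\textbf{Main obstacle.} The most delicate step is the upper bound near the boundary of the domain of $\phi_\lambda$: one must ensure that the contribution to $Z_\gamma$ from $\rho$ close to $\rho_\cp$ (hard-core) or $\rho\to\infty$ (soft-core) does not exceed the Laplace prediction $e^{-\beta\gamma^{-d}\inf_\rho\phi_\lambda}$ by more than a subexponential factor. This is precisely where the one-sided conditions~\eqref{eq:convergence at eta cp}, \eqref{eq:uniform convergence above eta cp}, and~\eqref{eq:growth at infinity} are needed, and where one must cope with the fact that $\phi_\lambda$ may attain its infimum only as $\rho$ approaches the boundary of its domain.
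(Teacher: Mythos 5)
Your proposal is correct and follows essentially the same route as the paper. The paper's upper bound also drops the nonnegative gradient term and bounds $\Xi^{L,\per}_{\lambda,\gamma}\le\omega_{\lambda,\gamma}([0,\infty))^{|\Lambda_L|}$, then applies a Laplace-type estimate (Lemma~\ref{lem:normalized-measure_measure}) which is proved exactly by the compactness-and-subsequence argument near $\rho_\cp$ plus the quadratic tail bound you describe; and the paper's lower bound (Lemma~\ref{lem:normalized-measure_normalization}) likewise restricts all sites to a small interval $[\rho_0,\rho_0+\xi]$, pays a gradient cost of order $\xi^2$, and sends $\xi\downarrow 0$. The only structural difference is that the paper packages these two halves as standalone lemmas because they are reused verbatim in the proof of Proposition~\ref{prop:normalized-measure}, whereas you prove the bounds inline; one small imprecision is calling the soft-core tail ``subexponentially small in $\gamma^{-d}$'' --- it is in fact exponentially small with a rate that can be made arbitrarily large by taking the cutoff $R$ large, which is what actually makes it negligible in the $\frac{1}{\gamma^{-d}}\log$ limit.
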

The same result (with the same proof) also holds for free boundary conditions, or for prescribed boundary conditions which are uniformly bounded as $L\to\infty, \gamma\downarrow 0$.

Lebowitz--Penrose~\cite{lebowitz1966rigorous} (canonical ensemble), followed by Gates--Penrose~\cite{gates1969vani} (grand-canonical ensemble), proved the existence of a liquid-vapor phase transition in the mean-field limit $\gamma\downarrow 0$ whenever the function $\phi_\lambda$ is non-convex. Our main result establishes the liquid-vapor phase transition at positive $\gamma$ (i.e., before taking the mean-field limit) in the box model.

\begin{theorem}\label{thm:main}
    Suppose the dimension $d\ge 2$. Suppose $\beta>0$ and $0<\alpha<\alpha_{\max}$ are such that $\phi_\lambda$ is non-convex (this property does not depend on $\lambda$). Then there exists $\gamma_0 > 0$ such that for all $0<\gamma\le\gamma_0$ there exists $\lambda(\gamma)$ for which the box model admits two distinct translation-invariant Gibbs measures that differ from each other in their value for the average density.
\end{theorem}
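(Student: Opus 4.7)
The plan is to apply the reflection positivity / chessboard-estimate strategy together with a Peierls-type contour argument, i.e., the Dobrushin--Shlosman route indicated in the abstract. Since non-convexity of $\phi_\lambda$ is a $\lambda$-independent property (shifting $\lambda$ only adds a linear term), the Maxwell/Gibbs double-tangent construction applied to $\rho\mapsto -\tfrac{1}{2}\alpha\rho^2+f(\rho)$ produces some $\lambda_*$ for which $\phi_{\lambda_*}$ has (at least) two global minimizers $\rho_v<\rho_l$. The two coexisting phases will concentrate near $\rho_v$ (vapor) and $\rho_l$ (liquid), and $\lambda(\gamma)$ will be chosen as a small perturbation of $\lambda_*$ to balance them at finite $\gamma>0$.

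The Hamiltonian $H_{\lambda,\gamma}^{L,\per}$ decomposes into on-site terms plus the nearest-neighbor term $\tfrac{1}{2}J_2|\eta_v-\eta_w|^2$, and is therefore reflection-positive with respect to reflections through lattice hyperplanes bisecting bonds of $\Lambda_L$ (for even $L$), a textbook RP setup. Fix small $\delta>0$ and declare a site $v$ to be \emph{vapor-good} if $|\eta_v-\rho_v|<\delta$, \emph{liquid-good} if $|\eta_v-\rho_l|<\delta$, and \emph{bad} otherwise. Reflection positivity yields the chessboard bound that the probability of any prescribed set of $n$ bad sites is at most $\mathfrak q_{\rm bad}^n$, where $\mathfrak q_{\rm bad}:=(\mathfrak z_{\rm bad}/\Xi^{L,\per}_{\lambda,\gamma})^{1/|\Lambda_L|}$ and $\mathfrak z_{\rm bad}$ is the periodic partition function with the bad event imposed at every site. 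Using the lower bound on $\Xi^{L,\per}_{\lambda,\gamma}$ given by the partition function $\mathfrak z_\ast$ restricted to configurations with $\eta_v$ in a small neighborhood of a minimizer at every site, matters reduce to the key single-site estimate $\mathfrak z_{\rm bad}/\mathfrak z_\ast \le e^{-c\gamma^{-d}|\Lambda_L|}$ for some $c=c(\delta)>0$ and all small $\gamma$. This should follow from the prefactor $\gamma^{-d}$ in $H$, the non-convexity of $\phi_\lambda$, and uniform convergence $f_\gamma\to f$ on compacts.

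The proof is then closed by the standard contour argument. Any configuration that contains both a vapor-good and a liquid-good site must contain a closed contour of bad sites separating them; by the chessboard bound and the lattice-animal estimate $\#\{\text{contours of size }n\}\le(Cd)^n$, the probability of such a contour is at most $\sum_{n} (Cd)^n e^{-c\gamma^{-d}n}$, which is arbitrarily small for small $\gamma$. It follows that in any infinite-volume translation-invariant Gibbs measure obtained as a subsequential limit of the periodic Gibbs measures, the mass splits between ``nearly uniformly vapor'' and ``nearly uniformly liquid'' configurations. A compactness plus monotonicity argument (the average density is non-decreasing in $\lambda$) then selects a single $\lambda(\gamma)$ at which both a vapor-concentrated and a liquid-concentrated translation-invariant Gibbs measure coexist and have distinct average densities.

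The main obstacle will be the single-site estimate $\mathfrak z_{\rm bad}/\mathfrak z_\ast \le e^{-c\gamma^{-d}|\Lambda_L|}$. In the hard-core case, $\eta_v\in[0,\rho_{\cp})$ is bounded and the estimate reduces to strict positivity of $\phi_\lambda(\rho)-\inf\phi_\lambda$ away from the minimizers, together with~\eqref{eq:convergence at eta cp} and~\eqref{eq:uniform convergence above eta cp} to handle $\rho\to\rho_{\cp}$ and $\rho>\rho_{\cp}$. The soft-core case is more delicate: one must invoke~\eqref{eq:growth at infinity} and the hypothesis $\alpha<\alpha_{\max}$ to ensure that $\phi_\lambda\to+\infty$ at infinity and to suppress anomalously large individual densities, since the gradient term alone cannot localize them. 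A secondary difficulty is the calibration of $\lambda(\gamma)$: because the entropic fluctuations around the two minima are generally asymmetric, $\lambda(\gamma)\neq\lambda_*$ for $\gamma>0$, and an intermediate-value argument on the average density (monotone in $\lambda$) is needed to locate the coexistence value.
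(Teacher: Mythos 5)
Your overall architecture — reflection positivity, chessboard estimates, a Peierls-type analysis, and then calibrating $\lambda$ to balance the two phases — matches the paper's route (which packages the calibration step into a Dobrushin--Shlosman criterion). But there is a genuine gap in the contour step. You claim that ``any configuration that contains both a vapor-good and a liquid-good site must contain a closed contour of bad sites separating them.'' That is false: a vapor-good site can be directly adjacent to a liquid-good site, with no bad sites anywhere. To run the Peierls argument you must separately suppress the event that an edge $\{v,w\}$ has $\eta_v\in G_-$ and $\eta_w\in G_+$. This is precisely the interface-edge estimate the paper addresses via the events $E_{v_1,v_2}^{-,+}$ in Proposition~\ref{prop:Dobrushin-Shlosman_interface} and the separation condition~\eqref{eqn:good regions_separation}, and it is not a minor technicality: it is non-trivial exactly because the nearest-neighbor coupling is $J_{\lambda,\gamma}=\tfrac12\beta J_2\gamma^{-d}\to\infty$, so the cost $e^{-J_{\lambda,\gamma}\dist(G_-,G_+)^2}$ beats the single-site weights as $\gamma\downarrow0$. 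Without this you cannot rule out a phase-coexistence-free scenario in which $G_-$ and $G_+$ regions freely interleave.

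A secondary concern is the calibration of $\lambda(\gamma)$. Invoking ``the average density is non-decreasing in $\lambda$'' plus an intermediate-value argument is fine in finite volume, but passing this to infinite volume — where the density may jump and the limiting measure may not simultaneously exhibit both phases — requires more care, especially in the soft-core case where the single-spin space is unbounded and tightness must be established. The paper handles this by proving the Dobrushin--Shlosman criterion (Theorem~\ref{thm:DS theorem}): one looks at the sets $T_\pm=\{\lambda:\P_\lambda(\eta_0\in G_\pm)\ge 1-\eps\}$ and either finds a $\lambda_c$ lying in neither (then ergodic decomposition yields two phases) or uses connectedness of $[\lambda_-,\lambda_+]$ and tightness of the torus-limit Gibbs measures (Proposition~\ref{prop:torus-limit-measures-tight}) to find a boundary point of $\overline{T}_-\cap\overline{T}_+$. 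This avoids having to control $\rho(\lambda)$ directly. Your single-site estimate $\mathfrak z_{\rm bad}/\mathfrak z_\ast\le e^{-c\gamma^{-d}|\Lambda_L|}$ is essentially correct and corresponds to the paper's Lemmas~\ref{lem:normalized-measure_measure}--\ref{lem:normalized-measure_normalization}, including the soft-core tail control via~\eqref{eq:growth at infinity} and $\alpha<\alpha_{\max}$ that you flag.
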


Moreover, as we formulate next, we show that the critical chemical potential and the densities of the liquid and vapor phases tend to their mean-field values as $\gamma\downarrow 0$.

In the hard-core case, the function $f$ is defined on the interval $[0,\rho_{\cp})$. We extend its domain to $[0,\rho_\cp]$ by setting
\begin{equation}\label{eq:f extension to rho cp}
    f(\rho_{\cp}):=\lim_{\rho\uparrow\rho_\cp} f(\rho),
\end{equation}
noting that $f(\rho_\cp)\in(-\infty,\infty]$ by~\eqref{eq:convergence at eta cp}. This also extends the domain of $\phi_\lambda$ to $[0,\rho_\cp]$, for all $\lambda$, via~\eqref{eq:mean-field free energy density}.

Suppose $\beta>0$ and $0<\alpha<\alpha_{\max}$ are such that $\phi_\lambda$ is non-convex. Let $\lambda_*\in\R$ be such that $\phi_{\lambda_*}$ attains its global minimum at (at least) two points $\rho_{*,-}<\rho_{*,+}$ with $\phi_{\lambda_\ast}$ non-constant on the interval $[\rho_{*,-},\rho_{*,+}]$; such a $\lambda_*$ necessarily exists by the non-convexity of $\phi_{\lambda}$ and \eqref{eq:convergence at eta cp}, in the hard-core case, or \eqref{eq:growth of f at infinity}, in the soft-core case.
Let
\begin{equation}
    \mathcal{M}:=\set{\rho\mid \phi_{\lambda_*}(\rho)=\inf_{\rho'}\phi_{\lambda_*}(\rho')}
\end{equation}
be the points where the global minimum is attained. Fix $\rho_{*,0}$ with $\rho_{*,-}<\rho_{*,0}<\rho_{*,+}$ and $\rho_{*,0}\notin\mathcal{M}$.

\begin{theorem}
\label{thm:detail}
    Suppose the dimension $d\ge 2$ and proceed in the above setup.
    Then there exists $\gamma_0 > 0$ such that, for all $0<\gamma\le\gamma_0$, there exists $\lambda_c(\gamma)$ for which the box model with $\lambda=\lambda_c(\gamma)$ admits two translation-invariant Gibbs measures $\P^\pm_{\lambda_c(\gamma), \gamma}$ and we have
    \begin{equation}\label{eq:convergence of critical chemical potential}
        \lim_{\gamma\downarrow0}\lambda_c(\gamma) = \lambda_*
    \end{equation}
    and, for every open set $U\subset\R$ containing $\mathcal{M}$,
    \begin{equation}\label{eq:density concentration}
        \lim_{\gamma\downarrow0}\P^-_{\lambda_c(\gamma), \gamma}(\eta_0\in U\cap[0,\rho_{*,0}))=\lim_{\gamma\downarrow0}\P^+_{\lambda_c(\gamma), \gamma}(\eta_0\in U\cap(\rho_{*,0},\infty))=1.
    \end{equation}
\end{theorem}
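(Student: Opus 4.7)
The plan is to extend the reflection-positivity/chessboard proof of Theorem~\ref{thm:main} to pin down the critical chemical potential $\lambda_c(\gamma)$ via a ``phase-balance'' condition. For each small $\gamma>0$ and $L\ge 2$, the function $\lambda\mapsto p^{L,\gamma}(\lambda):=\P^{L,\per}_{\lambda,\gamma}(\eta_0<\rho_{*,0})$ is continuous in $\lambda$ and tends to $1$ (resp.\ $0$) as $\lambda\to-\infty$ (resp.\ $+\infty$), so the intermediate value theorem produces $\lambda_c(L,\gamma)$ with $p^{L,\gamma}(\lambda_c(L,\gamma))=\tfrac{1}{2}$; a subsequential limit in $L$ defines $\lambda_c(\gamma)$. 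The two measures $\P^\pm_{\lambda_c(\gamma),\gamma}$ are then obtained as subsequential weak limits of the corresponding torus measures conditioned on the events $\{\eta_0<\rho_{*,0}\}$ and $\{\eta_0>\rho_{*,0}\}$ respectively, Ces\`aro-averaged over translates of the torus to enforce translation invariance.

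\textbf{Concentration on $\mathcal{M}$.} Fix an open $U\supset\mathcal{M}$ and set $B:=[0,\infty)\setminus U$. The chessboard estimate reduces $\P^{L,\per}_{\lambda,\gamma}(\eta_0\in B)$ to (the $|\Lambda_L|^{-1}$-power of) the ratio between the partition function constrained to have every $\eta_v\in B$ and the unconstrained one. Using \eqref{eq:uniform convergence below eta cp}--\eqref{eq:uniform convergence above eta cp} in the hard-core case, or \eqref{eq:uniform convergence below infinity}--\eqref{eq:growth at infinity} in the soft-core case, this ratio is dominated by $\exp(-\beta\gamma^{-d}(\Delta_U-o_\gamma(1)))$, where $\Delta_U:=\inf_{\rho\in B}\phi_{\lambda_*}(\rho)-\inf_\rho\phi_{\lambda_*}(\rho)>0$ by the definition of $\mathcal{M}$ and the $o_\gamma(1)$ is uniform in $\lambda$ near $\lambda_*$. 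Since $\gamma^{-d}\to\infty$, this yields $\P^{L,\per}_{\lambda_c(L,\gamma),\gamma}(\eta_0\notin U)\to 0$ uniformly in $L$ as $\gamma\downarrow 0$; the factor $1/p^{L,\gamma}(\lambda_c(L,\gamma))=2$ from the conditioning preserves this smallness, giving~\eqref{eq:density concentration}.

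\textbf{Convergence of $\lambda_c(\gamma)$.} If $\lambda_c(\gamma_k)\to\lambda'\neq\lambda_*$ along a subsequence $\gamma_k\downarrow 0$, then because $\phi_\lambda-\phi_{\lambda'}=(\lambda'-\lambda)\rho$ is strictly monotone in $\rho$ and $\mathcal{M}$ is the full set of minimizers of $\phi_{\lambda_*}$, the function $\phi_{\lambda'}$ attains a strictly lower infimum on one of the intervals $[0,\rho_{*,0})$ or $(\rho_{*,0},\infty)$ than on the other. Arguing as in Theorem~\ref{thm:comparison_with_GP} separately for the two half-line restrictions of the partition function, and combining with the chessboard concentration above, forces $p^{L,\gamma_k}(\lambda_c(\gamma_k))$ to tend to $0$ or $1$, contradicting $p^{L,\gamma_k}(\lambda_c(\gamma_k))=\tfrac{1}{2}$. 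Hence $\lambda_c(\gamma)\to\lambda_*$, which is~\eqref{eq:convergence of critical chemical potential}.

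\textbf{Main obstacle.} The subtlest step is the soft-core chessboard bound, since the bad set $B$ is unbounded: integrating $\exp(-\beta\gamma^{-d}(-\lambda\rho-\tfrac12\alpha\rho^2+f_\gamma(\rho)))$ against $\nu_\gamma$ over large $\rho$ uses the effective quadratic confinement $f_\gamma(\rho)-\tfrac12\alpha\rho^2\gtrsim\tfrac12(\alpha_{\max}-\alpha-o(1))\rho^2$ afforded by $\alpha<\alpha_{\max}$ and~\eqref{eq:growth at infinity}, and making this bound uniform in small $\gamma$ and in $\lambda$ near $\lambda_*$ requires splitting $\rho$ into a compact window, where~\eqref{eq:uniform convergence below infinity} gives full control, and a tail, where only the $\liminf$ in~\eqref{eq:growth at infinity} is available. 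A minor additional task is to verify $\P^+\neq\P^-$, which is immediate from the concentration estimate since $\{\eta_0<\rho_{*,0}\}$ and $\{\eta_0>\rho_{*,0}\}$ are disjoint and each receives almost full mass under the respective measure.
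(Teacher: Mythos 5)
Your proposal diverges fundamentally from the paper's route and, as written, contains a critical gap.

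The paper derives Theorem~\ref{thm:detail} by a short diagonal argument: Proposition~\ref{prop:phase-transition} shows that Assumption~\ref{as:good regions} holds for the family $(J_{\lambda,\gamma},\omega_{\lambda,\gamma})_{\lambda\in[\lambda_*-\kappa\delta,\lambda_*+\kappa\delta]}$ with parameters $\theta_i(\delta,\gamma)\to 0$ as $\gamma\downarrow 0$ for every fixed $\delta\in(0,\delta_0)$; feeding this into Theorem~\ref{thm:phase-transition} with sequences $\delta_n\downarrow 0$, $\epsilon_n\downarrow 0$ then forces $\lambda_c(\gamma)\in(\lambda_*-\kappa\delta_{n(\gamma)},\lambda_*+\kappa\delta_{n(\gamma)})$ and $\P^\pm_{\lambda_c(\gamma),\gamma}(\eta_0\in G_\pm(\delta_{n(\gamma)}))\ge 1-\epsilon_{n(\gamma)}$, from which \eqref{eq:convergence of critical chemical potential} and \eqref{eq:density concentration} follow immediately. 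The ``hard'' ingredients — the Dobrushin--Shlosman criterion and the chessboard-Peierls estimate — are all inside Theorem~\ref{thm:phase-transition}. You instead try to bypass this entirely with a phase-balance construction, conditioning the torus measure at the level set $\{p^{L,\gamma}=1/2\}$ and using only the single-site chessboard concentration estimate.

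The gap is precisely where you say a ``minor additional task'' is ``immediate.'' The single-site chessboard bound $\P^{L,\per}_{\lambda,\gamma}(\eta_0\notin U)\to 0$ is a statement about the \emph{marginal} of $\eta_0$; it is perfectly compatible with a scenario in which the torus measure is approximately a product measure whose single-site marginal splits its mass equally between neighborhoods of $\rho_{*,-}$ and $\rho_{*,+}$. In that scenario $\P^L(\eta_w<\rho_{*,0}\mid\eta_0<\rho_{*,0})\approx\P^L(\eta_w<\rho_{*,0})=1/2$ for $w\neq 0$, so after you Ces\`aro-average over translates (which you must do, since the unaveraged conditional measure violates the DLR condition at the conditioning site), the limit satisfies $\bar\P^-(\eta_0<\rho_{*,0})\approx 1/2$ rather than $\approx 1$, and likewise for $\bar\P^+$; the two limits can coincide. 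Ruling this out is exactly the content of the two-site estimate \eqref{eta_+-}, verified in the paper via the chessboard-Peierls argument in Proposition~\ref{prop:Dobrushin-Shlosman_interface}, together with the ergodic-decomposition step in Lemma~\ref{claim:DS-ergodic} and the proof of Theorem~\ref{thm:DS theorem}. Your proposal never invokes any two-site estimate, so it does not establish $\P^-\neq\P^+$. (By contrast, the item you flag as ``the subtlest step'' — the unbounded soft-core tail — is handled routinely in the paper by Lemma~\ref{lem:model is well defined} and Claim~\ref{clm:tail bound} and is not the real difficulty.) Additionally, even granting $\lambda_c(L,\gamma)$ from the intermediate value theorem, passing the balance condition $p=1/2$ through a subsequential $L$-limit and then arguing $\lambda_c(\gamma)\to\lambda_*$ requires a uniform-in-$L$ quantitative chessboard bound that is asserted but not established; the paper instead obtains \eqref{eq:convergence of critical chemical potential} for free from the shrinking window in Proposition~\ref{prop:phase-transition}.
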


\begin{remark}\label{remark:remark after main theorem}
  In the (generic) case when $\mathcal{M}$ consists solely of the two points $\rho_{*,\pm}$, it follows from~\eqref{eq:density concentration} that the distribution of $\eta_0$ under $\P^\pm_{\lambda_c(\gamma), \gamma}$ converges in distribution to a delta measure at $\rho_{*,\pm}$.
\end{remark}

\subsection{The $(J,\omega)$-spin model and a condition for first-order phase transition}\label{J omega model section}

The main idea of the proof of Theorem \ref{thm:main} is to use the reflection positivity of the box model to derive a chessboard estimate, which allows us to use a result by Dobrushin and Shlosman \cite{shlosman1986method} guaranteeing the existence of the phase transition.
For the sake of completeness, we state and prove a version of the Dobrushin--Shlosman criterion that is suited to our result in Section \ref{sec:dobrushin_shlosman}.

Our analysis applies to a wider class of models. In this section, we introduce this class and state a condition that ensures a first-order phase transition therein.

\subsubsection{The $(J,\omega)$-spin model}
Given $J\ge0$ and a Borel measure $\omega$ on $\R^n$ with finite, positive total measure (i.e., $\omega(\R^n)\in(0,\infty)$), we give the name $(J,\omega)$-spin model to the $n$-component Gaussian free field with coupling constant $J$ and single-site measure $\omega$. This is the model on configurations $\eta:\Z^d\to\R^n$ whose formal Hamiltonian is
\begin{equation}\label{eq:(J,omega)-spin model Hamiltonian}
H(\eta):=J\sum_{v\sim w}\norm{\eta_{v}-\eta_{w}}^{2}.
\end{equation}

To apply the chessboard estimate, we consider the model on the discrete torus $\Lambda_L=\mathbb Z^d/(L \mathbb Z)^d$: Its finite-volume Hamiltonian is
\begin{equation}\label{Jw_ham}
H^L_J(\eta):=J\sum_{\{v,w\}\in E(\Lambda_L)}\norm{\eta_{v}-\eta_{w}}^{2}
\end{equation}
on configurations $\eta:\Lambda_L\to\R^n$, where $E(\Lambda_L)$ denotes the usual edge set of $\Lambda_L$, thought of as a graph, and $\|\cdot\|$ is the Euclidean norm (when convenient, we also regard such configurations as periodic functions on the entire $\Z^d$).
The corresponding finite-volume Gibbs measure is
\begin{equation}
  \P^{L,\per}_{J,\omega}(\dd{\eta})=\frac1{\Xi^{L,\per}_{J,\omega}} e^{-H^L_J(\eta)}\prod_{v\in\Lambda_L} \omega(\dd{\eta}_v)
  \label{prob_Jw}
\end{equation}
where $\Xi^{L,\per}_{J,\omega_\lambda}$ is the partition function:
\begin{equation}
  \Xi^{L,\per}_{J,\omega}:=\int e^{-H^L_J(\eta)}\prod_{v\in\Lambda_L}\omega(\dd{\eta}_v)
  .
  \label{partition_function}
\end{equation}

We will need (a bound on) the grand-canonical pressure
\begin{equation}
\label{eqn:spin-model_free-energy}
    \psi_{J,\omega}:=\liminf_{L\to\infty}\frac{1}{L^d}\log\Xi_{J,\omega}^{L,\per}.
\end{equation}
For later applications, we note the simple inequality
\begin{equation}
  \label{eqn:f-upper-bound}
  e^{\psi_{J,\omega}} \ge \sup_{S} e^{-dJ\diam(S)^{2}}\omega(S)
\end{equation}
where the supremum is over all measurable $S\subset\R^n$ and $\diam(S):=\sup_{x,y\in S}\|x-y\|$. Indeed, for each such $S$ we have
\begin{equation}
\label{eqn:partition-function-lower-bound}
  \Xi_{J,\omega}^{L,\per}
  \ge \int_{S^{\Lambda_L}} e^{-H^L_J(\eta)}\prod_v \omega(\dd{\eta}_v) \ge e^{-J\diam(S)^2|E(\Lambda_L)|}\omega(S)^{|\Lambda_L|}.
\end{equation}

\subsubsection{A continuous family of $(J,\omega)$-spin models}
\label{sec:continuous family of spin models}

We consider a \emph{continuous family of $(J,\omega)$-spin models}, indexed by $\lambda$ in an interval $[\lambda_-,\lambda_+]$.
Precisely, we consider a continuous function $\lambda\mapsto J_\lambda\ge 0$, and a function $\lambda\mapsto \omega_\lambda$ from $[\lambda_-,\lambda_+]$ to the set of Borel measures on $\R^n$ with finite, positive total measure that is continuous in distribution, in the sense that, for any converging sequence $\lambda_k\in[\lambda_-,\lambda_+]$ with $\lambda_k\to \lambda$ and any bounded continuous $g:\mathbb R^n\to \mathbb R$,
  \begin{equation}
  \label{eqn:good regions_continuity}
    \lim_{k\to\infty}\frac{1}{\omega_{\lambda_k}(\R^n)}\int g \dd{\omega}_{\lambda_k}=\frac{1}{\omega_{\lambda}(\R^n)}\int g \dd{\omega}_\lambda
    .
  \end{equation}

The box model defined in (\ref{eq:Hamiltonian_box}) corresponds to choosing $n=1$, letting
\begin{equation}
  J_{\lambda,\gamma}=\frac12\beta J_2\gamma^{-d}
\end{equation}
and letting 
\begin{equation}
  \omega_{\lambda,\gamma}(\dd{\rho})=e^{\beta\gamma^{-d}(-\lambda\rho-\frac12\alpha\rho^2+f_{\gamma}(\rho))}\nu_\gamma(\dd\rho),
\end{equation}
where $\nu_\gamma$ was defined after~\eqref{eq:box model finite-volume Gibbs measure periodic}.
These choices define a continuous family of $(J,\omega)$-spin models; see Section \ref{sec:deduction of main theorem}.

\subsubsection{First-order phase transition}

We now introduce further conditions that we will use for showing the existence of a first-order phase transition.
In these conditions, it is convenient to normalize the single-site measures using the pressures of~\eqref{eqn:spin-model_free-energy}: define, for each $\lambda\in[\lambda_-,\lambda_+]$,
\begin{equation}
  \tilde \omega_\lambda:=e^{-\psi_{J_\lambda,\omega_\lambda}}\omega_\lambda
  .
  \label{normalized_omega}
\end{equation}

\begin{assumption}\label{as:good regions}
    There exist closed sets $G_-,G_+\subset\R^n$ and $\theta_1,\theta_2,\theta_3\ge 0$ such that
  \begin{enumerate}
    \item The complement of $G_-\cup G_+$ has small measure: for all $\lambda\in[\lambda_-,\lambda_+]$,
    \begin{equation}
    \label{eqn:good regions_double-well}
	\tilde\omega_{\lambda}((G_-\cup G_+)^{c})\le\theta_1
	.
    \end{equation}
    \item $G_-$ and $G_+$ are separated: for all $\lambda\in[\lambda_-,\lambda_+]$,
    \begin{equation}
    \label{eqn:good regions_separation}
	e^{-\frac{1}{2}J_\lambda\dist(G_-, G_+)^2}(\tilde\omega_{\lambda}(G_-)\tilde\omega_{\lambda}(G_+))^{1/2} \le\theta_2
    \end{equation}
    with $\mathrm{dist}(G_-,G_+):=\inf_{x_-\in G_-,x_+\in G_+}\norm{x_--x_+}$.

    \item The region $G_\#$ is dominant at $\lambda_\#$:
    \begin{equation}
    \label{eqn:good regions_endpoints}
	\tilde\omega_{\lambda_\#}(G_\#^{c})\le\theta_3,
    \quad\#\in\set{-,+}.
    \end{equation}
  \end{enumerate}
\end{assumption}

\begin{theorem}
\label{thm:phase-transition}
  Suppose the dimension $d\ge 2$.
  For each $\epsilon\in(0,1/2)$ there exists $c(\eps, d)>0$ depending only on $\eps$ and the dimension $d$ such that the following holds.
  If Assumption~\ref{as:good regions} holds with $\max\{\theta_1,\theta_2,\theta_3\}\le c(\eps, d)$, then there exists $\lambda_{\mathrm c}\in(\lambda_-,\lambda_+)$ such that the $(J_{\lambda_c},\omega_{\lambda_c})$-spin model has two distinct translation-invariant Gibbs measures $\P_{-},\P_{+}$.
  Moreover,
  \begin{equation}\label{eq:P plus minus properties}
  \P_{\pm}(\eta_{0}\in G_{\pm})\ge 1-\epsilon
  .
  \end{equation}
\end{theorem}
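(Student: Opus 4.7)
The plan is to exploit reflection positivity of the $(J,\omega)$-spin Hamiltonian, apply the resulting chessboard estimate to convert Assumption~\ref{as:good regions} into a Peierls-style suppression of phase boundaries, and then use continuity in $\lambda$ to locate a critical $\lambda_c$ at which two distinct phases coexist. The Hamiltonian~\eqref{Jw_ham} is reflection positive on the even torus $\Lambda_L$ (with respect to both site and edge reflections) because $\|\eta_v-\eta_w\|^{2}$ is a sum of quadratic nearest-neighbor terms of the form $\|\eta_v\|^{2}+\|\eta_w\|^{2}-2\langle\eta_v,\eta_w\rangle$, which is the canonical shape required for reflection positivity.

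The first step is to classify each site $v$ by the label $-$, $+$, or $b$ according to whether $\eta_v\in G_-$, $\eta_v\in G_+$, or $\eta_v\in(G_-\cup G_+)^{c}$. Via the chessboard estimate (in the Dobrushin--Shlosman formulation recalled in Section~\ref{sec:dobrushin_shlosman}), the probability of any prescribed label pattern is controlled by its chessboard ``propagation'' across the torus. Combined with the partition-function lower bound~\eqref{eqn:partition-function-lower-bound} and the normalization $\tilde\omega_\lambda=e^{-\psi_{J_\lambda,\omega_\lambda}}\omega_\lambda$, the three parts of Assumption~\ref{as:good regions} translate into: the probability that every site is $b$ is at most $\theta_{1}^{|\Lambda_L|}$; the probability of the fully alternating $\pm$-checkerboard configuration is at most a power of $\theta_{2}^{|\Lambda_L|}$ coming from~\eqref{eqn:good regions_separation}; and for $\lambda=\lambda_\pm$, the probability that every site carries the ``wrong'' label is at most $\theta_{3}^{|\Lambda_L|}$.

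The second step, which is the main technical heart of the argument, is a Peierls-style contour estimate. Two sites with opposite $\pm$ labels must be separated by a closed contour of ``bad'' edges, where an edge is bad if its two endpoints carry different labels in $\{-,+,b\}$. Pushing the chessboard bound through these contour events, each bad edge in a contour $\Gamma$ contributes a factor bounded by a positive power of $\max\{\theta_{1},\theta_{2}\}$; the standard Peierls entropy bound in dimension $d\ge 2$, where the number of contours of length $|\Gamma|$ around the origin grows only exponentially in $|\Gamma|$, then shows that when $\max\{\theta_{1},\theta_{2}\}\le c(\epsilon,d)$ the probability of any contour surrounding the origin is at most $\epsilon$. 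The hypothesis $d\ge 2$ is indispensable here because one-dimensional contours need not separate the torus.

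The third step combines the Peierls estimate with an intermediate-value / compactness argument. Set $p^{L}(\lambda):=\P^{L,\per}_{J_\lambda,\omega_\lambda}(\eta_{0}\in G_{+})$, which is continuous in $\lambda$ thanks to the continuity assumption~\eqref{eqn:good regions_continuity} on $\omega_\lambda$ and the continuity of $\lambda\mapsto J_\lambda$. Endpoint condition~\eqref{eqn:good regions_endpoints} together with the Peierls bound yields $p^{L}(\lambda_{-})\le\epsilon$ and $p^{L}(\lambda_{+})\ge 1-\epsilon$ once $\theta_{3}\le c(\epsilon,d)$, so there exists $\lambda_{c}(L)\in(\lambda_{-},\lambda_{+})$ with $p^{L}(\lambda_{c}(L))=\tfrac{1}{2}$. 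Passing to a subsequence along which $\lambda_{c}(L)\to\lambda_{c}$ and the finite-volume Gibbs measures converge weakly to an infinite-volume Gibbs measure $\mu$ at $\lambda=\lambda_{c}$ (tightness is immediate in the hard-core case and follows from the quadratic growth~\eqref{eq:growth of f at infinity} in the soft-core case), the Peierls estimate survives the limit. It forces the tail $\sigma$-algebra of $\mu$ to distinguish a ``sea of $+$'' event from a ``sea of $-$'' event, each of positive probability; conditioning on these two events yields the two translation-invariant Gibbs measures $\P_{\pm}$ satisfying~\eqref{eq:P plus minus properties}. The main obstacle is arranging the chessboard-to-contour estimates uniformly in $L$ and $\lambda$ so that the Peierls bound and the Gibbs property both transfer cleanly through the conditioning and the thermodynamic limit; these technicalities are handled within the Dobrushin--Shlosman framework developed in Section~\ref{sec:dobrushin_shlosman}.
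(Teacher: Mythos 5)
Your first two steps (reflection positivity, chessboard estimate, translating Assumption~\ref{as:good regions} into one-point bounds and a Peierls-type contour bound on interfaces) coincide in substance with the paper's Propositions~\ref{prop:Dobrushin-Shlosman_double-well}--\ref{prop:Dobrushin-Shlosman_interface}. The genuine gap is in your third step. The intermediate-value argument hinges on the continuity of $\lambda\mapsto p^{L}(\lambda)=\P^{L,\per}_{J_\lambda,\omega_\lambda}(\eta_0\in G_+)$, but at the level of generality of Theorem~\ref{thm:phase-transition} the only regularity available is continuity in distribution~\eqref{eqn:good regions_continuity}, i.e.\ convergence of integrals of \emph{bounded continuous} test functions against $\bar\omega_\lambda$. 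Since $G_+$ is merely closed, $\indicator{G_+}$ is not such a test function, and weak continuity yields only semicontinuity of probabilities of closed sets; for instance, $\omega_\lambda$ may carry an atom that crosses $\partial G_+$ as $\lambda$ varies, making $p^{L}$ jump, so no $\lambda_c(L)$ with $p^{L}(\lambda_c(L))=\tfrac12$ need exist. (Continuity does hold for the box-model measures $\omega_{\lambda,\gamma}$, whose densities vary continuously in $\lambda$, but Theorem~\ref{thm:phase-transition} is stated for arbitrary continuous-in-distribution families, and it is this general form that the paper later applies.) The paper's route through the Dobrushin--Shlosman criterion (Theorem~\ref{thm:DS theorem}) is built precisely to avoid any such continuity: its proof splits into the case where at some $\lambda$ neither $\P_\lambda(\eta_0\in G_-)$ nor $\P_\lambda(\eta_0\in G_+)$ reaches $1-\eps$, in which case the one- and two-point estimates force non-ergodicity and the ergodic decomposition supplies the two phases, and the complementary case, where a common point $\lambda_c\in\overline{T}_-\cap\overline{T}_+$ is taken and the two measures are obtained as weak limits using tightness, Proposition~\ref{prop:convergence-of-Gibbs-measures}, and the Portmanteau inequality for the closed sets $G_\pm$; nowhere is continuity of $\lambda\mapsto\P(\eta_0\in G_\pm)$ invoked.

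Two secondary points. First, your tightness justification cites the hard-core/soft-core dichotomy and the quadratic growth~\eqref{eq:growth of f at infinity}, which belong to the box-model application rather than to the general $(J,\omega)$-spin model of Theorem~\ref{thm:phase-transition}; in this generality tightness must be extracted from the chessboard estimate together with tightness of the normalized single-site measures, as in Lemma~\ref{lem:torus-measures-tight-family} and Proposition~\ref{prop:torus-limit-measures-tight}. Second, the extraction of two translation-invariant Gibbs measures by ``conditioning on sea-of-$\pm$ tail events'' is only sketched: what is actually needed is to feed the uniform two-point bound $\P(\eta_v\in G_-,\eta_w\in G_+)\le\delta_2$ (the output of your Peierls step) into an ergodic-theorem argument as in Lemma~\ref{claim:DS-ergodic} and Claim~\ref{claim:phase-transition_nonzero-prob}, which is what guarantees that both events have positive probability and that the resulting components are translation-invariant Gibbs measures satisfying~\eqref{eq:P plus minus properties}. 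These two items are repairable, but the continuity assumption underlying the intermediate-value step is a real obstruction to your route as written.
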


\begin{remark}
    An explicit expression for the constant $c(\eps,d)$ can be obtained by examining the proofs in Section \ref{sec:Dobrushin-Shlosman_estimates}.
\end{remark}

\section{The Dobrushin--Shlosman criterion}
\label{sec:dobrushin_shlosman}

The main tool in the proof of Theorem~\ref{thm:phase-transition} is the Dobrushin--Shlosman criterion \cite{dobrushin1981phases},~\cite[Section 4]{shlosman1986method} for the existence of a first-order phase transition.
We present below a version of the criterion, adapted to our setting with the $(J,\omega)$-spin model.

\begin{theorem}[Dobrushin--Shlosman criterion]\label{thm:DS theorem}
Let $(J_\lambda,\omega_\lambda)_{\lambda\in[\lambda_-,\lambda_+]}$ be a continuous family of $(J,\omega)$-spin models.
For each $\lambda\in[\lambda_-,\lambda_+]$, let $\P_\lambda$ be a translation-invariant Gibbs measure of the $(J_\lambda,\omega_\lambda)$-spin model, such that the family of probability measures $(\P_\lambda)_{\lambda\in[\lambda_-,\lambda_+]}$ is tight.
Let $0<\eps<\frac{1}{2}$. 
Suppose that there exist disjoint closed subsets $G_-,G_+\subset \mathbb \R^n$ and $\delta_1,\delta_2>0$ such that
\begin{enumerate}
 \item $\delta_1+\delta_2 \leqslant 1-\frac \epsilon 2-\sqrt{1-\epsilon}$; \label{itm:DS theorem_constants}
 \item \label{itm:all lambda} for all $\lambda\in[\lambda_-,\lambda_+]$ and $v,w\in \mathbb Z^d$,
\begin{align}
&\mathbb P_\lambda(\eta_0\notin G_-\cup G_+)\le \delta_1,\label{eta_out}\\
&\mathbb P_\lambda(\eta_v\in G_-,  \eta_w\in G_+)\le \delta_2;\label{eta_+-}
\end{align}
\item $\P_{\lambda_\#}(\eta_0\in G_\#)\ge 1-\eps$ for $\#\in\{-,+\}$.
\label{itm:DS theorem_endpoints}
\end{enumerate}
Then, there exists $\lambda_c\in[\lambda_-,\lambda_+]$ such that the $(J_{\lambda_c},\omega_{\lambda_c})$-spin model admits two distinct translation-invariant Gibbs measures $\P_{\lambda_c,-}$ and $\P_{\lambda_c,+}$.
Moreover,
\begin{equation}
\P_{\lambda_c,\#}(\eta_0\in G_\#)\ge 1-\eps
,\quad
\#\in\set{-,+}.
\end{equation}
\end{theorem}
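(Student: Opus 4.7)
The plan is to combine the ergodic decomposition of translation-invariant Gibbs measures with a compactness/intermediate-value argument along $\lambda\in[\lambda_-,\lambda_+]$, exploiting tightness and the continuity of the specification in $\lambda$.

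The first step is a ``gap'' estimate on ergodic components. For each $\lambda$, write $\P_\lambda=\int\nu\,d\varrho_\lambda(\nu)$ via the ergodic decomposition over translation-invariant ergodic measures, and take Ces\`aro averages over $v\in\Z^d$ of the uniform bound~\eqref{eta_+-}. Applying the multidimensional mean ergodic theorem inside the expectation against $\varrho_\lambda$ yields
\begin{equation*}
\int\nu(G_+)\,\nu(G_-)\,d\varrho_\lambda(\nu)\le\delta_2.
\end{equation*}
Combined with the averaged bound $\int\nu((G_+\cup G_-)^c)\,d\varrho_\lambda\le\delta_1$ coming from~\eqref{eta_out} and with Markov's inequality, the bulk of $\varrho_\lambda$ is supported on two disjoint classes: $\mathcal M^+:=\set{\nu:\nu(G_-)\le\delta_2^{1/4}}$ (on which $\nu(G_+)$ is close to $1$) and $\mathcal M^-:=\set{\nu:\nu(G_+)\le\delta_2^{1/4}}$ (on which $\nu(G_-)$ is close to $1$).

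Next, define $q(\lambda):=\varrho_\lambda(\mathcal M^-)$. Hypothesis~\ref{itm:DS theorem_endpoints}, together with the gap, forces $q(\lambda_-)$ to be close to $1$ and $q(\lambda_+)$ close to $0$. Set $\lambda_c:=\inf\set{\lambda:q(\lambda)\le\tfrac{1}{2}}$ and, using tightness, extract weakly convergent subsequences $\P_{\lambda_k}\rightharpoonup\P^-$ along $\lambda_k\uparrow\lambda_c$ (with $q(\lambda_k)>\tfrac{1}{2}$) and $\P_{\lambda_k'}\rightharpoonup\P^+$ along $\lambda_k'\downarrow\lambda_c$ (with $q(\lambda_k')\le\tfrac{1}{2}$). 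The continuity of $\lambda\mapsto(J_\lambda,\omega_\lambda)$, in particular~\eqref{eqn:good regions_continuity}, ensures that $\P^\pm$ are translation-invariant Gibbs measures at $\lambda_c$. Upper semi-continuity of $\mu\mapsto\mu(F)$ for closed $F$ (Portmanteau) propagates the mass on $\mathcal M^\pm$ to the ergodic decompositions of $\P^\pm$; restricting these to their $\mathcal M^\pm$-parts (and normalizing) produces two distinct translation-invariant Gibbs measures $\P_{\lambda_c,\pm}$. The bound $\P_{\lambda_c,\pm}(\eta_0\in G_\pm)\ge 1-\eps$ follows from the gap on $\mathcal M^\pm$ combined with the precise form $\delta_1+\delta_2\le 1-\eps/2-\sqrt{1-\eps}$ of assumption~\ref{itm:DS theorem_constants}, which is exactly the threshold needed to close the chain of inequalities after absorbing the $\sqrt{\delta_2}$, $\delta_2^{1/4}$ and $\delta_1$ error terms.

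The main obstacle is this passage to the weak limit: the ergodic decomposition is not weakly continuous, so translating $q(\lambda_k)>\tfrac{1}{2}$ into a statement about the ergodic decomposition of $\P^-$ requires a careful Portmanteau argument built around the closedness of $G_+$ and $G_-$. Condition~\eqref{eta_+-} itself involves the closed event $\set{\eta_v\in G_-,\eta_w\in G_+}$ and is only upper semi-continuous in $\mu$, so we cannot push it directly into the limit; instead we apply it at the level of the approximating sequences via the averaged gap estimate on ergodic components, which \emph{does} survive the limit. Tracking constants precisely enough to match the explicit threshold in assumption~\ref{itm:DS theorem_constants} is where the bookkeeping is most delicate.
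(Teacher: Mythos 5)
Your first step---deriving $\int\nu(G_-)\nu(G_+)\,d\varrho_\lambda\le\delta_2$ from Ces\`aro averaging and the ergodic theorem, and concluding that the bulk of ergodic components are polarized---is essentially the paper's Lemma~\ref{claim:DS-ergodic} and is sound. The genuine gap lies in the intermediate-value step built around $q(\lambda):=\varrho_\lambda(\mathcal M^-)$. The sets $\mathcal M^\pm$ are defined through the limiting ergodic densities $\Pi_\pm$, which are tail functionals, so $\{\Pi_+\le\delta_2^{1/4}\}$ is neither open nor closed in the product topology; Portmanteau gives you no control over it, and the ergodic-decomposition map $\P\mapsto\varrho$ is not weakly continuous. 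Hence $q(\lambda_k)>\tfrac12$ does not transfer to any statement about the ergodic decomposition of the subsequential limit $\P^-$. Moreover, the hypotheses~\eqref{eta_out},~\eqref{eta_+-} hold for the \emph{given} measures $\P_\lambda$, not for the weak limits $\P^\pm$; and since~\eqref{eta_+-} involves a \emph{closed} event, Portmanteau gives $\P^-(\eta_v\in G_-,\eta_w\in G_+)\ge\limsup_k\P_{\lambda_k}(\cdot)$, which is the wrong direction, so you cannot carry the gap estimate over to $\P^\pm$. Finally, even granting those transfers, the threshold $\tfrac12$ is too weak: $q(\lambda_k)>\tfrac12$ only forces $\P_{\lambda_k}(\eta_0\in G_-)$ to be roughly $\ge\tfrac12$, not $\ge 1-\eps$, because $\varrho_{\lambda_k}$ may split nearly evenly between $\mathcal M^-$ and $\mathcal M^+$.

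The paper avoids all of this by tracking the scalar $\P_\lambda(\eta_0\in G_\#)$ rather than the decomposition mass, and by splitting into two cases on $T_\#:=\{\lambda:\P_\lambda(\eta_0\in G_\#)\ge 1-\eps\}$. If $T_-\cup T_+\ne[\lambda_-,\lambda_+]$, pick $\lambda_c$ in the hole and apply the gap estimate (Lemma~\ref{claim:DS-ergodic}) \emph{directly} to the hypothesized $\P_{\lambda_c}$---no passage to a limit is needed---to show $\P_{\lambda_c}$ is not ergodic and extract the two desired ergodic components. If $T_-\cup T_+=[\lambda_-,\lambda_+]$, pick $\lambda_c\in\overline{T_-}\cap\overline{T_+}$; here only the membership $\P_{\lambda_{n,\#}}(\eta_0\in G_\#)\ge 1-\eps$ needs to survive the limit, and since $G_\#$ is closed, $\P_{\lambda_c,\#}(\eta_0\in G_\#)\ge\limsup_n\P_{\lambda_{n,\#}}(\eta_0\in G_\#)\ge 1-\eps$ is exactly the Portmanteau inequality that \emph{does} go the right way, and the two limits are automatically distinct because $2(1-\eps)>1$. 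To repair your proposal, replace $q(\lambda)$ with $\P_\lambda(\eta_0\in G_-)$ and adopt this case split.
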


We give a proof of the theorem for completeness, starting with the following lemma.

\begin{lemma}
\label{claim:DS-ergodic} 
Define $\Delta_L:=\set{-L,-L+1,\dots,L-1,L}^d$ for each $L\ge 2$.
Let $\lambda\in[\lambda_-,\lambda_+]$. Let $\eta$ be sampled from $\P_\lambda$. 
Define
    \begin{equation}\label{eq:Pi def}
        \Pi_\#(\eta):=\lim_{L\rightarrow\infty}\frac{1}{\abs{\Delta_L}}\sum_{v\in\Delta_L}\indicator{G_\#}(\eta_v),\quad \#\in\{-,+\},
    \end{equation}
    which exist almost surely by the Birkhoff-Khinchin ergodic theorem.
    Under the assumptions of Theorem \ref{thm:DS theorem}, for all $\delta_3>0$,
    \begin{equation}\label{eq:large limiting density}
        \P_\lambda\left(\max\set{\Pi_-,\Pi_+}\ge 1-\delta_3\right)\ge 1-\frac{2(\delta_1+\delta_2)}{\delta_3}.
    \end{equation}
\end{lemma}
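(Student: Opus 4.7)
The plan is to convert the pointwise hypotheses~\eqref{eta_out}--\eqref{eta_+-} into two moment bounds on the empirical densities, and then extract the tail estimate on $\max(\Pi_-,\Pi_+)$ from a suitable pointwise inequality. First I would introduce the complementary density
\begin{equation*}
    \Pi_0(\eta):=\lim_{L\to\infty}\frac{1}{|\Delta_L|}\sum_{v\in\Delta_L}\indicator{(G_-\cup G_+)^c}(\eta_v),
\end{equation*}
which exists almost surely by the multidimensional Birkhoff ergodic theorem applied to the translation-invariant measure $\P_\lambda$. Because $G_-,G_+$ are disjoint (a hypothesis of Theorem~\ref{thm:DS theorem}), the identity $\Pi_-+\Pi_++\Pi_0=1$ holds almost surely.

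Two moment bounds follow immediately. By translation invariance and~\eqref{eta_out}, $\E[\Pi_0]=\P_\lambda(\eta_0\notin G_-\cup G_+)\le\delta_1$. By translation invariance, dominated convergence (the pre-limit averages lie in $[0,1]$), and~\eqref{eta_+-},
\begin{equation*}
    \E[\Pi_-\Pi_+]=\lim_{L\to\infty}\frac{1}{|\Delta_L|^2}\sum_{v,w\in\Delta_L}\P_\lambda(\eta_v\in G_-,\eta_w\in G_+)\le\delta_2.
\end{equation*}

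The heart of the argument is the pointwise inequality
\begin{equation*}
    \indicator{A}\le\frac{1}{\delta_3}\left(\frac{\Pi_-\Pi_+}{1-\delta_3}+\Pi_0\right)\qquad\text{on }A:=\{\max(\Pi_-,\Pi_+)<1-\delta_3\},
\end{equation*}
valid when $\delta_3\le 1/2$. If $\Pi_0\ge\delta_3$ the inequality is trivial. If $\Pi_0<\delta_3$, then $\Pi_-+\Pi_+=1-\Pi_0>1-\delta_3$ combined with $\max(\Pi_-,\Pi_+)<1-\delta_3$ forces $\min(\Pi_-,\Pi_+)>\delta_3-\Pi_0$. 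Since $\delta_3\le 1/2$ gives $\delta_3-\Pi_0\le(1-\Pi_0)/2$, the value of $\min(\Pi_-,\Pi_+)$ lies in the interval on which the downward parabola $m\mapsto m((1-\Pi_0)-m)$ is increasing, and evaluating at its left endpoint yields $\Pi_-\Pi_+\ge(\delta_3-\Pi_0)(1-\delta_3)$; dividing by $1-\delta_3$ and adding $\Pi_0$ returns exactly $\delta_3$. Taking expectations and using the two moment bounds,
\begin{equation*}
    \P_\lambda(A)\le\frac{1}{\delta_3}\left(\frac{\delta_2}{1-\delta_3}+\delta_1\right)\le\frac{2\delta_2+\delta_1}{\delta_3}\le\frac{2(\delta_1+\delta_2)}{\delta_3},
\end{equation*}
which is the claim for $\delta_3\le 1/2$. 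The range $\delta_3>1/2$ is either vacuous (the bound $2(\delta_1+\delta_2)/\delta_3$ exceeds $1$ whenever $\delta_1+\delta_2\ge\delta_3/2$) or treated by an analogous refinement using the automatic lower bound $\Pi_0>2\delta_3-1$ on $A$.

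The main obstacle is identifying the correct pointwise inequality producing linear-in-$\delta_2$ dependence. A naive application of $\min(\Pi_-,\Pi_+)\le\sqrt{\Pi_-\Pi_+}$ would only deliver $\P_\lambda(A)\lesssim(\delta_1+\sqrt{\delta_2})/\delta_3$. Exploiting the upper bound $\max(\Pi_-,\Pi_+)<1-\delta_3$ via the monotonicity of the parabola $m\mapsto m((1-\Pi_0)-m)$ is exactly what trades the square root for the factor $(1-\delta_3)^{-1}$, which, when $\delta_3\le 1/2$, is at most $2$.
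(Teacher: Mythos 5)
Your argument is essentially correct for $\delta_3\le 1/2$, which is the only range actually needed when the lemma is invoked in the proof of Theorem~\ref{thm:DS theorem} (there $\delta_3\le\epsilon<1/2$), and it reaches the same linear-in-$(\delta_1,\delta_2)$ dependence. The overall shape matches the paper's: control the density of sites outside $G_-\cup G_+$ via~\eqref{eta_out}, control a $G_-$--$G_+$ cross-moment via~\eqref{eta_+-}, then close with a pointwise algebraic inequality and Markov. Where you diverge is the algebraic step. The paper works at finite $L$, using $\max\{\Pi_{-,L},\Pi_{+,L}\}\ge\max\{\Pi_{-,L},\Pi_{+,L}\}(\Pi_{-,L}+\Pi_{+,L})\ge\Pi_{-,L}^2+\Pi_{+,L}^2=1-\Psi_L$ (valid since $\Pi_{-,L}+\Pi_{+,L}\le 1$), so that $\{\max<1-\delta_3\}\subset\{\Psi_L\ge\delta_3\}$ and Markov on $\E[\Psi_L]\le 2(\delta_1+\delta_2)$ gives the claim for \emph{every} $\delta_3>0$ in a single stroke. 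You pass to the limit first and prove $\indicator{A}\le\frac{1}{\delta_3}\bigl(\frac{\Pi_-\Pi_+}{1-\delta_3}+\Pi_0\bigr)$ by the monotonicity of $m\mapsto m\bigl((1-\Pi_0)-m\bigr)$; this is a slightly sharper constant, at the price of the factor $(1-\delta_3)^{-1}$ which you then need $\le 2$, forcing $\delta_3\le 1/2$.

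The gap is the range $\delta_3>1/2$, which you flag but do not close. Your claim that it is ``treated by an analogous refinement using $\Pi_0>2\delta_3-1$'' is not obviously correct: the direct Markov estimate $\P_\lambda(A)\le\P_\lambda(\Pi_0>2\delta_3-1)\le\delta_1/(2\delta_3-1)$ is \emph{not} dominated by $2(\delta_1+\delta_2)/\delta_3$ when $\delta_3$ is slightly above $1/2$ (the denominator $2\delta_3-1$ degenerates while $\delta_3$ does not), and the subcase $\delta_1+\delta_2<\delta_3/2$ is not vacuous there. Since the lemma asserts the bound for all $\delta_3>0$, something more is required. The cleanest fix within your framework is to replace your pointwise inequality with the $\delta_3$-free one
\begin{equation*}
1-\max\{\Pi_-,\Pi_+\}\;\le\;1-\Pi_-^2-\Pi_+^2\;=\;2\Pi_0-\Pi_0^2+2\Pi_-\Pi_+\;\le\;2\Pi_0+2\Pi_-\Pi_+,
\end{equation*}
which is exactly the $L\to\infty$ form of the paper's inequality; combined with your two moment bounds $\E[\Pi_0]\le\delta_1$, $\E[\Pi_-\Pi_+]\le\delta_2$, Markov then gives $\P_\lambda(A)\le 2(\delta_1+\delta_2)/\delta_3$ for all $\delta_3>0$. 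Your closing observation that $\min\le\sqrt{\Pi_-\Pi_+}$ would degrade $\delta_2$ to $\sqrt{\delta_2}$ is correct, but the inequality $\max(a,b)\ge a^2+b^2$ (when $a+b\le 1$) obtains the linear dependence without the parabola analysis.
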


\begin{proof}
For each $L\ge 2$ and configuration $\eta$, define 
\begin{align}
    \Pi_{\#,L}(\eta){}&:=\frac{1}{\abs{\Delta_L}}\sum_{v\in\Delta_L}\indicator{G_\#}(\eta_v),\quad\#\in\set{-,+},\\
    \Psi_L(\eta){}&:=\frac{1}{\abs{\Delta_L}^2}\sum_{v,w\in\Delta_L}\indicator{(G_-^2\cup G_+^2)^c}(\eta_v,\eta_w).
\end{align}
Let $L\ge 2$.
By \eqref{eta_out} and \eqref{eta_+-},
\begin{equation}
\begin{multlined}
    \E_\lambda[\Psi_L]\le\frac{1}{\abs{\Delta_L}^2}\sum_{v,w\in\Delta_L}\left[
    \P_\lambda(\eta_v\not\in G_-\cup G_+)
    +\P_\lambda(\eta_w\not\in G_-\cup G_+)\right.
    \\
    \left.
    +\P_\lambda(\eta_v\in G_-,\eta_w\in G_+)
    +\P_\lambda(\eta_w\in G_-,\eta_v\in G_+)\right]
    \le 2 (\delta_1+\delta_2).
\end{multlined}
\end{equation}
Hence, by Markov's inequality,
\begin{equation}
\label{eqn:Dobrushin-Shlosman_Markov-inequality}
    \P_\lambda(\Psi_L<\delta_3)\ge 1-\frac{2(\delta_1+\delta_2)}{\delta_3}.
\end{equation}
Now, since $G_-$ and $G_+$ are disjoint, $\Pi_{-,L}+\Pi_{+,L}\le 1$, so
\begin{equation}
\label{eqn:Dobrushin-Shlosman_manipulation}
  \max\set{\Pi_{-,L},\Pi_{+,L}}
  \ge
  \max\set{\Pi_{-,L},\Pi_{+,L}}(\Pi_{-,L}+\Pi_{+,L})
  \ge
  \Pi_{-,L}^2+\Pi_{+,L}^2=1-\Psi_L.
\end{equation}
Combining \eqref{eqn:Dobrushin-Shlosman_Markov-inequality} and \eqref{eqn:Dobrushin-Shlosman_manipulation}, we get that
\begin{equation}
  \P_\lambda(\max\{\Pi_{-,L},\Pi_{+,L}\}\ge 1-\delta_3)\ge 1-\frac{2(\delta_1+\delta_2)}{\delta_3}.
\end{equation}
We conclude the proof using that $\Pi_{\#,L}\rightarrow\Pi_{\#}$ almost surely as $L\rightarrow\infty$.
\end{proof}

We are now ready to prove Theorem~\ref{thm:DS theorem}.

\begin{proof}[Proof of Theorem~\ref{thm:DS theorem}]
Define
\begin{equation}
    T_\#:=\left\{\lambda\in[\lambda_-,\lambda_+] \mid \P_\lambda(\eta_0\in G_\#)\ge 1-\eps\right\},\quad \#\in\{-,+\}.
\end{equation}
We consider two cases. 

First, suppose that $T_-\cup T_+\neq[\lambda_-,\lambda_+]$.
In other words, there exists $\lambda_c\in[\lambda_-,\lambda_+]$ such that
\begin{equation}\label{eq:P_lambda choice}
    \P_{\lambda_c}(\eta_0\in G_\#)< 1-\eps,\quad\#\in\{-,+\}.
\end{equation}
By considering the ergodic decomposition of $\P_{\lambda_c}$ (see, e.g., \cite[Theorem 14.17]{georgii2011gibbs}), we have that (recalling~\eqref{eq:Pi def})
\begin{equation}\label{eq:probaility as expectation}
    \P_{\lambda_c}(\eta_0\in G_\#) = \E_{\lambda_c}[\Pi_\#],\quad \#\in\{-,+\}.
\end{equation}
Using Item \ref{itm:DS theorem_constants} of the assumptions of the theorem, it is straightforward to check that there exists $\delta_3>0$ such that $(1-\delta_3)(1-2(\delta_1+\delta_2)/\delta_3)=1-\epsilon$ and $\delta_3\le\epsilon$.

\begin{claim}\label{claim:phase-transition_nonzero-prob}
\begin{equation}
\label{eqn:phase-transition_nonzero-prob}
    \P_{\lambda_c}(\Pi_\#>1-\delta_3)>0,\quad\#\in\set{-,+}.
\end{equation}
\end{claim}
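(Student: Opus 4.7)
I argue by contradiction: assume $p_+ := \P_{\lambda_c}(\Pi_+ > 1-\delta_3) = 0$ (the case $p_- = 0$ is symmetric). The plan is to combine Lemma~\ref{claim:DS-ergodic} with the defining strict inequality $\P_{\lambda_c}(\eta_0 \in G_-) < 1-\eps$ from~\eqref{eq:P_lambda choice} to derive a numerical contradiction, exploiting the engineered identity $(1-\delta_3)(1 - 2(\delta_1+\delta_2)/\delta_3) = 1 - \eps$.

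The first step is a mild strengthening of Lemma~\ref{claim:DS-ergodic}: applied at any $\delta_3' > 0$ the lemma gives $\P_{\lambda_c}(\max\set{\Pi_-,\Pi_+} \geq 1-\delta_3') \geq 1 - 2(\delta_1+\delta_2)/\delta_3'$, and since $\set{\max \geq 1-\delta_3'} \subseteq \set{\max > 1-\delta_3}$ whenever $\delta_3' < \delta_3$, letting $\delta_3' \uparrow \delta_3$ yields
\begin{equation}
    \P_{\lambda_c}(\max\set{\Pi_-,\Pi_+} > 1-\delta_3) \geq 1 - \frac{2(\delta_1+\delta_2)}{\delta_3}.
\end{equation}
Because $G_- \cap G_+ = \emptyset$, the limiting frequencies satisfy $\Pi_- + \Pi_+ \leq 1$ almost surely, and since $\delta_3 \leq \eps < 1/2$ the events $\set{\Pi_\# > 1-\delta_3}$ for $\#\in\set{-,+}$ are disjoint. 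Therefore
\begin{equation}
    p_- + p_+ \geq 1 - \frac{2(\delta_1+\delta_2)}{\delta_3} = \frac{1-\eps}{1-\delta_3}.
\end{equation}
Under the assumption $p_+ = 0$ this forces $p_- \geq (1-\eps)/(1-\delta_3)$, and the pointwise bound $\Pi_- \geq (1-\delta_3)\indicator{\Pi_- > 1-\delta_3}$ together with~\eqref{eq:probaility as expectation} gives
\begin{equation}
    \P_{\lambda_c}(\eta_0 \in G_-) = \E_{\lambda_c}[\Pi_-] \geq (1-\delta_3)p_- \geq 1 - \eps,
\end{equation}
the promised contradiction. The symmetric argument handles $p_- > 0$.

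The main subtlety I anticipate is exactly the upgrade from the non-strict bound in Lemma~\ref{claim:DS-ergodic} to its strict counterpart: without it, the disjointness step would only deliver $\P_{\lambda_c}(\Pi_\# \geq 1-\delta_3) > 0$, and an atom of the law of $\Pi_\#$ at exactly $1-\delta_3$ would prevent closing the argument. The limit $\delta_3' \uparrow \delta_3$ removes this possibility cleanly because the function $x \mapsto 1 - 2(\delta_1+\delta_2)/x$ is continuous on $(0,\infty)$. Apart from this wrinkle, the remaining steps are a routine chain of elementary inequalities.
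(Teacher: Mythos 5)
Your proof is correct and follows essentially the same route as the paper: assume one of the two probabilities vanishes, feed the contradiction hypothesis and Lemma~\ref{claim:DS-ergodic} into the Markov-type bound $\E_{\lambda_c}[\Pi_\#]\ge(1-\delta_3)\P_{\lambda_c}(\Pi_\#>1-\delta_3)$, invoke~\eqref{eq:probaility as expectation}, and contradict the strict inequality in~\eqref{eq:P_lambda choice}. The one place you go beyond the paper's text is the upgrade from $\P_{\lambda_c}(\max\{\Pi_-,\Pi_+\}\ge 1-\delta_3)$ to $\P_{\lambda_c}(\max\{\Pi_-,\Pi_+\}> 1-\delta_3)$ via $\delta_3'\uparrow\delta_3$, and you are right that this step is genuinely needed: as written, the paper passes from $\P_{\lambda_c}(\Pi_->1-\delta_3)=0$ to $\P_{\lambda_c}(\Pi_+\ge 1-\delta_3)\ge 1-2(\delta_1+\delta_2)/\delta_3$, which silently discards the possible atom of $\Pi_-$ at exactly $1-\delta_3$ (on that event $\Pi_+\le\delta_3<1-\delta_3$, so those configurations contribute nothing to the lower bound on $\E_{\lambda_c}[\Pi_+]$). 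Your continuity argument closes that gap cleanly without changing the overall structure; an equally valid alternative would be to note the open condition $\delta_1+\delta_2<1-\eps/2-\sqrt{1-\eps}$ leaves freedom to perturb $\delta_3$ to a non-atom of the law of $\Pi_-$ while keeping the required inequalities, but your limiting argument is the tidier fix.
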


\begin{proof}
We treat only the $\#=-$ case, as the other case is similar.
Suppose by contradiction that $\P_{\lambda_c}(\Pi_->1-\delta_3)=0$.
By Markov's inequality, Lemma~\ref{claim:DS-ergodic}, and the choice of $\delta_3$, we have that
\begin{equation}
\label{eqn:phase-transition_observable-lower-bound}
    \E_{\lambda_c}[\Pi_+]
    \ge\P_{\lambda_c}(\Pi_+\ge 1-\delta_3)(1-\delta_3)
    \ge\left(1-\frac{2(\delta_1+\delta_2)}{\delta_3}\right)(1-\delta_3)=1-\epsilon.
\end{equation}
On the other hand, from~\eqref{eq:P_lambda choice} and~\eqref{eq:probaility as expectation}, it follows that $\E_{\lambda_c}[\Pi_+]<1-\epsilon$, which contradicts \eqref{eqn:phase-transition_observable-lower-bound}.
\end{proof}

We now deduce that $\P_{\lambda_c}$ is not ergodic.
Indeed, suppose by contradiction that $\P_{\lambda_c}$ is ergodic.
As $\set{\Pi_\#>1-\delta_3}$ is a translation-invariant event, Claim \ref{claim:phase-transition_nonzero-prob} implies that $\mathbb P_{\lambda_c}(\Pi_\#>1-\delta_3)=1$, $\#\in\set{-,+}$.
Thus, there exists a configuration $\eta$ such that $\Pi_\#(\eta)>1-\delta_3$, $\#\in\set{-,+}$, so $1\ge \Pi_-(\eta)+\Pi_+(\eta)>2(1-\delta_3)\ge 2(1-\epsilon)>1$, a contradiction.
Therefore, $\mathbb P_{\lambda_c}$ is not ergodic, so its ergodic decomposition~\cite[Theorem 14.17]{georgii2011gibbs} contains two translation-invariant Gibbs measures $\P_{\lambda_c,-},\P_{\lambda_c,+}$ of the $(J_{\lambda_c},\omega_{\lambda_c})$-spin model 
such that
\begin{equation}
    \P_{\lambda_c,\#}(\eta_0\in G_\#)
    =\E_{\lambda_c,\#}[\Pi_\#]
    \ge 1-\delta_3\ge 1-\epsilon,\quad\#\in\{-,+\},
\end{equation}
establishing Theorem~\ref{thm:DS theorem} in this case.

Second, suppose that $T_-\cup T_+=[\lambda_-,\lambda_+]$. By Item \ref{itm:DS theorem_endpoints} of the assumptions, $\lambda_-\in T_-$ and $\lambda_+\in T_+$, so $T_-$ and $T_+$ are both non-empty. 
As $[\lambda_-,\lambda_+]$ is connected, it follows that there exists $\lambda_c\in\overline{T}_-\cap\overline{T}_+$ (where $\overline{A}$ denotes the closure of a set $A$). 
Let $(\lambda_{n,-})_{n\ge 1}\subset T_-$ satisfy $\lambda_{n,-}\to\lambda_c$. 
As the family $(\P_\lambda)_{\lambda\in[\lambda_-,\lambda_+]}$ is tight, it follows that $\P_{\lambda_{n_k,-}}\to \P_{\lambda_c,-}$ in distribution for some subsequence $(n_k)_{k\ge 1}$. The limit measure is a Gibbs measure of the $(J_{\lambda_c},\omega_{\lambda_c})$-spin model (see Proposition \ref{prop:convergence-of-Gibbs-measures}) and is clearly translation-invariant. 
Moreover, since $G_-$ is closed,
\begin{equation}
    \P_{\lambda_c,-}(\eta_0\in G_-)\ge \limsup_{k\rightarrow\infty} \P_{\lambda_{n_k,-}}(\eta_0\in G_-)\ge 1-\eps.
\end{equation}
By a symmetric argument, we obtain a translation-invariant Gibbs measure $\P_{\lambda_c,+}$ of the $(J_{\lambda_c},\omega_{\lambda_c})$-spin model satisfying $\P_{\lambda_c,+}(\eta_0\in G_+)\ge 1-\eps$. 
This completes the proof of the theorem.
\end{proof}

\section{Phase co-existence in the $(J,\omega)$-spin model}

In this section, we prove Theorem \ref{thm:phase-transition}.
Our proof is based on a criterion for the existence of first-order phase transitions due to Dobrushin and Shlosman \cite{shlosman1986method}; see Theorem \ref{thm:DS theorem}.
We re-prove this Theorem in Section \ref{sec:dobrushin_shlosman} in greater detail than the original reference.
In particular, our proof of this result applies to models with unbounded spin values.

\subsection{Reflection positivity and chessboard estimate}
\label{sec:chessboard_estimate}

We rely on the {\it chessboard estimate}, which follows from the reflection positivity of the $(J,\omega)$-spin model.
It is convenient to use \emph{reflections through hyperplanes intersecting edges}.

\subsubsection{Notation}
We continue to work on the discrete torus $\Lambda_L=\mathbb Z^d/(L \mathbb Z)^d$, restricting to even values of $L$.
For $1\le i\le d$ and $p\in \Z+\frac{1}{2}$, define the reflection of $\Lambda_L$ through the hyperplane orthogonal to direction $i$ at coordinate $p$,
\begin{equation}\label{eq:reflection transformation}
    (\tau_{i,p}(v))_j:=\begin{cases} 2p-v_j\ (\mathrm{mod}\ L)&\text{if } j=i\\
    v_j&\text{otherwise}\end{cases}.
\end{equation}

Now $\tau_{i,p}$ naturally acts on spin configurations $\eta$, and on functions of spin configurations. With a slight abuse of notation, we denote these actions by $\tau_{i,p}$ as well:
\begin{equation}
  (\tau_{i,p}\eta)_v:=\eta_{\tau_{i,p}(v)}
  ,\quad
  \tau_{i,p} f(\eta):=f(\tau_{i,p}(\eta))
  .
\end{equation}

Given $x\in\Z^{d}$ and $\vec{\ell}\in\Z_{\ge 0}^{d}$ such that $2(\vec\ell_i+1)$ divides $L$ for all $i$, define $R=R_{\vec{\ell},x}:=\prod_{i=1}^{d}[x_i,x_i+\vec{\ell}_i]\cap\Z^{d}$ as the box with corner $x$ and side lengths $\vec{\ell}$.
Let $T^R_{L}$ be the group of isomorphisms of $\Lambda_L$ that is generated by the reflections
\begin{equation}
  \bigcup_{i=1}^d \set{\tau_{i,p}\mid p\in x_i-\frac{1}{2}+(\vec{\ell}_i+1)\Z} \label{eq:P^R vert}
  .
\end{equation}
Note that
\begin{equation}\label{eq:size of T^R_L}
    \abs{T^R_L}=\prod_{i=1}^d\frac{L}{\vec{\ell}_i+1}.
\end{equation}
We also let $T^R$ be the group of isomorphisms of $\Z^d$ generated by the reflections in~\eqref{eq:P^R vert} (so that $T^R$ is infinite).

Recall that configurations of the $(J,\omega)$-spin model are functions $\eta:\Z^d\to\R^n$. An \emph{observable}, i.e., a measurable function $f:(\R^n)^{\Z^d}\to\R$, is called \emph{$R$-local} if $f(\eta)$ depends only on the restriction of $\eta$ to $R$.

\subsubsection{The chessboard estimate}
We are now ready to state the chessboard estimate.
Its proof is standard (see e.g. \cite[Theorem 5.8]{biskup2009reflection}, \cite[Theorem 10.11 and Remark 10.15]{friedli2017statistical}, or \cite[Section 2.7.1]{peled2019lectures}), and follows from the reflection positivity of the $(J,\omega)$-spin model, which is also standard (see e.g. \cite[Section 10.3.2]{friedli2017statistical}).
We will not reproduce either proof here.

\begin{lemma}[Chessboard estimate]
\label{lem:chessboard} 
Let $J>0$ and $\omega$ be a Borel measure on~$\R^n$ with finite, positive total measure.
Let $L\in\Z_{\ge 0}$ and $\vec{\ell}\in\Z_{\ge 0}^{d}$ satisfy that $2(\vec\ell_i+1)$ divides $L$ for all $i$. Let $x\in\Z^d$ and $R:=R_{\vec\ell,x}$. Let $A\subset T^{R}_L$ and $(f_{\tau})_{\tau\in A}$ be bounded $R$-local observables.
Then
\begin{equation}
\mathbb P_{J,\omega}^{L,\per}\left(\prod_{\tau\in A}\tau f_{\tau}\right)\le\prod_{\tau\in A}\|f_{\tau}\|_{J,\omega}^{R|L}
\end{equation}
where $\|\cdot\|_{J,\omega}^{R|L}$ is the chessboard seminorm of $f$, defined by
\begin{equation}
\label{eq:vertex chessboard norm def}
\norm{f}_{J,\omega}^{R\mid L} :=\Bigg[\mathbb P_{J,\omega}^{L,\per}\Bigg(\prod_{\tau\in T^{R}_{L}}(\tau f)\Bigg)\Bigg]^{1/\abs{T^{R}_{L}}},
\end{equation}
noting that $\abs{T^{R}_{L}}$ is given by~\eqref{eq:size of T^R_L}.
\end{lemma}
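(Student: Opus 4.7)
The plan is to derive the chessboard estimate from reflection positivity of the $(J,\omega)$-spin model across hyperplanes orthogonal to a coordinate axis at half-integer position, combined with an iteration of the resulting Cauchy--Schwarz inequality that ``homogenizes'' the observables across the torus.

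For the reflection positivity step, I would fix a reflection $\theta=\tau_{i,p}$ with $p\in\Z+\frac12$ and split $\Lambda_L$ into the two halves $\Lambda_\pm$ it exchanges. Since the single-site measure $\prod_v\omega(\dd{\eta}_v)$ is a product, it factorizes across the hyperplane, and the Hamiltonian decomposes as $H^L_J=H_++H_-+H_\pm$, where $H_-=\theta H_+$ and $H_\pm$ collects the terms $J\|\eta_v-\eta_{\theta v}\|^2$ on the crossing edges. The core identity is the Gaussian integral representation
\begin{equation*}
    e^{-Jy^2}=\frac{1}{\sqrt{4\pi J}}\int_{\R}e^{-t^2/(4J)}\,e^{ity}\,\dd{t},
\end{equation*}
applied componentwise to each cross-edge term. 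Each resulting exponential factors as $G\cdot\overline{\theta G}$ with $G=e^{it\eta_v^{(k)}}$ supported on $\Lambda_+$, so $e^{-H_\pm}$ becomes a positive mixture of products of this form. Consequently, for any bounded $R$-local $F$ supported on $\Lambda_+$,
\begin{equation*}
    \mathbb P^{L,\per}_{J,\omega}\bigl(F\cdot\overline{\theta F}\bigr)=\int\Bigl|\int F\cdot G\cdot e^{-H_+}\,\dd\nu_+\Bigr|^2\dd{\mu}(G)\ge 0,
\end{equation*}
which is reflection positivity across $\theta$.

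Reflection positivity immediately yields the Cauchy--Schwarz inequality $|\mathbb P^{L,\per}_{J,\omega}(F\cdot\theta G)|^2\le\mathbb P^{L,\per}_{J,\omega}(F\cdot\theta F)\cdot\mathbb P^{L,\per}_{J,\omega}(G\cdot\theta G)$ for $R$-local $F,G$ on $\Lambda_+$. The chessboard estimate then follows from the standard iteration: successively apply Cauchy--Schwarz through each reflection in the generating set $\bigcup_i\{\tau_{i,p}:p\in x_i-\frac12+(\vec\ell_i+1)\Z\}$ of $T^R_L$. Each step bounds a product $\mathbb P^{L,\per}_{J,\omega}(\prod_\tau\tau f_\tau)$ by the geometric mean of two similar products in which the observables on one side of the hyperplane are replaced by reflected copies of those on the other. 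After finitely many steps, every cell of the tiling of $\Lambda_L$ by reflected copies of $R$ carries the same observable, and rearranging the chain of bounds yields
\begin{equation*}
    \mathbb P^{L,\per}_{J,\omega}\Bigl(\prod_{\tau\in A}\tau f_\tau\Bigr)\le\prod_{\tau\in A}\Bigl[\mathbb P^{L,\per}_{J,\omega}\Bigl(\prod_{\tau'\in T^R_L}\tau' f_\tau\Bigr)\Bigr]^{1/|T^R_L|}=\prod_{\tau\in A}\|f_\tau\|_{J,\omega}^{R\mid L},
\end{equation*}
with $f_\tau\equiv1$ for $\tau\notin A$ contributing a trivial factor.

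The main obstacle is reflection positivity itself: it hinges on the precise $\|\eta_v-\eta_w\|^2$ form of the edge interaction, via the Gaussian linearization above. Once this is in place, the Cauchy--Schwarz iteration is a formal combinatorial argument that holds for any finite positive Borel measure $\omega$ and in every dimension $d\ge 1$. Detailed versions of both steps are standard and can be found in, e.g., \cite[Sec.~5.8]{biskup2009reflection} or \cite[Ch.~10]{friedli2017statistical}.
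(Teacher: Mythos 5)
Your sketch is a correct outline of the standard reflection-positivity/Cauchy--Schwarz-iteration proof of the chessboard estimate: Gaussian linearization of the $\|\eta_v-\eta_{\theta v}\|^2$ cross-edge terms gives reflection positivity through edges, and the iterated Cauchy--Schwarz bound over the generating reflections homogenizes the observables across the torus. The paper deliberately omits a proof of this lemma and instead cites the same standard references you invoke (\cite[Theorem 5.8]{biskup2009reflection}, \cite[Theorem 10.11]{friedli2017statistical}), so your proposal matches the route the authors intend.
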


In our use of the chessboard estimate, it is convenient to pass to infinite volume. We first define the notion of a torus-limit Gibbs measure, taking care to take the limit on tori with highly divisible side lengths, to satisfy the assumptions of the chessboard estimate for arbitrary $\vec{\ell}\in\Z_{\ge 0}^d$.

\begin{definition}
\label{def:torus-limit}
We call a Gibbs measure $\P_{J,\omega}$ a \emph{torus-limit Gibbs measure of the $(J,\omega)$-spin model} if it is obtained as a limit in distribution along a subsequence of the (finite-volume) torus Gibbs measures $(\P^{k!,\per}_{J,\omega})_k$.
\end{definition}

\begin{corollary}[The chessboard estimate in the limit]
\label{cor:infinite-volume-chessboard-estimate}
Let $J\ge 0$ and $\omega$ be a Borel measure on~$\R^n$ with finite, positive total measure. Let $x\in\Z^d$, $\vec{\ell}\in\Z_{\ge 0}^{d}$ and $R:= R_{\vec\ell,x}$. Let $A\subset T^{R}$ be finite and $(f_{\tau})_{\tau\in A}$ bounded, $R$-local, lower semi-continuous observables.
Let $\P_{J,\omega}$ be a torus-limit Gibbs measure of the $(J,\omega)$-spin model.
Then,
\begin{equation}\label{eq:chessboard estimate infinite volume}
\P_{J,\omega}\left(\prod_{\tau\in A}\tau f_{\tau}\right)
\le \prod_{\tau\in A}\norm{f_{\tau}}_{J,\omega}^{R}
\end{equation}
where we define
\begin{equation}
    \norm{f_{\tau}}_{J,\omega}^{R}:=\limsup_{k\rightarrow\infty}\norm{f_{\tau}}_{J,\omega}^{R\mid k!}.
\end{equation}
\end{corollary}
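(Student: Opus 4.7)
The plan is to derive the infinite-volume chessboard bound from the finite-volume Lemma~\ref{lem:chessboard}, applied on the tori $\Lambda_{k!}$ along the subsequence defining $\P_{J,\omega}$ (Definition~\ref{def:torus-limit}), and then to pass to the limit using the Portmanteau theorem together with the elementary inequality $\liminf_j\prod_{\tau\in A}a_{\tau,j}\le\prod_{\tau\in A}\limsup_j a_{\tau,j}$, valid for finite $A$ and non-negative $a_{\tau,j}$.

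First I would set $G:=\prod_{\tau\in A}\tau f_\tau$. Because $A$ is finite and each $\tau f_\tau$ depends only on the spins in $\tau(R)$, $G$ is bounded, local (depending only on the spins in the finite set $\bigcup_{\tau\in A}\tau(R)\subset\Z^d$), and lower semi-continuous; here one uses that the product of finitely many bounded non-negative l.s.c.\ functions is l.s.c., non-negativity of the $f_\tau$ being built into the standard chessboard setup so that $\norm{f_\tau}_{J,\omega}^{R\mid L}$ is a well-defined non-negative real as an $\abs{T^R_L}$-th root. Next, for all sufficiently large $k$ I would verify that $2(\vec\ell_i+1)\mid k!$ for all $i$, that the support of $G$ fits inside a single fundamental domain of $(k!\Z)^d$, and that the canonical quotient map $T^R\to T^R_{k!}$ restricts to an injection on $A$ with image $\bar A$; under these identifications, the pushforward of $G$ to $\Lambda_{k!}$ coincides with $\prod_{\bar\tau\in\bar A}\bar\tau f_\tau$, so Lemma~\ref{lem:chessboard} gives
\[
\P^{k!,\per}_{J,\omega}(G)\le \prod_{\tau\in A}\norm{f_\tau}_{J,\omega}^{R\mid k!}
\]
for all such $k$.

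Finally, let $(k_j)_{j\ge 1}$ be a subsequence along which $\P^{k_j!,\per}_{J,\omega}\to\P_{J,\omega}$ in distribution, which exists by Definition~\ref{def:torus-limit}. Since $G$ is bounded and l.s.c.\ on $(\R^n)^{\Z^d}$ with the product topology, the Portmanteau theorem gives $\P_{J,\omega}(G)\le\liminf_j\P^{k_j!,\per}_{J,\omega}(G)$, and combining with the previous display,
\begin{align*}
\P_{J,\omega}(G)&\le \liminf_{j\to\infty}\prod_{\tau\in A}\norm{f_\tau}_{J,\omega}^{R\mid k_j!}\\
&\le \prod_{\tau\in A}\limsup_{j\to\infty}\norm{f_\tau}_{J,\omega}^{R\mid k_j!}\le \prod_{\tau\in A}\norm{f_\tau}_{J,\omega}^R,
\end{align*}
where the second inequality is the product/liminf bound above (via $\prod_\tau\limsup_j a_{\tau,j}=\lim_j\prod_\tau\sup_{j'\ge j}a_{\tau,j'}$, a monotone limit) and the last step uses the definition $\norm{f_\tau}_{J,\omega}^R=\limsup_{k\to\infty}\norm{f_\tau}_{J,\omega}^{R\mid k!}$. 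This is~\eqref{eq:chessboard estimate infinite volume}.

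The main obstacle, and essentially the only non-routine step, is the combinatorial bookkeeping in the middle paragraph: verifying that for $k$ sufficiently large the finite set $A\subset T^R$ embeds faithfully into $T^R_{k!}$, that the reflected copies $\tau(R)$ do not wrap or collide on the torus, and that the pushforward of $G$ to $\Lambda_{k!}$ produces exactly the torus product on the left-hand side of Lemma~\ref{lem:chessboard} with the same factors $\norm{f_\tau}_{J,\omega}^{R\mid k!}$ as on the right.
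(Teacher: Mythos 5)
Your proof is correct and follows the same route as the paper, which disposes of the corollary in one sentence (``By an immediate application of the Portmanteau theorem and the finite-volume chessboard estimate''); you have simply spelled out the details—the embedding of the finite $A\subset T^R$ into $T^R_{k!}$ for $k$ large, the closure of bounded non-negative lower semi-continuous observables under finite products and under composition with the coordinate-permuting homeomorphisms $\tau$, the Portmanteau direction for l.s.c.\ integrands, and the elementary $\liminf\prod\le\prod\limsup$ inequality for finitely many non-negative bounded sequences, all of which are accurate.
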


\begin{proof}
By an immediate application of the Portmanteau theorem and the finite-volume chessboard estimate (Lemma \ref{lem:chessboard}).
\end{proof}

Though not needed for our results, we remark that it is desirable to extend~\eqref{eq:chessboard estimate infinite volume} to general periodic Gibbs measures, rather than just torus-limit Gibbs measures. Such an extension was demonstrated in~\cite{hadas2025columnar} for a different model. However, while the proof of~\cite{hadas2025columnar} applies in some generality, it does not cover the $(J,\omega)$-spin model for $\omega$ with non-compact support due to the fact that its interaction energy $\|\eta_v-\eta_w\|^2$ is unbounded.

\subsection{Proof of Theorem \ref{thm:phase-transition}}

We now prove Theorem \ref{thm:phase-transition} by verifying the assumptions of the Dobrushin--Shlosman criterion (Theorem \ref{thm:DS theorem}), where we take $\P_{\lambda}$ to be a torus-limit Gibbs measure (see Remark \ref{rem:existence-torus-limit}) of the $(J_\lambda,\omega_\lambda)$-spin model, $\lambda\in[\lambda_-,\lambda_+]$.
This requires verifying the tightness of this family of measures as well as three probabilistic estimates.
In Section \ref{sec:torus-limit-measures-tight}, we resolve the first issue by taking advantage of the continuity of the mappings $\lambda\mapsto J_\lambda$ and $\lambda\mapsto\omega_\lambda$.
In Section \ref{sec:Dobrushin-Shlosman_estimates}, we use Assumption \ref{as:good regions} to tackle the second issue.

\subsubsection{Tightness of torus-limit Gibbs measures}
\label{sec:torus-limit-measures-tight}
It is convenient to use the following notation in this section: Given a Borel measure $\omega$ on $\R^n$ with finite, positive total measure, we denote its normalized probability measure by
\begin{equation}
\label{eqn:normalized probability measure}
    \bar{\omega}:=\frac{\omega}{\omega(\R^n)}.
\end{equation}

We formulate the tightness property of the torus-limit Gibbs measures in slightly greater generality than needed.

\begin{proposition}[Tightness of torus-limit Gibbs measures]
\label{prop:torus-limit-measures-tight}
Let $I$ be an arbitrary index set.
Suppose that $\set{J_{i}\mid i\in I}$ is a bounded subset of $\R_{\ge 0}$ and $\set{\omega_{i}\mid i\in I}$ a family of Borel measures on $\R^n$ with finite, positive total measure, such that the family $\set{\bar{\omega}_{i}\mid i\in I}$ is tight.
For each $i\in I$, let $\P_i$ be a torus-limit Gibbs measure of the $(J_i,\omega_i)$-spin model.
Then, the family of measures $\set{\P_i\mid i\in I}$ is tight.
\end{proposition}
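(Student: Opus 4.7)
The plan is to use the single-site chessboard estimate to reduce the tightness of $\set{\P_i}$ on $(\R^n)^{\Z^d}$ to the assumed tightness of $\set{\bar\omega_i}$. First I would note that, because a countable product of Polish spaces is Polish and a family of Borel probabilities on such a product is tight iff each one-dimensional marginal is tight (standard diagonal enumeration), and because every torus Gibbs measure $\P^{L,\per}_{J,\omega}$ is invariant under the full translation group of $\Lambda_L$---a property that passes to any distributional limit along $L=k!$---each $\P_i$ is translation-invariant on $\Z^d$. Hence it suffices to show that the single-site marginals $\set{\P_i(\eta_0\in\cdot)\mid i\in I}$ form a tight family on $\R^n$.

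To bound $\P_i(\eta_0\in K^c)$ for a compact $K\subset\R^n$, I would apply the infinite-volume chessboard estimate (Corollary~\ref{cor:infinite-volume-chessboard-estimate}) with the singleton block $R=\set{0}$, index set $A=\set{\mathrm{id}}$, and observable $f=\indicator{K^c}$, which is bounded and, since $K^c$ is open, lower semi-continuous. For even $L$, the generating reflections act simply transitively on $\Lambda_L$, so $\abs{T^{\set{0}}_L}=L^d$ and
\[
\norm{\indicator{K^c}}_{J,\omega}^{R\mid L}
= \P^{L,\per}_{J,\omega}\bigl(\eta_v\in K^c\text{ for all }v\in\Lambda_L\bigr)^{1/L^d}.
\]
Because $H^L_J\ge 0$, the integrand in the numerator is at most $\omega(K^c)^{L^d}$, so the norm is at most $\omega(K^c)/(\Xi^{L,\per}_{J,\omega})^{1/L^d}$. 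Taking $\limsup$ along $L=k!$ and invoking Corollary~\ref{cor:infinite-volume-chessboard-estimate} yields
\[
\P_i(\eta_0\in K^c)\le \omega_i(K^c)\,e^{-\psi_{J_i,\omega_i}}.
\]

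The denominator is controlled by~\eqref{eqn:f-upper-bound}: for any compact $S\subset\R^n$, $e^{\psi_{J_i,\omega_i}}\ge e^{-dJ_i\diam(S)^2}\omega_i(S)$. Combining with the previous display and dividing through by $\omega_i(\R^n)$ gives
\[
\P_i(\eta_0\in K^c)\le \frac{\bar\omega_i(K^c)}{\bar\omega_i(S)}\,e^{dJ_i\diam(S)^2}.
\]
Now the hypotheses enter: tightness of $\set{\bar\omega_i}$ lets me fix once and for all a compact $S$ with $\bar\omega_i(S)\ge\tfrac12$ for every $i$, and the assumed boundedness of $\set{J_i}$ in $\R_{\ge 0}$ bounds $e^{dJ_i\diam(S)^2}$ uniformly by some finite $C$. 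Given $\eps>0$, a second use of tightness of $\set{\bar\omega_i}$ produces a compact $K$ with $\bar\omega_i(K^c)<\eps/C$ for all $i\in I$, whence $\P_i(\eta_0\in K^c)<\eps$ uniformly in $i$, completing the reduction from the first paragraph.

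I do not foresee a serious obstacle. The main technical observation is that the single-site reflection group $T^{\set{0}}_L$ acts simply transitively on $\Lambda_L$ for $L$ even, so that the chessboard norm of an indicator collapses exactly to a density bound; after this, the tightness of the normalized site measures $\bar\omega_i$ propagates to the marginal at the origin, with the boundedness of $\set{J_i}$ entering only through the single exponential factor $e^{dJ_i\diam(S)^2}$ associated with the one fixed compact $S$.
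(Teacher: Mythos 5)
Your proof is correct, and it takes a genuinely leaner route than the paper's. The paper first establishes tightness of the entire collection of finite-volume torus measures $\{\P^{L,\per}_{J_i,\omega_i}\}_{L,i}$ (Lemma~\ref{lem:torus-measures-tight-family}), using the chessboard estimate in finite volume together with a per-site family of compacts $K_{v,\epsilon}$ shrinking in $v$ and a summable sequence $\delta(v)$; it then invokes Prokhorov's theorem to pass to the closure, in which all torus-limit measures sit. You instead observe that torus-limit measures are automatically translation-invariant, so by the standard equivalence (tightness of a family on a countable product of Polish spaces $\iff$ tightness of each coordinate-marginal family, obtained by choosing $K_n$ with $\sup_i\P_i(\eta_0\in K_n^c)<\epsilon 2^{-n}$ and applying Tychonoff) the problem reduces to the single marginal at the origin; you then control that marginal directly by the \emph{infinite-volume} chessboard estimate, Corollary~\ref{cor:infinite-volume-chessboard-estimate}, applied to the l.s.c.\ observable $\indicator{K^c}$ with the singleton block. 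The underlying estimate — $\P_i(\eta_0\in K^c)\le\bar\omega_i(K^c)e^{dJ_i\diam(S)^2}/\bar\omega_i(S)$ — is the same as the paper's display~\eqref{eqn:torus-measures-tight-family_probability-bound}, but you avoid both the per-site construction and the Prokhorov closure step. The one point worth stating explicitly in a clean write-up is that translation invariance of the torus Gibbs measure passes to the distributional limit along $L=k!$ because each fixed $\Z^d$-translation is a continuous map that eventually acts consistently on $\Lambda_{k!}$; with that spelled out, the argument is complete and slightly more economical than the paper's.
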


Proposition \ref{prop:torus-limit-measures-tight} is sufficient for our purpose.
Indeed, recall that we are considering a continuous family $(J_\lambda,\omega_\lambda)_{\lambda\in[\lambda_-,\lambda_+]}$ of $(J,\omega)$-spin models.
The continuity of the mappings $\lambda\mapsto J_\lambda$ and $\lambda\mapsto\omega_\lambda$ imply that the set $\set{J_\lambda\mid\lambda\in[\lambda_-,\lambda_+]}$ is bounded and that the family of probability measures $\set{\bar{\omega}_\lambda\mid\lambda\in[\lambda_-,\lambda_+]}$, as the continuous image of a compact set, is compact, thus tight.

\smallskip
We begin the proof of Proposition~\ref{prop:torus-limit-measures-tight} by establishing the tightness of the family of finite-volume torus Gibbs measures of the $(J_i,\omega_i)$-spin models, $i\in I$.

\begin{lemma}
\label{lem:torus-measures-tight-family}
Let $I$ be an arbitrary index set.
Suppose that $\set{J_{i}\mid i\in I}$ is a bounded subset of $\R_{\ge 0}$ and $\set{\omega_{i}\mid i\in I}$ a family of Borel measures on $\R^n$ with finite, positive total measure, such that the family $\set{\bar{\omega}_{i}\mid i\in I}$ is tight.
Then, the set of (finite-volume) torus Gibbs measures 
\begin{equation}
\set{\P_{J_i,\omega_i}^{L,\per}\mid L\ge 2 \text{ even, }i\in I}
\end{equation}
is tight.
\end{lemma}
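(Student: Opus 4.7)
The plan is to reduce tightness of the joint laws on $(\mathbb R^n)^{\mathbb Z^d}$ (with the product topology) to uniform tightness of the single-site marginals, and to obtain the latter by combining the chessboard estimate (Lemma~\ref{lem:chessboard}) with the partition-function lower bound~\eqref{eqn:partition-function-lower-bound}. The reduction itself is standard: by translation invariance of $\mathbb P^{L,\per}_{J_i,\omega_i}$, the law of each $\eta_v$ agrees with that of $\eta_0$. Enumerating $\mathbb Z^d=\{v_k\}_{k\ge 1}$ and choosing, for each $k$, a compact $K_k\subset\mathbb R^n$ whose single-site mass is at least $1-\varepsilon 2^{-k}$ uniformly over the family, Tychonoff's theorem makes $\prod_k K_k$ a compact subset of $(\mathbb R^n)^{\mathbb Z^d}$ of mass at least $1-\varepsilon$ under every $\mathbb P^{L,\per}_{J_i,\omega_i}$.

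For the uniform single-site bound, I would apply Lemma~\ref{lem:chessboard} to $f(\eta)=\indicator{A}(\eta_0)$ with the single-point box $R=\{0\}$, $\vec\ell=\vec 0$, taking the index set equal to all of $T^{\{0\}}_L$ and $f_\tau=f$ for every $\tau$. A direct check (the half-integer reflections in a single coordinate compose to the unit translation in that coordinate, while reflecting through any given hyperplane moves $0$ to an arbitrary odd coordinate) shows that the orbit map $\tau\mapsto \tau(0)$ is a bijection $T^{\{0\}}_L\to\Lambda_L$, of size $L^d$ by~\eqref{eq:size of T^R_L}, so $\prod_\tau\tau f=\prod_{v\in\Lambda_L}\indicator{A}(\eta_v)$ and the chessboard estimate collapses to
\begin{equation*}
    \mathbb P^{L,\per}_{J,\omega}(\eta_0\in A)\le \bigl[\mathbb P^{L,\per}_{J,\omega}(\eta_v\in A\ \forall v\in\Lambda_L)\bigr]^{1/L^d}.
\end{equation*}
Since $H^L_J\ge 0$, the numerator of the right-hand probability is at most $\omega(A)^{L^d}$, while \eqref{eqn:partition-function-lower-bound} bounds the denominator from below, for any bounded measurable $S\subset\mathbb R^n$, by $e^{-dJ\diam(S)^2 L^d}\omega(S)^{L^d}$ (using $|E(\Lambda_L)|=dL^d$). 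Taking the $1/L^d$ power produces the $L$-uniform estimate
\begin{equation*}
    \mathbb P^{L,\per}_{J,\omega}(\eta_0\in A)\le e^{dJ\diam(S)^2}\,\frac{\bar\omega(A)}{\bar\omega(S)}.
\end{equation*}

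To close the argument, let $J_\ast:=\sup_{i\in I}J_i<\infty$. Tightness of $\{\bar\omega_i\}$ furnishes a compact $S\subset\mathbb R^n$ with $\inf_{i\in I}\bar\omega_i(S)\ge 1/2$, which fixes the prefactor $C:=2 e^{dJ_\ast\diam(S)^2}$. Given $\varepsilon>0$, a second application of tightness produces a compact $K\subset\mathbb R^n$ with $\sup_{i\in I}\bar\omega_i(K^c)\le\varepsilon/C$; the displayed inequality with $A=K^c$ then yields $\sup_{i,L}\mathbb P^{L,\per}_{J_i,\omega_i}(\eta_0\in K^c)\le\varepsilon$, which is precisely the uniform single-site tightness fed into the reduction step of the first paragraph. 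I expect the only conceptually delicate point to be verifying that $T^{\{0\}}_L$ acts freely transitively on $\Lambda_L$, so that the chessboard product truly reduces to the ``all sites in $A$'' event; once that is in place, the rest is a clean pairing of the chessboard upper bound with the variational lower bound on $\Xi^{L,\per}_{J,\omega}$.
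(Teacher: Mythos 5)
Your argument is correct and mirrors the paper's proof: reduce via Tychonoff's theorem and a union bound to a uniform single-site tail estimate, then obtain that estimate by pairing the chessboard estimate at a single site $R=\{0\}$ with the partition-function lower bound~\eqref{eqn:partition-function-lower-bound}, exactly as in the paper (which uses a general summable weight $\delta(v)$ where you use $\varepsilon 2^{-k}$). One small imprecision in your transitivity check: two half-integer reflections in a given coordinate compose to translation by an \emph{even} integer, not the unit translation, but the orbit-of-$0$ argument goes through regardless, since single reflections reach every odd coordinate and even translations every even one (using that $L$ is even).
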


\begin{proof}
Let $\epsilon>0$.
We will construct a family $(K_{v,\epsilon})_{v\in\Z^d}$ of compact subsets of $\R^n$ such that 
\begin{equation}
\P^{L,\per}_{J_{i},\omega_{i}}(\eta_v \in K_{v,\epsilon} \text{ for all }v\in\Z^d) \ge 1 -\epsilon,\quad\text{for all } L\ge2\text{ even and }i\in I.
\end{equation}
This completes the proof of the lemma, since the product $\prod_{v\in\Z^{d}}K_{v,\epsilon}$ of these compact sets is a compact subset of $(\R^n)^{\Z^d}$ by Tychonoff's theorem.

We proceed with the construction.
By a union bound, it suffices to choose the sets $(K_{v,\epsilon})_{v\in\Z^d}$ such that 
\begin{equation}
\label{eqn:tightness-union-bound}
\begin{multlined}
\sum_{v\in\Z^{d}} \P^{L,\per}_{J_{i},\omega_{i}}(\eta_v \notin K_{v,\epsilon})
\le\epsilon,
\quad\text{for all } L\ge2\text{ even and }i\in I.
\end{multlined}
\end{equation}
Fix $i\in I$ and an even $L\ge 2$.
By the chessboard estimate (Lemma \ref{lem:chessboard}), for any $v\in\Z^{d}$ and compact subset $K\subset\R^{n}$, we have that
\begin{equation}
\P^{L,\per}_{J_{i},\omega_{i}}(\eta_v \notin K) \le \P^{L,\per}_{J_{i},\omega_{i}}(\eta_w \notin K\text{ for all }w\in\Lambda_{L})^{1/L^d},
\end{equation}
where
\begin{equation}
\P^{L,\per}_{J_{i},\omega_{i}}(\eta_w \notin K\text{ for all }w\in\Lambda_{L}) 
=\frac{1}{\Xi_{J_{i},\omega_{i}}^{L,\per}}
\int_{(K^{c})^{\Lambda_{L}}} e^{-H^{L}_{J_{i}}(\eta)} \prod_{w\in\Lambda_{L}} \omega_{i}(\dd{\eta}_w).
\end{equation}
Bounding $H^{L}_{J_{i}}(\eta)\ge0$ and using the lower bound \eqref{eqn:partition-function-lower-bound} on $\Xi_{J_{i},\omega_{i}}^{L,\per}$, we conclude that
\begin{equation}
\label{eqn:torus-measures-tight-family_probability-bound}
\P^{L,\per}_{J_{i},\omega_{i}}(\eta_v \notin K)
\le \frac{\omega_{i}(K^{c})}{e^{-dJ_{i}\diam(S)^{2}}\omega_{i}(S)}
=\frac{\bar{\omega}_{i}(K^{c})}{e^{-dJ_{i}\diam(S)^{2}}\bar{\omega}_{i}(S)}
\end{equation}
for all measurable $S\subset\R^{n}$.
By the tightness of the family of probability measures $\set{\bar{\omega}_i\mid i\in I}$, we may choose $S$ to be a compact subset of $\R^n$ such that $\inf_{i\in I}\bar{\omega}_i(S)\ge 1/2$.
Next, fix a function $\delta:\Z^{d}\rightarrow\R_{>0}$ such that $\sum_{v\in\Z^{d}}\delta(v)\le 1$.
Using tightness again and recalling that the set $\set{J_i\mid i\in I}$ is bounded, we choose, for each $v\in\Z^d$, a compact set $K_{v,\epsilon}\subset\R^{n}$ such that
\begin{equation}
\label{eqn:torus-measures-tight-family_choice}
\bar{\omega}_{i}(K_{v,\epsilon}^c) 
\le \epsilon\delta(v)
e^{-d\diam(S)^{2}\sup_{i'\in I}J_{i'}}
\inf_{i'\in I}\bar{\omega}_{i'}(S),\quad\text{for all }i\in I.
\end{equation}
It remains to check that the family $(K_{v,\epsilon})_{v\in\Z^d}$ thus chosen satisfies \eqref{eqn:tightness-union-bound}, but this is an immediate consequence of \eqref{eqn:torus-measures-tight-family_probability-bound}, \eqref{eqn:torus-measures-tight-family_choice}, and the fact that $\sum_{v\in\Z^d}\delta(v)\le 1$.
\end{proof}

\begin{remark}
\label{rem:existence-torus-limit}
    It follows immediately from Lemma \ref{lem:torus-measures-tight-family} that every $(J,\omega)$-spin model admits at least one torus-limit Gibbs measure.
    In turn, this justifies our choice of $\P_\lambda$ as a torus-limit Gibbs measure of the $(J_\lambda,\omega_\lambda)$-spin model for each $\lambda\in[\lambda_-,\lambda_+]$, as made at the beginning of this subsection.
\end{remark}

We now deduce Proposition \ref{prop:torus-limit-measures-tight}.

\begin{proof}[Proof of Proposition \ref{prop:torus-limit-measures-tight}]
By Lemma \ref{lem:torus-measures-tight-family}, the set of torus Gibbs measures 
\begin{equation}
\mathcal{G}=\set{\P^{L,\per}_{J_{i},\omega_{i}}\mid L\ge 2\text{ even},i\in I}
\end{equation}
is tight, namely (using Prokhorov's theorem \cite[Theorem 5.1]{billingsley2013convergence}), its closure $\overline{\mathcal{G}}$ is compact.
Further, for each $i\in I$, $\P_{i}$ is the limit in distribution along a subsequence of the torus Gibbs measures $(\P^{k!,\per}_{J_{i},\omega_{i}})_{k\ge 1}$ by definition.
Thus, the set of torus-limit Gibbs measures $\set{\P_{i}\mid i\in I}$ is a subset of $\overline{\mathcal{G}}$. 
As $\overline{\mathcal{G}}$ is compact, $\set{\P_{i}\mid i\in I}$ has compact closure, hence a tight family of measures (again using Prokhorov's theorem \cite[Theorem 5.2]{billingsley2013convergence}).
\end{proof}

\subsubsection{Verification of key estimates}
\label{sec:Dobrushin-Shlosman_estimates}

We now verify the three estimates in the Dobrushin--Shlosman criterion (Theorem \ref{thm:DS theorem}) with the help of Assumption \ref{as:good regions}.

We start with \eqref{eta_out} of Item \ref{itm:all lambda}.

\begin{proposition}
\label{prop:Dobrushin-Shlosman_double-well}
Under Assumption \ref{as:good regions}, for all $\lambda\in[\lambda_{-},\lambda_{+}]$, if $\P_\lambda$ is a torus-limit Gibbs measure of the $(J_\lambda,\omega_\lambda)$-spin model, then
\begin{equation}
\P_{\lambda}(\eta_{0}\not\in G_{-}\cup G_{+})\le\theta_1.
\end{equation}
\end{proposition}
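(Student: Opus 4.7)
The plan is to apply the infinite-volume chessboard estimate (Corollary \ref{cor:infinite-volume-chessboard-estimate}) to the single-site observable
\[
f(\eta):=\indicator{(G_{-}\cup G_{+})^{c}}(\eta_{0}),
\]
taking $R=\{0\}$ (i.e., $\vec\ell=(0,\dots,0)$) and $A=\{\mathrm{id}\}$. The observable $f$ is $R$-local and bounded, and is lower semi-continuous because $G_{-}\cup G_{+}$ is closed, so its complement is open. The chessboard estimate then yields immediately
\[
\P_{\lambda}(\eta_{0}\notin G_{-}\cup G_{+})=\P_{\lambda}(f)\le \|f\|_{J_{\lambda},\omega_{\lambda}}^{R}.
\]
So the task reduces to showing that the right-hand side is bounded by $\tilde\omega_{\lambda}((G_{-}\cup G_{+})^{c})$, after which Assumption \ref{as:good regions}(1) finishes the proof.

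Next I would identify the finite-volume chessboard seminorm. With $\vec\ell=0$, the generators of $T^{R}_{L}$ are the reflections $\tau_{i,p}$ for all $i$ and all $p\in -\tfrac12+\Z$, whose action on $\Lambda_{L}$ is transitive (and, by \eqref{eq:size of T^R_L}, free, since $|T^{R}_{L}|=L^{d}$). Hence
\[
\prod_{\tau\in T^{R}_{L}}\tau f(\eta)=\indicator{\{\eta_{v}\notin G_{-}\cup G_{+}\ \forall v\in\Lambda_{L}\}}(\eta).
\]
Using that $H^{L}_{J_{\lambda}}\ge 0$ (since it is a sum of non-negative terms) to bound $e^{-H^{L}_{J_{\lambda}}}\le 1$, the numerator of the corresponding probability satisfies
\[
\int_{((G_{-}\cup G_{+})^{c})^{\Lambda_{L}}}e^{-H^{L}_{J_{\lambda}}(\eta)}\prod_{v}\omega_{\lambda}(\dd\eta_{v})\le \omega_{\lambda}((G_{-}\cup G_{+})^{c})^{L^{d}},
\]
and hence
\[
\|f\|_{J_{\lambda},\omega_{\lambda}}^{R|L}\le \omega_{\lambda}((G_{-}\cup G_{+})^{c})\cdot \bigl(\Xi^{L,\per}_{J_{\lambda},\omega_{\lambda}}\bigr)^{-1/L^{d}}.
\]

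Finally I would pass to the limit $L=k!\to\infty$. Taking $\limsup_{k\to\infty}$ of both sides and using that $\psi_{J_{\lambda},\omega_{\lambda}}=\liminf_{L\to\infty}L^{-d}\log\Xi^{L,\per}_{J_{\lambda},\omega_{\lambda}}$, so that $\liminf_{k}(\Xi^{k!,\per}_{J_{\lambda},\omega_{\lambda}})^{1/(k!)^{d}}\ge e^{\psi_{J_{\lambda},\omega_{\lambda}}}$, gives
\[
\|f\|_{J_{\lambda},\omega_{\lambda}}^{R}\le e^{-\psi_{J_{\lambda},\omega_{\lambda}}}\,\omega_{\lambda}((G_{-}\cup G_{+})^{c})=\tilde\omega_{\lambda}((G_{-}\cup G_{+})^{c})\le \theta_{1},
\]
where the last inequality is Assumption \ref{as:good regions}(1). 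Combining with the first display closes the proof.

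There is no real obstacle here: the argument is essentially a textbook single-site application of the chessboard bound. The only two points that need a moment of care are the lower semi-continuity of $f$ (which needs $G_-\cup G_+$ closed, ensured by hypothesis) and the careful handling of the $\limsup$/$\liminf$ when passing from the finite-volume chessboard norm to its infinite-volume counterpart along the subsequence $L=k!$.
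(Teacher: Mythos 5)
Your proof is correct and takes essentially the same approach as the paper: a single-site application of the infinite-volume chessboard estimate (Corollary \ref{cor:infinite-volume-chessboard-estimate}), bounding the numerator by $\omega_\lambda((G_-\cup G_+)^c)^{L^d}$ via $H\ge 0$, and the denominator via $\psi_{J_\lambda,\omega_\lambda}$. The only stylistic difference is that the paper handles the $\liminf$ in the definition of $\psi$ by introducing an auxiliary $\nu>0$ and letting $\nu\downarrow 0$ at the end, whereas you manipulate the $\limsup$/$\liminf$ along the subsequence $L=k!$ directly (which is valid, since a $\liminf$ along a subsequence is $\ge$ the $\liminf$ along the full sequence); these are equivalent.
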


\begin{proof}
Let $R:=\set{0}^{d}$ denote the origin of $\Z^{d}$. 
Consider the $R$-local event $E:=\set{\eta\in\Omega\mid \eta_{0}\in (G_{-}\cup G_{+})^{c}}$.
By Corollary \ref{cor:infinite-volume-chessboard-estimate}, 
\begin{equation}
\label{eqn:Dobrushin-Shlosman_double-well_chessboard-estimate}
\P_{\lambda}(E)
\le\limsup_{k\rightarrow\infty}\norm{E}^{R\mid k!}_{J_\lambda,\omega_\lambda},
\end{equation}
where the chessboard seminorm
\begin{equation}
\label{eqn:Dobrushin-Shlosman_double-well_chessboard-seminorm}
\norm{E}^{R\mid k!}_{J_\lambda,\omega_\lambda}
=\P^{k!,\per}_{J_{\lambda},\omega_{\lambda}}\Bigg(\bigcap_{\tau\in T^{R}_{k!}}\tau(E)\Bigg)^{1/(k!)^{d}}
\end{equation}
is well-defined for all $k\ge 2$.
To estimate the probability on the RHS of \eqref{eqn:Dobrushin-Shlosman_double-well_chessboard-seminorm}, we use the following simple bound on the partition function of the $(J_{\lambda},\omega_{\lambda})$-spin model in $\Lambda_{k!}$.
Let $\nu>0$.
By \eqref{eqn:spin-model_free-energy}, there exists $k_\nu\in\N$ such that, for all $k\ge k_\nu$,
\begin{equation}
\label{eqn:Dobrushin-Shlosman_double-well_partition-function-lower-bound}
    \frac{1}{(k!)^{d}}\log\Xi_{J_\lambda,\omega_\lambda}^{k!,\per}
    \ge\psi_{J_\lambda,\omega_\lambda}-\nu.
\end{equation}
On the other hand, the event $\bigcap_{\tau\in T^{R}_{k!}}\tau(E)$ consists of configurations with all spins in $(G_{-}\cup G_{+})^{c}$.
Thus,
\begin{equation}
\label{eqn:Dobrushin-Shlosman_double-well_final}
\P^{k!,\per}_{J_{\lambda},\omega_{\lambda}}\Bigg(\bigcap_{\tau\in T^{R}_{k!}}\tau(E)\Bigg)
\le\frac{\omega_{\lambda}((G_{-}\cup G_{+})^{c})^{
(k!)^{d}}}{e^{(\psi_{J_\lambda,\omega_\lambda}-\nu)(k!)^{d}}}
\le (e^\nu \theta_1)^{(k!)^d},
\end{equation}
where we used \eqref{eqn:good regions_double-well} of Assumption \ref{as:good regions} in the last inequality.
Inserting \eqref{eqn:Dobrushin-Shlosman_double-well_final} into \eqref{eqn:Dobrushin-Shlosman_double-well_chessboard-seminorm}, recalling \eqref{eqn:Dobrushin-Shlosman_double-well_chessboard-estimate}, and taking $\nu\rightarrow0$ complete the proof.
\end{proof}

Next, we verify Item \ref{itm:DS theorem_endpoints}.

\begin{proposition}
\label{prop:endpoint}
Under Assumption \ref{as:good regions}, for $\#\in\set{-,+}$, if $\P_{\lambda_\#}$ is a torus-limit Gibbs measure of the $(J_{\lambda_\#},\omega_{\lambda_\#})$-spin model, then
\begin{equation}
\label{eqn:endpoint}
\P_{\lambda_{\#}}(\eta_{0}\in G_{\#})\ge1-\theta_3.
\end{equation}
\end{proposition}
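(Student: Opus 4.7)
The statement is the natural analogue of Proposition~\ref{prop:Dobrushin-Shlosman_double-well}, but using the third part of Assumption~\ref{as:good regions} (the ``dominance at the endpoints'' bound~\eqref{eqn:good regions_endpoints}) in place of the first. Accordingly, my plan is to copy the strategy of that proposition almost verbatim, with the event $\{\eta_0\notin G_-\cup G_+\}$ replaced by $\{\eta_0\notin G_\#\}$ and the bound $\theta_1$ replaced by $\theta_3$, carried out only at the specific value $\lambda=\lambda_\#$.

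Concretely, I would fix $\#\in\{-,+\}$, take $R:=\{0\}^d$, and consider the $R$-local event $E:=\{\eta\mid \eta_0\notin G_\#\}$. Applying Corollary~\ref{cor:infinite-volume-chessboard-estimate} to $E$ (which is closed, since $G_\#^c$ is open, so $\indicator{E}$ is lower semi-continuous), I obtain
\begin{equation*}
\P_{\lambda_\#}(\eta_0\notin G_\#)=\P_{\lambda_\#}(E)\le\limsup_{k\to\infty}\|E\|^{R\mid k!}_{J_{\lambda_\#},\omega_{\lambda_\#}},
\end{equation*}
with
\begin{equation*}
\|E\|^{R\mid k!}_{J_{\lambda_\#},\omega_{\lambda_\#}}=\P^{k!,\per}_{J_{\lambda_\#},\omega_{\lambda_\#}}\Bigg(\bigcap_{\tau\in T^R_{k!}}\tau(E)\Bigg)^{1/(k!)^d}.
\end{equation*}

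The intersection $\bigcap_{\tau\in T^R_{k!}}\tau(E)$ is exactly the event that every spin in $\Lambda_{k!}$ lies outside $G_\#$. Using $H^{k!}_{J_{\lambda_\#}}\ge 0$ in the numerator and, for any $\nu>0$, the lower bound $\Xi^{k!,\per}_{J_{\lambda_\#},\omega_{\lambda_\#}}\ge e^{(\psi_{J_{\lambda_\#},\omega_{\lambda_\#}}-\nu)(k!)^d}$ valid for all sufficiently large $k$ (by definition of $\psi$ in~\eqref{eqn:spin-model_free-energy}), I get
\begin{equation*}
\P^{k!,\per}_{J_{\lambda_\#},\omega_{\lambda_\#}}\Bigg(\bigcap_{\tau\in T^R_{k!}}\tau(E)\Bigg)\le\frac{\omega_{\lambda_\#}(G_\#^c)^{(k!)^d}}{e^{(\psi_{J_{\lambda_\#},\omega_{\lambda_\#}}-\nu)(k!)^d}}=\bigl(e^\nu\tilde\omega_{\lambda_\#}(G_\#^c)\bigr)^{(k!)^d}\le (e^\nu\theta_3)^{(k!)^d},
\end{equation*}
where the last inequality is the endpoint bound~\eqref{eqn:good regions_endpoints} in Assumption~\ref{as:good regions}. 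Taking the $(k!)^d$-th root, then $k\to\infty$, then $\nu\downarrow 0$, yields $\P_{\lambda_\#}(\eta_0\notin G_\#)\le\theta_3$, which is~\eqref{eqn:endpoint}.

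There is really no obstacle here beyond the careful bookkeeping already carried out in Proposition~\ref{prop:Dobrushin-Shlosman_double-well}; the only small thing to check is that the indicator $\indicator{E}$ is lower semi-continuous, which holds because $G_\#$ was assumed closed in Assumption~\ref{as:good regions}, so $E=\{\eta_0\in G_\#^c\}$ is open in the product topology and $\indicator{E}$ is lower semi-continuous, as required for Corollary~\ref{cor:infinite-volume-chessboard-estimate}.
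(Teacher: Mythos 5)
Your proof is correct and follows exactly the paper's own argument, which simply adapts Proposition~\ref{prop:Dobrushin-Shlosman_double-well} by replacing $(G_-\cup G_+)^c$ with $G_\#^c$ and invoking~\eqref{eqn:good regions_endpoints} in place of~\eqref{eqn:good regions_double-well}. One small slip: in your second paragraph you write that $E$ is ``closed, since $G_\#^c$ is open''---you mean $E$ is \emph{open}, as you correctly state in the final paragraph.
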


\begin{proof}
The proof is analogous to that of Proposition \ref{prop:Dobrushin-Shlosman_double-well}: replacing $(G_{-}\cup G_{+})^{c}$ by $G_{\#}^{c}$ and using \eqref{eqn:good regions_endpoints} of Assumption \ref{as:good regions}, we get that
\begin{equation}
\P_{\lambda_{\#}}(\eta_{0}\in G_{\#}^{c})\le\theta_3,
\end{equation}
which immediately implies \eqref{eqn:endpoint}.
\end{proof}

Finally, we verify \eqref{eta_+-} of Item \ref{itm:all lambda}.

\begin{proposition}
\label{prop:Dobrushin-Shlosman_interface}
For all $\delta_2>0$, there exist constants $\theta_1(\delta_2),\theta_2(\delta_2)>0$ such that, if Assumption \ref{as:good regions} holds with $0<\theta_i\le\theta_i(\delta_2)$, $i=1,2$, then, for all $\lambda\in[\lambda_{-},\lambda_{+}]$, torus-limit Gibbs measure $\P_\lambda$ of the $(J_\lambda,\omega_\lambda)$-spin model, and $v,w\in\Z^{d}$, 
\begin{equation}
\P_{\lambda}(\eta_{v}\in G_{-},\eta_{w}\in G_{+})\le\delta_2.
\end{equation}
\end{proposition}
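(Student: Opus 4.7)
My plan is to combine the chessboard estimate with a Peierls-type contour argument: I would first bound the single-edge interface event directly via chessboard, and then extend to arbitrary $v,w$ by summing over ``dual surfaces'' of bad edges separating $v$ from $w$.

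For the first step, I would apply Corollary~\ref{cor:infinite-volume-chessboard-estimate} with $R := \{0, e_1\}$ to the $R$-local event $E := \{\eta_0 \in G_-, \eta_{e_1} \in G_+\}$. Arguing as in the proof of Proposition~\ref{prop:Dobrushin-Shlosman_double-well}, the reflected event $\bigcap_{\tau \in T^R_{k!}} \tau E$ forces the period-$4$ pattern $\ldots, G_-, G_+, G_+, G_-, G_-, G_+, \ldots$ along the $e_1$-axis (and is constant in every orthogonal direction), so that exactly $(k!)^d/2$ of the $e_1$-edges become $G_-/G_+$ interfaces. Each such interface contributes a factor at most $e^{-J_\lambda \operatorname{dist}(G_-,G_+)^2}$ to the Boltzmann weight, while the sites of the constrained configuration split equally between $G_-$ and $G_+$. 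Combining this with the partition function lower bound~\eqref{eqn:Dobrushin-Shlosman_double-well_partition-function-lower-bound} and Assumption~\ref{as:good regions}\eqref{eqn:good regions_separation} yields $\|E\|^R_{J_\lambda,\omega_\lambda} \le \theta_2^2$, hence $\P_\lambda(\eta_0 \in G_-, \eta_{e_1} \in G_+) \le \theta_2^2$. A parallel chessboard computation with $G_-$ replaced by $B := (G_-\cup G_+)^c$ (and without the exponential separation factor) bounds the ``$B$ next to $G_+$'' edge event by a quantity tending to $0$ with $\theta_1$.

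For the second step, I would use a Peierls contour argument. If $\eta_v \in G_-$ and $\eta_w \in G_+$, consider the connected component $C$ of $w$ in the subgraph of $\Z^d$ induced by $\{u : \eta_u \in G_+\}$; since $\eta_v \in G_-$, $v \notin C$. The edge boundary $\Sigma$ of $C$ is then a $*$-connected dual surface separating $v$ from $w$ (or from infinity), each of whose edges $\{a, b\}$ has $\eta_a \in G_+$ and $\eta_b \in G_-\cup B$. The standard combinatorial estimate bounds the number of such surfaces of size $\ell$ separating a fixed pair of points by $C_d^\ell$. Combining this count with the chessboard-based per-edge bound (of order $\theta_2^2$ for $G_-/G_+$ edges and $O(\sqrt{\theta_1})$ for $B/G_+$ edges), multiplicatively across $\Sigma$, yields
\[
\P_\lambda(\eta_v \in G_-, \eta_w \in G_+) \le \sum_{\ell \ge 2d} \bigl(C_d \cdot c(\theta_1, \theta_2)\bigr)^\ell,
\]
where $c(\theta_1, \theta_2) \to 0$ as $\theta_1, \theta_2 \downarrow 0$. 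Choosing $\theta_1(\delta_2), \theta_2(\delta_2)$ small enough so that this geometric sum is at most $\delta_2$ then completes the proof, uniformly in $v, w, \lambda$.

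The main obstacle I expect is justifying the multiplicative bound over bad edges of a contour in the second step. The chessboard seminorm is naturally sub-multiplicative when the events live on reflection-compatible translates of a single tile, but a general $*$-connected contour $\Sigma$ is not of this form; following the standard reflection-positivity literature (e.g.~\cite{biskup2009reflection, friedli2017statistical}), this is overcome by covering $\Sigma$ with edges lying in finitely many reflection-compatible sublattices and using H\"older's inequality to reduce to single-edge chessboard norms. A further complication specific to the unbounded-support $(J,\omega)$-spin model is that the chessboard norm of a ``$B/G_+$ edge'' event a priori involves the quantity $\tilde\omega_\lambda(G_+)$, which is not controlled by the assumptions; this must be handled by first bounding $\tilde\omega_\lambda$ on a suitable compact approximation using the lower bound~\eqref{eqn:f-upper-bound} on $\psi_{J_\lambda, \omega_\lambda}$. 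Finally, the case where $C$ is infinite (so $\Sigma$ is infinite) requires either working with the $G_-$-component of $v$ instead, or truncating $\Sigma$ to a finite separating surface in a large box.
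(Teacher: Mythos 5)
Your approach is essentially the paper's: bound the chessboard seminorm of the bad-edge events (both the $G_-/G_+$ interface edge and the ``$B$-adjacent'' edge), then run a standard chessboard--Peierls contour sum, which the paper itself also does not spell out, referring instead to the literature. The key seminorm bounds you derive---$\theta_2^2$ for the $G_-/G_+$ edge via the separation assumption, and a bound controlled by $\theta_1$ for edges touching $B=(G_-\cup G_+)^c$---match the paper's two claims.

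One genuine technical omission: you apply Corollary~\ref{cor:infinite-volume-chessboard-estimate} directly to the event $\{\eta_0\in G_-,\eta_{e_1}\in G_+\}$, but $G_\pm$ are closed, so this event is closed, not open, and its indicator is \emph{upper} semi-continuous rather than the \emph{lower} semi-continuous required by the corollary (which passes to an infinite-volume torus limit via the Portmanteau theorem). The paper resolves this by approximating $E^{-,+}_{v_1,v_2}$ from outside by the decreasing sequence of open events $\{\eta_{v_1}\in O_j(G_-),\eta_{v_2}\in O_j(G_+)\}$ built from open $2^{-j}$-neighborhoods, then taking $j\to\infty$ in the seminorm bound. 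You would need some version of this to make your first step rigorous. On the other hand, your worry about the seminorm of the ``$B/G_+$'' edge event involving $\tilde\omega_\lambda(G_+)$ is unfounded: as in the paper, that edge event is contained in the single-vertex event $\{\eta_{v_2}\in(G_-\cup G_+)^c\}$, and the monotonicity of the chessboard seminorm reduces its bound to the already-established $\theta_1$ estimate from Proposition~\ref{prop:Dobrushin-Shlosman_double-well}, with no reference to $\tilde\omega_\lambda(G_+)$ at all.
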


\begin{remark}
    It is possible to extract quantitative estimates for the constants $\theta_i(\delta_2)$, $i=1,2$, as promised by the proposition from a fully explicit chessboard-Peierls argument, in terms of the smallness of certain series.
    We do not attempt this here.
\end{remark}

\begin{proof}
The proof is by a standard chessboard-Peierls argument, which we do not belabor here and refer the reader to the literature \cite{frohlich1978phase} for detailed implementations.
Intuitively, the conditions $\eta_v\in G_-$, $\eta_w\in G_+$ imply the existence of geometric interfaces in $\Z^d$, consisting of edges connecting vertices which are costly neighbors due to their spin values.
Of particular relevance to the chessboard-Peierls argument are the set of boundary edges of the connected component of $\set{u\in\Z^d\mid\eta_u\in G_-}$ containing $v$, where each edge connects a spin in $G_-$ and another in $G_-^c=G_+\cup(G_-\cup G_+)^c$, and an analogous set of edges for $w$.
The probability of observing these costly edge events simultaneously is then controlled using the chessboard estimate.

In our case, we consider the following edge events.
Let $R:=\set{v_1,v_2}$ be an edge of $\Z^d$.
Define
\begin{align}
\label{eqn:Dobrushin-Shlosman_interface_edge-event}
E_{v_1,v_2}^{-,+}&:=\set{\eta\in\Omega\mid\eta_{v_1}\in G_{-},\eta_{v_2}\in G_{+}},\\
E_{v_1,v_2}^{-,0}&:=\set{\eta\in\Omega\mid\eta_{v_1}\in G_{-},\eta_{v_2}\in (G_{-}\cup G_{+})^{c}},
\end{align}
and similarly $E_{v_1,v_2}^{+,-}$ and $E_{v_1,v_2}^{+,0}$.
In view of the above outline of the chessboard-Peierls argument, to prove the proposition, it suffices to show that the chessboard seminorm of each of the four edge events can be made arbitrarily small by taking $\delta$ correspondingly small.
We establish this for $E_{v_1,v_2}^{-,+}$ and $E_{v_1,v_2}^{-,0}$ in individual claims below, and note that symmetric arguments yield identical bounds for $E_{v_1,v_2}^{+,-}$ and $E_{v_1,v_2}^{+,0}$.

We start with $E_{v_1,v_2}^{-,0}$ which is simpler to deal with.

\begin{claim}
Under Assumption \ref{as:good regions}, for all $\lambda\in[\lambda_-,\lambda_+]$,
\begin{equation}
\label{eqn:Dobrushin-Shlosman_interface_edge-event_-,0 probability}
\norm{E^{-,0}_{v_1,v_2}}^R_{J_\lambda,\omega_\lambda}\le\theta_1.
\end{equation}
\end{claim}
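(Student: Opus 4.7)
The plan is to mirror the proof of Proposition~\ref{prop:Dobrushin-Shlosman_double-well} closely; the only new ingredient is the two-orbit structure of the edge chessboard group. First, I would invoke Corollary~\ref{cor:infinite-volume-chessboard-estimate} to write
\[
\|E^{-,0}_{v_1,v_2}\|^R_{J_\lambda,\omega_\lambda}=\limsup_{k\to\infty}\Bigl[\P^{k!,\per}_{J_\lambda,\omega_\lambda}\Bigl(\bigcap_{\tau\in T^R_{k!}}\tau E^{-,0}_{v_1,v_2}\Bigr)\Bigr]^{1/|T^R_{k!}|},\quad |T^R_{k!}|=(k!)^d/2,
\]
reducing the task to a calculation on the finite torus $\Lambda_{k!}$ for large $k$.

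Next, I would analyze the action of $T^R_{k!}$ on $\Lambda_{k!}$. For the edge $R=\{v_1,v_2\}$ in some direction $i$, the generating reflections (every half-integer in the $d-1$ transverse directions, and every even half-integer shift in direction $i$) act on $\Lambda_{k!}$ with exactly two orbits, $\operatorname{orb}(v_1)$ and $\operatorname{orb}(v_2)$, each of cardinality $|T^R_{k!}|$, which together partition $\Lambda_{k!}$. The reflected event $\bigcap_\tau\tau E^{-,0}_{v_1,v_2}$ is accordingly the event ``$\eta_u\in G_-$ for all $u\in\operatorname{orb}(v_1)$ and $\eta_u\in(G_-\cup G_+)^c$ for all $u\in\operatorname{orb}(v_2)$''. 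Bounding $e^{-H}\le 1$ in the numerator and using the asymptotic lower bound $\Xi^{k!,\per}_{J_\lambda,\omega_\lambda}\ge e^{(\psi_{J_\lambda,\omega_\lambda}-\nu)(k!)^d}$ (valid for any $\nu>0$ and every sufficiently large $k$) in the denominator, then raising to the $2/(k!)^d$-th power and letting $k\to\infty$ and $\nu\to 0$, I would arrive at
\[
\|E^{-,0}_{v_1,v_2}\|^R_{J_\lambda,\omega_\lambda}\le\tilde\omega_\lambda(G_-)\,\tilde\omega_\lambda((G_-\cup G_+)^c).
\]
The second factor is bounded by $\theta_1$ via~\eqref{eqn:good regions_double-well}.

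The main obstacle, relative to Proposition~\ref{prop:Dobrushin-Shlosman_double-well}, is to dispose of the leftover factor $\tilde\omega_\lambda(G_-)$, which did not appear there because the single-vertex chessboard constrained every site, whereas the edge chessboard constrains only half of the sites and leaves $\omega_\lambda(G_-)^{(k!)^d/2}$ unmatched against $\Xi^{k!,\per}$. I would control it by sharpening the lower bound on $\Xi^{k!,\per}$---for example, restricting the integration to configurations lying in $G_-$ on $\operatorname{orb}(v_1)$ and exploiting the Gaussian decay of the $J_\lambda\|\eta_v-\eta_w\|^2$ couplings to integrate out $\operatorname{orb}(v_2)$, producing a cancelling $\omega_\lambda(G_-)^{(k!)^d/2}$ factor at a controlled energetic cost depending only on $\diam(G_-)$ and $J_\lambda$---and absorbing the resulting prefactor into the constants, compatibly with the downstream use in Proposition~\ref{prop:Dobrushin-Shlosman_interface}, where $\theta_1$ may be taken arbitrarily small.
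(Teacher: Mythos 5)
Your intermediate calculation is correct: the fully reflected event forces $\eta_u\in G_-$ on $\operatorname{orb}(v_1)$ and $\eta_u\in(G_-\cup G_+)^c$ on $\operatorname{orb}(v_2)$, and the crude bound $e^{-H}\le 1$ together with the asymptotic lower bound on $\Xi^{k!,\per}$ gives $\|E^{-,0}_{v_1,v_2}\|^R_{J_\lambda,\omega_\lambda}\le\tilde\omega_\lambda(G_-)\,\tilde\omega_\lambda((G_-\cup G_+)^c)$. The obstacle you identify is real, and your proposed fix does not close the gap. Note that $\tilde\omega_\lambda(\R^n)=e^{-\psi_{J_\lambda,\omega_\lambda}}\omega_\lambda(\R^n)\ge 1$ (from $\Xi^{L,\per}\le\omega_\lambda(\R^n)^{L^d}$), so $\tilde\omega_\lambda(G_-)$ can be strictly larger than $1$, and Assumption~\ref{as:good regions} provides no bound on it. Your proposed sharpened lower bound on $\Xi^{k!,\per}$ also runs into trouble: in $d\ge 2$ the two orbits do \emph{not} bipartition the edges of the torus (e.g.\ the transverse edges stay inside one orbit, and even in direction $i$ you get edges like $\{3,4\}$ inside $\operatorname{orb}(v_1)$), so ``integrating out $\operatorname{orb}(v_2)$'' does not decouple cleanly; moreover integrating an unconstrained site over $\omega_\lambda$ against the Gaussian factors produces a contribution that need not be bounded below, since $\omega_\lambda$ could put its mass far from $G_-$. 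Finally, even granting your heuristic, the outcome would be a bound of the form $C\theta_1$ rather than $\theta_1$, whereas the claim as stated asserts the sharp constant $\theta_1$.

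The paper avoids all of this by never constraining $\operatorname{orb}(v_1)$. First one relaxes $E^{-,0}_{v_1,v_2}\subset E':=\{\eta_{v_2}\in(G_-\cup G_+)^c\}$, which gives $\|E^{-,0}_{v_1,v_2}\|^R\le\|E'\|^R$ by the obvious pointwise monotonicity of the seminorm. The key step you are missing is a second, \emph{box} monotonicity: if $R'=\{v_2\}\subset R$ and $E'$ is $R'$-local, then $\|E'\|^{R\mid L}\le\|E'\|^{R'\mid L}$. This is not tautological (the first side uses fewer constraints but a larger exponent), but it follows directly from the chessboard estimate (Lemma~\ref{lem:chessboard}) applied with $A=T^R_L\subset T^{R'}_L$ and $f_\tau\equiv\indicator{E'}$, which gives $\mathbb P^{L,\per}(\bigcap_{\tau\in T^R_L}\tau E')\le(\|E'\|^{R'\mid L})^{|T^R_L|}$. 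Combining the two monotonicities yields $\|E^{-,0}_{v_1,v_2}\|^R\le\|E'\|^{R'}\le\theta_1$, where the last inequality is exactly the computation already done in the proof of Proposition~\ref{prop:Dobrushin-Shlosman_double-well}, with no extra factor to dispose of.
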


\begin{proof}
Observe that the edge event $E^{-,0}_{v_1,v_2}$ is contained in the single-vertex event that $\eta_{v_2}\in(G_-\cup G_+)^c$.
Recall from the proof of Proposition \ref{prop:Dobrushin-Shlosman_double-well} that the chessboard seminorm of the latter is bounded by $\delta$ (the proof there is written for when $v_2$ is the origin, but this is inconsequential).
The monotonicity of the chessboard seminorm then implies \eqref{eqn:Dobrushin-Shlosman_interface_edge-event_-,0 probability}.
\end{proof}

The treatment of $E_{v_1,v_2}^{-,+}$ is complicated by the fact that it is not an open event, i.e., an open subset of $\Omega$, so its indicator function does not fulfill the lower semi-continuity requirement of the chessboard estimate in the limit (Corollary \ref{cor:infinite-volume-chessboard-estimate}).
To overcome this technical nuisance, we construct below a sequence of open events containing $E_{v_1,v_2}^{-,+}$ satisfying suitable properties.

\begin{claim}
\label{clm:Dobrushin-Shlosman_interface_edge-event_-,+}
There exists a decreasing sequence of open events $(E_{v_1,v_2;j}^{-,+})_{j\ge 1}$ such that
\begin{enumerate}
    \item $\bigcap_{j=1}^\infty E_{v_1,v_2;j}^{-,+}=E_{v_1,v_2}^{-,+}$; \label{itm:Dobrushin-Shlosman_interface_edge-event_-,+_intersection}
    \item under Assumption \ref{as:good regions}, for all $\lambda\in[\lambda_-,\lambda_+]$, 
    \begin{equation}
        \limsup_{j\rightarrow\infty}\norm{E_{v_1,v_2;j}^{-,+}}^R_{J_\lambda,\omega_\lambda}\le\theta_2^2.
    \end{equation}
    \label{itm:Dobrushin-Shlosman_interface_edge-event_-,+_seminorm}
\end{enumerate}
\end{claim}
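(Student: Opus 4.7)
The plan is to approximate $E^{-,+}_{v_1,v_2}$ by suitable open cylinder events, which is necessary because the chessboard estimate in the limit (Corollary~\ref{cor:infinite-volume-chessboard-estimate}) requires lower semi-continuity, and $\indicator{E^{-,+}_{v_1,v_2}}$ is only upper semi-continuous since $G_\pm$ are closed. Concretely, for each $j\ge 1$, let $G_\#^{(j)}:=\set{x\in\R^n\mid \dist(x,G_\#)<1/j}$ for $\#\in\set{-,+}$ (open, as a strict sublevel set of the 1-Lipschitz function $\dist(\cdot,G_\#)$), and set
\[
E^{-,+}_{v_1,v_2;j}:=\set{\eta\mid\eta_{v_1}\in G_-^{(j)},\ \eta_{v_2}\in G_+^{(j)}}.
\]
Each $E^{-,+}_{v_1,v_2;j}$ is open in the product topology (intersection of two cylinder sets with open bases), the sequence is decreasing, and the closedness of $G_\pm$ gives $\bigcap_j G_\#^{(j)}=G_\#$, hence $\bigcap_j E^{-,+}_{v_1,v_2;j}=E^{-,+}_{v_1,v_2}$. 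This confirms Item~1.

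For Item~2, I would compute the finite-volume chessboard seminorm $\norm{E^{-,+}_{v_1,v_2;j}}^{R\mid k!}_{J_\lambda,\omega_\lambda}$ for $R=\set{v_1,v_2}$ (so $\vec\ell=(1,0,\ldots,0)$, $|T^R_{k!}|=(k!)^d/2$) using the chessboard estimate. The key observation concerns the orbit structure of $T^R_{k!}$: in direction $i\ne 1$ the reflection hyperplanes have period $1$ (so the corresponding direction-$i$ coordinate is irrelevant to the orbit), while in direction $1$ the hyperplanes have period $2$, yielding (after composing pairs of reflections, which produce translations by multiples of $4$) two orbits that partition $\Lambda_{k!}$ into sublattices of equal size $(k!)^d/2$, determined by the direction-$1$ coordinate modulo $4$. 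On $\bigcap_{\tau\in T^R_{k!}}\tau E^{-,+}_{v_1,v_2;j}$, the $v_1$-orbit is forced into $G_-^{(j)}$ and the $v_2$-orbit into $G_+^{(j)}$. Of the $(k!)^d$ direction-$1$ edges, exactly half connect the two orbits (the remaining direction-$1$ edges and all edges in directions $i\ne 1$ are intra-orbit). Dropping the non-negative intra-orbit contributions and using $J_\lambda\norm{\eta_v-\eta_w}^2\ge J_\lambda\dist(G_-^{(j)},G_+^{(j)})^2$ on each of the $(k!)^d/2$ inter-orbit edges gives
\[
\Xi^{k!,\per}_{J_\lambda,\omega_\lambda}\,\P^{k!,\per}_{J_\lambda,\omega_\lambda}\Bigl(\bigcap_{\tau\in T^R_{k!}}\tau E^{-,+}_{v_1,v_2;j}\Bigr)\le e^{-\frac12(k!)^dJ_\lambda\dist(G_-^{(j)},G_+^{(j)})^2}\omega_\lambda(G_-^{(j)})^{(k!)^d/2}\omega_\lambda(G_+^{(j)})^{(k!)^d/2}.
\]
Taking the $1/|T^R_{k!}|=2/(k!)^d$-th root, using $(k!)^{-d}\log\Xi^{k!,\per}_{J_\lambda,\omega_\lambda}\to\psi_{J_\lambda,\omega_\lambda}$, and recalling $\tilde\omega_\lambda=e^{-\psi_{J_\lambda,\omega_\lambda}}\omega_\lambda$, I conclude
\[
\norm{E^{-,+}_{v_1,v_2;j}}^R_{J_\lambda,\omega_\lambda}\le e^{-J_\lambda\dist(G_-^{(j)},G_+^{(j)})^2}\tilde\omega_\lambda(G_-^{(j)})\tilde\omega_\lambda(G_+^{(j)}).
\]

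Finally, I let $j\to\infty$. Since $\omega_\lambda$ is a finite Borel measure and $G_\#^{(j)}\downarrow G_\#$, continuity of measure from above gives $\tilde\omega_\lambda(G_\#^{(j)})\to\tilde\omega_\lambda(G_\#)$; and the triangle inequality yields $\dist(G_-^{(j)},G_+^{(j)})\ge\dist(G_-,G_+)-2/j$ (with the opposite direction trivial since $G_\#\subseteq G_\#^{(j)}$), so $\dist(G_-^{(j)},G_+^{(j)})\to\dist(G_-,G_+)$. Combining these with~\eqref{eqn:good regions_separation} (squared) yields
\[
\limsup_{j\to\infty}\norm{E^{-,+}_{v_1,v_2;j}}^R_{J_\lambda,\omega_\lambda}\le e^{-J_\lambda\dist(G_-,G_+)^2}\tilde\omega_\lambda(G_-)\tilde\omega_\lambda(G_+)\le\theta_2^2,
\]
giving Item~2. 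The main delicate point is the determination of the $T^R_{k!}$-orbits and the count of inter- versus intra-orbit direction-$1$ edges, which is dictated by the $(\vec\ell_i+1)$-periodic reflection hyperplanes; the remaining steps (exchange of limits, continuity of finite measures, application of the separation hypothesis) are routine.
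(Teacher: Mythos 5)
Your proof is correct and follows essentially the same route as the paper: approximate $E^{-,+}_{v_1,v_2}$ from outside by open cylinder events built from shrinking open neighborhoods of $G_\pm$ (you use metric $1/j$-neighborhoods, the paper uses Minkowski sums $G_\#+(-2^{-j},2^{-j})^n$; both decrease to $G_\#$), then bound the chessboard seminorm by exploiting that the event forces the two $T^R_{k!}$-sublattices into $G_-^{(j)}$ and $G_+^{(j)}$ respectively, with $(k!)^d/2$ inter-orbit edges each paying at least $J_\lambda\dist^2$, and finally pass $j\to\infty$ via continuity of finite measures from above and the distance convergence. Your orbit-counting, normalization by the partition-function lower bound, and the final appeal to \eqref{eqn:good regions_separation} all match the paper's argument; if anything, you spell out the limit $j\to\infty$ (continuity from above, $\dist(G_-^{(j)},G_+^{(j)})\to\dist(G_-,G_+)$) in more detail than the paper, which compresses this into a citation of \eqref{eq:intersection of extensions}.
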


\begin{proof}
For each $j\ge 1$, define the open $2^{-j}$-extensions of a set $A\subset\R^n$ by
\begin{equation}
    O_j(A):=A+(-2^{-j},2^{-j})^n,
\end{equation}
where $+$ denotes the Minkowski sum.
Observe that if $A$ is closed, then
\begin{equation}\label{eq:intersection of extensions}
    A=\bigcap_{j=1}^\infty O_{j}(A).
\end{equation}
Using the above notation, define, for each $j\ge 1$,
\begin{equation}
\label{eqn:-+bound1}
    E_{v_1,v_2;j}^{-,+}:=\set{\eta\in\Omega\mid\eta_{v_1}\in O_j(G_{-}),\eta_{v_2}\in O_{j}(G_{+})},
\end{equation}
which forms a decreasing sequence of open events.
Property \ref{itm:Dobrushin-Shlosman_interface_edge-event_-,+_intersection} immediately follows from \eqref{eq:intersection of extensions}.
To prove Property \ref{itm:Dobrushin-Shlosman_interface_edge-event_-,+_seminorm}, fix $j\ge 1$ and consider the chessboard seminorm
\begin{equation}\label{eq:chessboard norm bound on kappa events}
    \norm{E_{v_1,v_2;j}^{-,+}}^R_{J_\lambda,\omega_\lambda}
    =\limsup_{k\rightarrow\infty}\norm{E_{v_1,v_2;j}^{-,+}}^{R\mid k!}_{J_\lambda,\omega_\lambda}
    =\limsup_{k\rightarrow\infty}\P^{k!,\per}_{J_{\lambda},\omega_{\lambda}}\Bigg(\bigcap_{\tau\in T^{R}_{k!}}\tau(E^{-,+}_{v_1,v_2;j})\Bigg)^{2/(k!)^{d}}
\end{equation}
Note that the probability on the RHS of \eqref{eq:chessboard norm bound on kappa events} is well-defined for all $k\ge 4$.
To bound this probability, let $\nu>0$ be arbitrary and recall the lower bound \eqref{eqn:Dobrushin-Shlosman_double-well_partition-function-lower-bound} on the partition function of the $(J_\lambda,\omega_\lambda)$-spin model in $\Lambda_{k!}$, which holds for all $k\ge k_\nu$ for some $k_\nu\ge 4$.
Taking into account that the constraint $\bigcap_{\tau\in T^{R}_{k!}}\tau(E^{-,+}_{v_1,v_2;j})$ forces exactly $1/2d$ of the edges in $\Lambda_{k!}$ to connect a spin in $O_j(G_{-})$ to another in $O_j(G_{+})$, we get that
\begin{equation}
    \P^{k!,\per}_{J_{\lambda},\omega_{\lambda}}\Bigg(\bigcap_{\tau\in T^{R}_{k!}}\tau(E^{-,+}_{v_1,v_2;j})\Bigg)^{2/(k!)^{d}}
    \le
    \frac{\omega_{\lambda}(O_j(G_{-}))\omega_{\lambda}(O_j(G_{+}))e^{-J_{\lambda}\dist(O_{j}(G_{-}),O_{j}(G_{+}))^{2}}}
    {e^{2(\psi_{J_\lambda,\omega_\lambda}-\nu)}}
\end{equation}
for all $k\ge k_\nu$.
Taking the limit superior $k\to\infty$ and then $\nu\to0$, we conclude that 
\begin{equation}
    \norm{E_{v_1,v_2;j}^{-,+}}^R_{J_\lambda,\omega_\lambda}
    \le
    \frac{\omega_{\lambda}(O_j(G_{-}))\omega_{\lambda}(O_j(G_{+}))e^{-J_{\lambda}\dist(O_{j}(G_{-}),O_{j}(G_{+}))^{2}}}
    {e^{2\psi_{J_\lambda,\omega_\lambda}}}.
\end{equation}
Finally, using again \eqref{eq:intersection of extensions} for the closed sets $G_-,G_+$ and using our assumption~\eqref{eqn:good regions_separation}, we deduce Property \ref{itm:Dobrushin-Shlosman_interface_edge-event_-,+_seminorm}.
\end{proof}
\end{proof}

\section{Phase co-existence in the box model}\label{sec:phase co-existence in box model}

In this section, we derive Theorems~\ref{thm:main} and \ref{thm:detail} as consequences of Theorem \ref{thm:phase-transition}.

Our convergence assumption describes four possible behaviors for $f_{\gamma}$ and the limiting $f$ (discrete vs.\ continuous and hard-core vs.\ soft-core). Our proofs will be streamlined to apply uniformly to all these cases, as much as possible. In the discrete case, for convenience in using the convergence assumption, we extend $f_{\gamma}$ from $S_\gamma=\gamma^d\Z\cap[0,\infty)$ to $[0,\infty)$ by a linear interpolation.

Throughout the section we fix $\beta>0$ and $0<\alpha<\alpha_{\max}$ such that the mean-field free energy $\phi_{\lambda}$ (see~\eqref{eq:mean-field free energy density}) is non-convex, as in the assumption of Theorem~\ref{thm:main}. The coupling constant $J_2>0$ and the dimension $d\ge 2$ are also held fixed.

\subsection{Deduction of Theorems~\ref{thm:main} and \ref{thm:detail}}
\label{sec:deduction of main theorem}

\paragraph{Translating to the $(J,\omega)$-spin model language} 

Let $0<\gamma\le 1$. 
To apply Theorem~\ref{thm:phase-transition}, we recall from Section \ref{sec:continuous family of spin models} the expression of the box model as a continuous family $(J_{\lambda,\gamma},\omega_{\lambda,\gamma})$ of $(J,\omega)$-spin models, indexed by the chemical potential $\lambda\in\R$:
\begin{align}
    J_{\lambda,\gamma}{}&=\frac{1}{2}J_2\beta\gamma^{-d},
    \\
    \dd{\omega}_{\lambda,\gamma}(\rho){}&=e^{-\beta\gamma^{-d}\phi_{\lambda,\gamma}(\rho)}\dd{\nu}_\gamma(\rho),
\end{align}
where the reference measure $\nu_\gamma$ was defined after~\eqref{eq:box model finite-volume Gibbs measure periodic} and we introduced the shorthand
\begin{equation}
    \phi_{\lambda,\gamma}(\rho):=-\lambda\rho-\frac{1}{2}\alpha\rho^2+f_{\gamma}(\rho).
\end{equation}
 
Our assumptions imply that $\omega_{\lambda,\gamma}(\R)\in(0,\infty)$ for sufficiently small $\gamma$ as required for $(J_{\lambda,\gamma},\omega_{\lambda,\gamma})$ to be a continuous family of $(J,\omega)$-spin models with respect to $\lambda$. 
The next lemma makes this precise, and also, in the soft-core case, clarifies that $f_{\gamma}$ must increase quadratically at infinity.
\begin{lemma}\label{lem:model is well defined}
    There exists $\gamma_0>0$ such that $\omega_{\lambda,\gamma}(\R)\in(0,\infty)$ for all $0<\gamma\le \gamma_0$ and $\lambda\in\R$. 
    In addition, for each $\alpha_0\in (0,\alpha_\max)$ there exists $\gamma_{\alpha_0}>0$ and $\rho_{\alpha_0}\in[0,\infty)$ such that $f_\gamma(\rho)\ge \frac{1}{2}\alpha_0\rho^2$ for $\rho\ge\rho_{\alpha_0}$ and $0<\gamma\le \gamma_{\alpha_0}$.
\end{lemma}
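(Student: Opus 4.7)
I would establish the quadratic-growth claim first, since it supplies the key integrability input for the first assertion in the soft-core case; finiteness in the hard-core case and positivity in both cases then follow directly from the convergence hypotheses.

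\paragraph{Quadratic growth.} Fix $\alpha_0\in(0,\alpha_\max)$. In the soft-core case, pick any $\alpha_1\in(\alpha_0,\alpha_\max)$; unpacking the double $\liminf$ in~\eqref{eq:growth at infinity} furnishes thresholds $\gamma_{\alpha_0}>0$ and $\rho_{\alpha_0}<\infty$ such that $f_\gamma(\rho)\ge \tfrac12\alpha_1\rho^2\ge \tfrac12\alpha_0\rho^2$ whenever $0<\gamma\le \gamma_{\alpha_0}$ and $\rho\ge\rho_{\alpha_0}$. In the hard-core case the convention $\alpha_\max=\infty$ is in force, and any $\rho_{\alpha_0}>\rho_\max$ trivializes the inequality since $f_\gamma\equiv +\infty$ on $(\rho_\max,\infty)$ by hypothesis.

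\paragraph{Finiteness of $\omega_{\lambda,\gamma}(\R)$.} I would split the integral defining $\omega_{\lambda,\gamma}(\R)$ at a threshold $\rho_\star$: take $\rho_\star=\rho_{\alpha_0}$ for some fixed $\alpha_0\in(\alpha,\alpha_\max)$ in the soft-core case, or $\rho_\star=\rho_\max$ in the hard-core case. Above $\rho_\star$, the integrand vanishes identically in the hard-core case, while in the soft-core case the quadratic bound gives
\begin{equation*}
\phi_{\lambda,\gamma}(\rho)\ge -\lambda\rho+\tfrac12(\alpha_0-\alpha)\rho^2,
\end{equation*}
yielding super-Gaussian decay of $e^{-\beta\gamma^{-d}\phi_{\lambda,\gamma}(\rho)}$ and hence $\nu_\gamma$-integrability on $[\rho_\star,\infty)$ for every $\lambda\in\R$. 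On the bounded interval $[0,\rho_\star]$, the convergence hypotheses force a uniform-in-$\gamma$ lower bound on $f_\gamma$ for small $\gamma$: directly from~\eqref{eq:uniform convergence below infinity} in the soft-core case, and in the hard-core case by partitioning $[0,\rho_\star]$ as $[0,\rho_0]\cup[\rho_0,\rho_\max]$ with some $\rho_0<\rho_\cp$, using~\eqref{eq:uniform convergence below eta cp} on the lower piece and~\eqref{eq:convergence at eta cp} together with~\eqref{eq:uniform convergence above eta cp} on the upper piece. The resulting upper bound on the integrand is finite on the compact interval $[0,\rho_\star]$ for each fixed $\lambda$.

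\paragraph{Positivity and the main obstacle.} For $\omega_{\lambda,\gamma}(\R)>0$, I would simply exhibit a neighborhood $[0,\epsilon]$ of the origin on which $\phi_{\lambda,\gamma}<\infty$: uniform convergence of $f_\gamma$ to the continuous $f$ there (via~\eqref{eq:uniform convergence below eta cp} or~\eqref{eq:uniform convergence below infinity}) gives this for all sufficiently small $\gamma$, and $\nu_\gamma$ charges $[0,\epsilon]$ in both the continuous and discrete realizations (for $\gamma^d\le\epsilon$). The main bookkeeping obstacle is to secure a single $\gamma_0$ that works uniformly in $\lambda\in\R$ and across the four (discrete/continuous $\times$ hard/soft-core) cases; but since each threshold produced above is extracted from the convergence assumptions alone, with no dependence on $\lambda$, taking the minimum of the finitely many $\gamma$-thresholds delivers the desired $\gamma_0$.
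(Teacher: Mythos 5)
Your proposal is correct and follows essentially the same route as the paper's proof: quadratic growth from the $\liminf$ definition in~\eqref{eq:growth at infinity} (you unpack it directly, the paper argues by contradiction --- these are the same thing), then finiteness of $\omega_{\lambda,\gamma}(\R)$ by splitting the integral at a threshold and using the quadratic bound for the tail and the convergence hypotheses for the bounded part, and positivity from the fact that the uniform convergence forces $f_\gamma<\infty$ on a set of positive $\nu_\gamma$-measure for small $\gamma$. The only difference is cosmetic reordering (you do the quadratic bound first; the paper does positivity first), and your observation that every $\gamma$-threshold comes from the $\lambda$-independent convergence assumptions, making $\gamma_0$ uniform in $\lambda$, is exactly the implicit point in the paper's argument.
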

\begin{proof}
    The uniform convergence statements~\eqref{eq:uniform convergence below eta cp} (hard-core case) and~\eqref{eq:uniform convergence below infinity} (soft-core case) imply that if $\gamma$ is sufficiently small, then $f_\gamma<\infty$ on a subset of $S_\gamma$ of positive $\nu_\gamma$-measure, so that $\omega_{\lambda,\gamma}(\R)>0$. 

    In the hard-core case, $f_\gamma(\rho)=\infty$ for $\rho>\rho_\max$ so that the quadratic lower bound on $f_\gamma$ holds trivially. In the soft-core case, for each $\alpha_0\in (0,\alpha_\max)$, for the quadratic lower bound to fail there need to exist sequences $\gamma_n\downarrow 0$ and $\rho_n\uparrow\infty$ on which $f_{\gamma_n}(\rho_n)<\frac{1}{2}\alpha_0\rho_n^2$. However, as $\alpha_0<\alpha_\max$, this contradicts~\eqref{eq:growth at infinity}.

    The quadratic growth at infinity implies that $\int_{[\rho_1,\infty)} e^{-\beta\gamma^{-d}\phi_{\lambda,\gamma}(\rho)}\dd{\nu}_\gamma<\infty$ for some $\rho_1<\infty$ and all sufficiently small $\gamma$. To conclude that $\omega_{\lambda,\gamma}(\R)<\infty$ for small $\gamma$ and $\lambda\in\R$, it then suffices that $\inf_{\rho\in[0,\rho_1]}f_\gamma(\rho)>-\infty$ for small $\gamma$. In the soft-core case, this follows directly from~\eqref{eq:uniform convergence below infinity}. In the hard-core case, it follows from~\eqref{eq:uniform convergence below eta cp},~\eqref{eq:convergence at eta cp} and~\eqref{eq:uniform convergence above eta cp}.
\end{proof}

The family $(J_{\lambda,\gamma}, \omega_{\lambda,\gamma})$ is a continuous family of $(J,\omega)$-spin models with respect to $\lambda\in\R$, for any fixed values of $\beta,\gamma$, in the sense of Section \ref{sec:continuous family of spin models}. This follows from the continuity of $\phi_{\lambda,\gamma}$ in $\lambda$ and the dominated convergence theorem (using Lemma~\ref{lem:model is well defined}).

\paragraph{Applying Theorem~\ref{thm:phase-transition}}

In the hard-core case, set, in addition to \eqref{eq:f extension to rho cp},
\begin{equation}\label{eq:infinite value in extension}
     f(\rho):=\infty\quad\text{for $\rho>\rho_\cp$}.
 \end{equation}
With this extension, the domain of the function $\phi_\lambda$ also extends to $[0,\infty)$, for all $\lambda$, via its definition~\eqref{eq:mean-field free energy density}. We point out for later use that
\begin{equation}\label{eq:continuity of phi lambda}
    \text{$\phi_\lambda:[0,\infty)\to(-\infty,\infty]$ is continuous except possibly at $\rho_\cp$}.
\end{equation}
We proceed with the notation $\lambda_*, \rho_{*,-},\rho_{*,0},\rho_{*,+}$ as introduced before Theorem~\ref{thm:detail}. Define 
\begin{equation}                      m_*:=\min_{\rho\in[0,\infty)}\phi_{\lambda_*}(\rho),
\end{equation}
noting that it is necessarily finite.

\begin{remark}
    By Theorem \ref{thm:comparison_with_GP}, $-m_*$ is equal to the Gates--Penrose mean-field pressure.
\end{remark}

The following proposition verifies the assumption needed to apply Theorem~\ref{thm:phase-transition}.
\begin{proposition}
\label{prop:phase-transition}
    There exist $\kappa,\delta_0>0$ and $0<\gamma_0\le 1$ such that the following holds. Let $\delta\in(0,\delta_0)$.
    Define $\lambda_\pm(\delta) := \lambda_*\pm \kappa \delta$ and the closed sets
\begin{equation}\label{eq:G+- def}
\begin{split}
    G_-(\delta) &:= \phi_{\lambda_*}^{-1}([m_*, m_*+\delta])\cap [0,\rho_{*,0}),\\
    G_+(\delta) &:=\left(\phi_{\lambda_*}^{-1}([m_*, m_*+\delta])\cap (\rho_{*,0},+\infty)\right)\cup I(\delta),
\end{split}
\end{equation}
where $I(\delta):=\emptyset$ in either the soft-core case, or in the hard-core case when $\phi_{\lambda_*}(\rho_\cp)>m_*$, and where $I(\delta):=[\rho_\cp,\rho_\cp+\delta]$ in the hard-core case when $\phi_{\lambda_*}(\rho_\cp)=m_*$.

Then, for each $0<\gamma\le\gamma_0$, Assumption~\ref{as:good regions} is satisfied for the continuous family $(J_{\lambda,\gamma}, \omega_{\lambda,\gamma})_{\lambda\in[\lambda_-(\delta),\lambda_+(\delta)]}$ of $(J,\omega)$-spin models and the sets $G_\pm(\delta)$, with parameters $\theta_1(\delta,\gamma),\theta_2(\delta,\gamma),\theta_3(\delta,\gamma)$ which tend to zero as $\gamma\downarrow 0$.
    
\end{proposition}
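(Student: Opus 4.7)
The strategy is to translate the three items of Assumption~\ref{as:good regions} into one-dimensional Laplace-type estimates on the single-site measure $\tilde\omega_{\lambda,\gamma}$, exploiting its concentration on the minimizers of $\phi_\lambda$ as $\gamma\downarrow 0$. The whole analysis is anchored by first showing that, uniformly for $\lambda$ in a small compact neighborhood of $\lambda_*$,
\begin{equation*}
\frac{\psi_{J_{\lambda,\gamma},\omega_{\lambda,\gamma}}}{\beta\gamma^{-d}}\xrightarrow[\gamma\downarrow 0]{}-m_\lambda,\qquad m_\lambda:=\inf_\rho \phi_\lambda(\rho).
\end{equation*}
The upper bound here is a direct consequence of Theorem~\ref{thm:comparison_with_GP}, since $\Xi^{L,\per}_{J_{\lambda,\gamma},\omega_{\lambda,\gamma}}=\Xi^{L,\per}_{\lambda,\gamma}$ by construction of the continuous family. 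The matching lower bound comes from applying~\eqref{eqn:f-upper-bound} with $S$ a small neighborhood of a minimizer of $\phi_\lambda$: the geometric factor contributes $-2dJ_2\beta\gamma^{-d}\epsilon^2$ and the $\omega$-mass contributes $\nu_\gamma(S)e^{-\beta\gamma^{-d}\sup_S\phi_{\lambda,\gamma}}$, so letting $\gamma\downarrow 0$ and then shrinking $S$ finishes the bound. Uniformity in $\lambda$ then follows from the continuity of $\lambda\mapsto m_\lambda$ and the locally uniform convergence $\phi_{\lambda,\gamma}\to\phi_\lambda$ inherited from the convergence assumption on $f_\gamma$.

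I would then choose $\kappa>0$ to guarantee a gap estimate. Since $\rho_{*,0}\notin\mathcal{M}$ and $\phi_{\lambda_*}$ is non-constant on $[\rho_{*,-},\rho_{*,+}]$, there exist $\eta>0$ and an open neighborhood $U$ of $\rho_{*,0}$ on which $\phi_{\lambda_*}\ge m_*+\eta$. Fix $\kappa$ small relative to $\eta$ and to $\max\mathcal{M}$ so that, for every $\lambda\in[\lambda_-(\delta),\lambda_+(\delta)]$ and $\delta<\delta_0<\eta$: the tilt $\phi_\lambda-\phi_{\lambda_*}=(\lambda_*-\lambda)\rho$ moves $m_\lambda$ by $O(\kappa\delta)$; the minimizers of $\phi_{\lambda_+(\delta)}$ (resp.\ $\phi_{\lambda_-(\delta)}$) lie strictly on the $G_+$ (resp.\ $G_-$) side of $\rho_{*,0}$ with a gap of order $\kappa\delta$ separating them from the opposite good region; and $\dist(G_-,G_+)\ge d_\delta>0$ independently of $\gamma$.

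With these ingredients, each condition in Assumption~\ref{as:good regions} reduces to a one-sided Laplace bound of the form $\tilde\omega_\lambda(A)=e^{-\psi}\int_A e^{-\beta\gamma^{-d}\phi_{\lambda,\gamma}}\dd{\nu}_\gamma$ for an appropriate set $A$. For~\eqref{eqn:good regions_double-well}, the bounded part of $(G_-\cup G_+)^c$ satisfies $\phi_\lambda\ge m_\lambda+c\delta$ by construction, giving exponential decay in $\beta\gamma^{-d}$, while the tail is controlled by the quadratic growth of $f_\gamma$ furnished by Lemma~\ref{lem:model is well defined} (soft-core case) or by~\eqref{eq:uniform convergence above eta cp} (hard-core case). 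For~\eqref{eqn:good regions_separation}, $J_\lambda\dist(G_-,G_+)^2\ge\frac12 J_2\beta\gamma^{-d}d_\delta^2$ supplies an exponentially small prefactor in $\beta\gamma^{-d}$, while $\tilde\omega_\lambda(G_\pm)$ is at most polynomial in $\gamma^{-1}$ by Laplace's method around the minimizers. For~\eqref{eqn:good regions_endpoints} at $\lambda=\lambda_+(\delta)$, the gap estimate yields $\phi_{\lambda_+(\delta)}\ge m_{\lambda_+(\delta)}+c\kappa\delta$ on $G_-$, and on $(G_-\cup G_+)^c$ the bound from~\eqref{eqn:good regions_double-well} already suffices; $\lambda_-(\delta)$ is symmetric.

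The main obstacle is the hard-core degenerate case $\phi_{\lambda_*}(\rho_{\cp})=m_*$, which forces $G_+$ to include the artificial band $I(\delta)=[\rho_{\cp},\rho_{\cp}+\delta]$. There, $f_\gamma\to f$ only uniformly on compact subsets of $[0,\rho_{\cp})$ via~\eqref{eq:uniform convergence below eta cp}, only one-sidedly at $\rho_{\cp}$ via~\eqref{eq:convergence at eta cp}, and $f_\gamma\to+\infty$ above $\rho_{\cp}$ via~\eqref{eq:uniform convergence above eta cp}, so the Laplace analysis over $I(\delta)$ has to patch together these three regimes to verify that $\int_{I(\delta)} e^{-\beta\gamma^{-d}\phi_{\lambda,\gamma}}\dd{\nu}_\gamma$ is comparable to the corresponding integral over a neighborhood of $\rho_{\cp}$ in $[0,\rho_{\cp}]$, so that $I(\delta)$ contributes harmlessly to the normalization of $\tilde\omega_{\lambda_+(\delta)}(G_+)$. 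A second, more routine but pervasive issue is that in the discrete case $\nu_\gamma$ is a counting measure with mesh $\gamma^d$, contributing at most polynomial-in-$\gamma^{-1}$ factors which are absorbed into the $o(\beta\gamma^{-d})$ corrections from the pressure step; uniformity in $\lambda\in[\lambda_-(\delta),\lambda_+(\delta)]$ must be tracked carefully throughout but follows from compactness and the continuity of $\lambda\mapsto\phi_\lambda$.
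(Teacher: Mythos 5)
Your overall route coincides with the paper's: the paper also anchors everything on a Laplace-type estimate for the normalized single-site measures (Proposition~\ref{prop:normalized-measure}, obtained from an upper bound on $\omega_{\lambda,\gamma}(B)$ in terms of $\inf_{\overline B}\phi_\lambda$, Lemma~\ref{lem:normalized-measure_measure}, together with a lower bound on the pressure via small intervals in \eqref{eqn:f-upper-bound}, Lemma~\ref{lem:normalized-measure_normalization}), and then verifies the three items of Assumption~\ref{as:good regions} by comparing infima of $\phi_\lambda$ over $G_\pm(\delta)$ and their complements, with tilt bookkeeping of size $\kappa\delta$ and a cutoff $\rho_{\mathrm T}$ (Lemma~\ref{lem:terminal rho}) for the unbounded region. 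So the skeleton matches.

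However, your treatment of the degenerate hard-core case $\phi_{\lambda_*}(\rho_\cp)=m_*$ is aimed at the wrong target, and this is exactly where the content of $I(\delta)$ lies. Nothing in Assumption~\ref{as:good regions} requires a lower bound on $\tilde\omega_{\lambda,\gamma}(G_+(\delta))$, so proving that $\int_{I(\delta)}e^{-\beta\gamma^{-d}\phi_{\lambda,\gamma}}\dd{\nu}_\gamma$ is ``comparable'' to the integral over a left-neighborhood of $\rho_\cp$ is neither needed nor sufficient. What must be shown is the smallness of $\tilde\omega_{\lambda,\gamma}((G_-(\delta)\cup G_+(\delta))^c)$ and of $\tilde\omega_{\lambda_+,\gamma}(G_+(\delta)^c)$, and the danger is that for $\gamma>0$ the measure may put essentially all of its mass at densities just \emph{above} $\rho_\cp$: the only control there is the liminf condition \eqref{eq:convergence at eta cp}, which caps the available lower bound on $\phi_{\lambda,\gamma}$ at the level $\phi_\lambda(\rho_\cp)=m_*+O(\kappa\delta)$, i.e., with no $\delta$-gap. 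Consequently the Laplace upper bound unavoidably involves the infimum of $\phi_\lambda$ over the \emph{closure} of the bad set (Claim~\ref{claim:box-model_hard-core_lower-bound} in the paper), and the sole purpose of $I(\delta)$ is to ensure that a full neighborhood of $\rho_\cp$ lies inside $G_+(\delta)$, so that the closures of the complements avoid the discontinuity point of $\phi_\lambda$ (cf.\ \eqref{eq:continuity of phi lambda} and \eqref{eq:closure does not change inf}) and retain a gap of order $\delta$. Your generic statement that the bounded part of the complement satisfies $\phi_\lambda\ge m_\lambda+c\delta$ ``by construction'' silently relies on this step, which your sketch never supplies. Three smaller points: (i) the claim that $\tilde\omega_\lambda(G_\pm(\delta))$ is at most polynomial in $\gamma^{-1}$ is unjustified (no nondegeneracy of the minimizers is assumed), but also unnecessary, since the bound $e^{o(\gamma^{-d})}$ furnished by your own pressure step suffices against the factor $e^{-\frac{1}{2}J_{\lambda,\gamma}\dist(G_-(\delta),G_+(\delta))^2}$; (ii) $\kappa$ must be taken small relative to the cutoff $\rho_{\mathrm T}$ beyond which the tail is controlled (the paper requires $\kappa<1/(\rho_{*,+}+\rho_{\mathrm T})$), not merely relative to $\max\mathcal{M}$, since the tilt $\kappa\delta\rho$ must not consume the $\delta$-gap on the whole bounded region; (iii) in the non-degenerate hard-core case one also needs $\delta_0<\phi_{\lambda_*}(\rho_\cp)-m_*$, otherwise the closure of the complement contains $\rho_\cp$ with $\phi_{\lambda_*}(\rho_\cp)\le m_*+\delta$ and the gap degrades.
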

\begin{remark}
The interval $I(\delta)$, added in the hard-core case when $\phi_{\lambda_*}(\rho_\cp)=m_*$, is required for the proposition to hold since it is possible, when $\phi_{\lambda_*}(\rho_\cp)=m_*$, that the measure $\tilde\omega_{\lambda,\gamma}$ concentrates its mass at densities $\rho>\rho_\cp$, which, in the absence of $I(\delta)$, would violate~\eqref{eqn:good regions_double-well} of Assumption~\ref{as:good regions}. We note, however, that the precise choice of $I(\delta)$ is unimportant and the proposition remains true (with the same proof) with $I(\delta)=[\rho_\cp, \rho_\cp+f(\delta)]$ for any $f(\delta)>0$.
\end{remark}

We now deduce Theorem~\ref{thm:detail} from Proposition \ref{prop:phase-transition}, noting that Theorem~\ref{thm:detail} immediately implies Theorem~\ref{thm:main}.

\begin{proof}[Proof of Theorem~\ref{thm:detail}]
    Let $\kappa,\delta_0>0$ and $0<\gamma_0\le 1$ be as in Proposition~\ref{prop:phase-transition}, and recall the notation $c(\epsilon,d)$ from Theorem~\ref{thm:phase-transition}.
    Fix two strictly decreasing sequences $(\delta_n)_{n\ge 1}\subset(0,\delta_0)$ and $(\epsilon_n)_{n\ge 1}\subset(0,1/2)$ such that $\lim_{n\to\infty}\delta_n=\lim_{n\to\infty}\epsilon_n=0$.
    By Proposition~\ref{prop:phase-transition}, for each $n\ge 1$, there exists $0<\gamma_n\le\gamma_0$ such that 
    \begin{equation}
    \label{eqn:detail-consolidation}
        \sup_{0<\gamma\le\gamma_n}\operatornamewithlimits{max}_{1\le i\le 3}\theta_i(\delta_n,\gamma)\le c(\epsilon_n,d).
    \end{equation}
    Without loss of generality, we may take $(\gamma_n)_{n\ge 1}$ to be strictly decreasing to $0$.

    We now specify a function $\lambda_c:(0,\gamma_1)\rightarrow\R$ satisfying the requirements of Theorem~\ref{thm:detail}, as follows.
    Since $(\gamma_n)_{n\ge 1}$ is strictly decreasing to $0$, for each $0<\gamma<\gamma_1$, there exists a unique $n(\gamma)\ge 1$ such that $\gamma\in[\gamma_{n(\gamma)+1},\gamma_{n(\gamma)})$.
    Using \eqref{eqn:detail-consolidation}, we let $\lambda_c(\gamma)$ be given by Theorem~\ref{thm:phase-transition}, i.e., such that $\lambda_c(\gamma)\in(\lambda_\ast-\kappa\delta_{n(\gamma)},\lambda_\ast+\kappa\delta_{n(\gamma)})$ and the $(J_{\lambda_c(\gamma),\gamma},\omega_{\lambda_c(\gamma),\gamma})$-spin model admits two distinct translation-invariant Gibbs measures $\P^{\pm}_{\lambda_c(\gamma),\gamma}$ satisfying
    \begin{equation}
        \label{eqn:detail-consolidation-density concentration}
        \P^{\pm}_{\lambda_c(\gamma),\gamma}(\eta_{0}\in G_{\pm}(\delta_{n(\gamma)}))\ge 1-\epsilon_{n(\gamma)}.
    \end{equation}
    
    We now verify that the function $\lambda_c$ chosen above satisfies the requirements of Theorem~\ref{thm:detail}.
    Property \eqref{eq:convergence of critical chemical potential} follows from the construction $\lambda_c(\gamma)\in(\lambda_\ast-\kappa\delta_{n(\gamma)},\lambda_\ast+\kappa\delta_{n(\gamma)})$, the monotone convergence of $(\delta_n)_{n\ge 1}$ to $0$, and the monotonicity of $n(\gamma)$ in $\gamma$.
    For Property \eqref{eq:density concentration}, let $U\subset\R$ be an open set containing $\mathcal{M}$.
    By the continuity of $\phi_{\lambda_\ast}$, there exists $\delta_U>0$ such that, for all $0<\delta\le\delta_U$, $G_-(\delta)\subset U\cap[0,\rho_{\ast,0})$ and $G_+(\delta)\subset U\cap(\rho_{\ast,0},\infty)$.
    By \eqref{eqn:detail-consolidation-density concentration}, for all small enough $\gamma>0$,
    \begin{equation}
        \label{eqn:detail-consolidation-density concentration-minus measure}
        \P^{-}_{\lambda_c(\gamma),\gamma}(\eta_{0}\in U\cap[0,\rho_{\ast,0}))
        \ge\P^{-}_{\lambda_c(\gamma),\gamma}(\eta_{0}\in G_{-}(\delta_{n(\gamma)}))
        \ge 1-\epsilon_{n(\gamma)},
    \end{equation}
    \begin{equation}
        \label{eqn:detail-consolidation-density concentration-plus measure}
        \P^{+}_{\lambda_c(\gamma),\gamma}(\eta_{0}\in U\cap(\rho_{\ast,0},\infty))
        \ge\P^{+}_{\lambda_c(\gamma),\gamma}(\eta_{0}\in G_{+}(\delta_{n(\gamma)}))
        \ge 1-\epsilon_{n(\gamma)}.
    \end{equation}
    Now, \eqref{eq:density concentration} follows by taking $\gamma\downarrow 0$ in \eqref{eqn:detail-consolidation-density concentration-minus measure} and \eqref{eqn:detail-consolidation-density concentration-plus measure}, and using the monotone convergence of $(\epsilon_n)_{n\ge 1}$ to $0$ and again the monotonicity of $n(\gamma)$ in $\gamma$.
\end{proof}

\subsection{Verifying Assumption~\ref{as:good regions}}

In this section, we deduce Proposition~\ref{prop:phase-transition} from the next proposition, which will itself be established in Section \ref{sec:proof of normalized measure estimate}.
The normalized measures $\tilde{\omega}_{\lambda,\gamma}$ are defined as in~\eqref{normalized_omega}.

\begin{proposition}
\label{prop:normalized-measure}
For all non-empty Borel $B\subseteq[0,\infty)$ such that $\inf_{\rho\in B} f(\rho)<\infty$ and all compact $K\subset\R$, 
\begin{equation}
\label{eqn:normalized-measure}
    \limsup_{\gamma\downarrow 0}\sup_{\lambda\in K}\left\{\gamma^d\log\tilde{\omega}_{\lambda,\gamma}(B)+\beta\left[\inf_{\rho\in \overline{B}}\phi_{\lambda}(\rho)-\inf_{\rho\in[0,\infty)}\phi_{\lambda}(\rho)\right]\right\}\le 0,
\end{equation}
with $\overline{B}$ denoting the closure of $B$.
\end{proposition}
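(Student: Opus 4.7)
The plan is to decompose
\begin{equation*}
\gamma^d\log\tilde\omega_{\lambda,\gamma}(B)=\gamma^d\log\omega_{\lambda,\gamma}(B)-\gamma^d\psi_{J_{\lambda,\gamma},\omega_{\lambda,\gamma}},
\end{equation*}
and to deduce the claimed inequality by combining a Laplace-type upper bound $\gamma^d\log\omega_{\lambda,\gamma}(B)\le-\beta\inf_{\rho\in\overline B}\phi_\lambda(\rho)+o(1)$ with a matching lower bound on the pressure $\gamma^d\psi_{J_{\lambda,\gamma},\omega_{\lambda,\gamma}}\ge-\beta\inf_\rho\phi_\lambda(\rho)+o(1)$, both to be established uniformly in $\lambda\in K$ as $\gamma\downarrow 0$.

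The pressure bound follows from Theorem~\ref{thm:comparison_with_GP}: by the definitions in Section~\ref{sec:deduction of main theorem}, the two partition functions agree, $\Xi^{L,\per}_{J_{\lambda,\gamma},\omega_{\lambda,\gamma}}=\Xi^{L,\per}_{\lambda,\gamma}$, so Theorem~\ref{thm:comparison_with_GP} gives the pointwise limit $\gamma^d\psi_{J_{\lambda,\gamma},\omega_{\lambda,\gamma}}\to-\beta\inf_\rho\phi_\lambda(\rho)$ for each fixed $\lambda$. To upgrade to uniformity on the compact $K$, I would observe that $\lambda\mapsto\gamma^d\psi_{J_{\lambda,\gamma},\omega_{\lambda,\gamma}}$ is convex (since $\lambda$ appears linearly in the single-site exponent, $\log\Xi^{L,\per}_{\lambda,\gamma}$ is convex in $\lambda$ by H\"older's inequality, and this convexity passes through the thermodynamic limit), while the limit $\lambda\mapsto-\beta\inf_\rho\phi_\lambda(\rho)=\sup_\rho\beta[\lambda\rho+\frac{1}{2}\alpha\rho^2-f(\rho)]$ is finite (by~\eqref{eq:growth of f at infinity} in the soft-core case, or by compactness of $[0,\rho_\cp]$ in the hard-core case) and convex. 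The classical fact that pointwise convergence of finite convex functions on an open set is uniform on compact subsets then delivers the required uniformity on $K$.

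For the Laplace upper bound, fix a large $M$ and split $\omega_{\lambda,\gamma}(B)=\omega_{\lambda,\gamma}(B\cap[0,M])+\omega_{\lambda,\gamma}(B\cap(M,\infty))$. The tail is made negligible uniformly in $\lambda\in K$: in the soft-core case, Lemma~\ref{lem:model is well defined} gives $f_\gamma(\rho)\ge\frac{1}{2}\alpha_0\rho^2$ for $\rho$ large and $\gamma$ small, with some $\alpha_0\in(\alpha,\alpha_\max)$, forcing $\phi_{\lambda,\gamma}(\rho)\ge\frac{1}{4}(\alpha_0-\alpha)\rho^2$ for $\rho$ beyond some $\rho_K$ uniformly in $\lambda\in K$, yielding a Gaussian-type tail; in the hard-core case $\phi_{\lambda,\gamma}=\infty$ beyond $\rho_\max$, and~\eqref{eq:uniform convergence above eta cp} forces $f_\gamma$ uniformly to $+\infty$ on any $[\rho_1,\rho_\max]$ with $\rho_1>\rho_\cp$. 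The main contribution obeys the crude bound
\begin{equation*}
\omega_{\lambda,\gamma}(B\cap[0,M])\le\nu_\gamma(B\cap[0,M])\cdot\exp\bigl(-\beta\gamma^{-d}\inf_{B\cap[0,M]}\phi_{\lambda,\gamma}\bigr),
\end{equation*}
with $\gamma^d\log\nu_\gamma(B\cap[0,M])\to 0$ (since $\nu_\gamma(B\cap[0,M])\le M+1$ in both continuous and discrete cases for $\gamma\le 1$), and the uniform convergence $f_\gamma\to f$ on $[0,M]$ (provided $M<\rho_\cp$ in the hard-core case, from~\eqref{eq:uniform convergence below eta cp} or~\eqref{eq:uniform convergence below infinity}) gives $\inf_{B\cap[0,M]}\phi_{\lambda,\gamma}\ge\inf_{B\cap[0,M]}\phi_\lambda-\sup_{[0,M]}|f_\gamma-f|\ge\inf_{\overline B}\phi_\lambda-o(1)$ uniformly in $\lambda\in K$, since $\lambda$ enters $\phi_{\lambda,\gamma}$ affinely and the error $\sup_{[0,M]}|f_\gamma-f|$ is $\lambda$-independent.

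The main technical obstacle is the hard-core case when $\inf_{\overline B}\phi_\lambda$ is attained at or approached as $\rho\uparrow\rho_\cp$: uniform convergence of $f_\gamma$ holds only on $[0,\rho_\cp-\epsilon]$, so one cannot take $M$ all the way up to $\rho_\cp$. The fix is a standard $\epsilon$-$\delta$ argument built on the extension $f(\rho_\cp):=\lim_{\rho\uparrow\rho_\cp}f(\rho)$ from~\eqref{eq:f extension to rho cp}: pick $\rho_\epsilon\in\overline B\cap[0,\rho_\cp-\epsilon]$ with $\phi_\lambda(\rho_\epsilon)\le\inf_{\overline B}\phi_\lambda+\epsilon$, apply the Laplace bound on $[0,\rho_\cp-\epsilon]$, and use assumption~\eqref{eq:convergence at eta cp} to control the contribution from a neighborhood $(\rho_\cp-\epsilon,\rho_\cp]$; letting $\epsilon\downarrow 0$ after $\gamma\downarrow 0$ then closes the argument.
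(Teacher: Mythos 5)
Your decomposition of $\gamma^d\log\tilde\omega_{\lambda,\gamma}(B)$ and your Laplace-type bound on $\gamma^d\log\omega_{\lambda,\gamma}(B)$ (tail cutoff at $M$, plus a crude exponential bound on $[0,M]$) match the paper's strategy: these are exactly Lemma~\ref{lem:normalized-measure_measure} and its ingredients (Claims~\ref{claim:box-model_hard-core_lower-bound} and~\ref{clm:tail bound}). Your handling of the hard-core near-$\rho_\cp$ case is a sketch that could be completed, but the paper's Claim~\ref{claim:box-model_hard-core_lower-bound} does it more cleanly with a compactness argument (extract convergent subsequences $\gamma_j\downarrow 0$, $\lambda_j\to\lambda_\infty$, $\rho_j\to\rho_\infty\in\overline B$, and invoke~\eqref{eq:convergence at eta cp} or~\eqref{eq:uniform convergence above eta cp} depending on whether $\rho_\infty=\rho_\cp$ or $\rho_\infty>\rho_\cp$); this sidesteps the awkwardness of choosing $\rho_\eps$ and patching a neighborhood of $\rho_\cp$.

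The genuine gap is in your lower bound on $\gamma^d\psi_{\lambda,\gamma}$. You propose to invoke Theorem~\ref{thm:comparison_with_GP} for the pointwise-in-$\lambda$ limit and then upgrade to uniformity on $K$ via convexity in $\lambda$. But $\psi_{J,\omega}$ is defined in~\eqref{eqn:spin-model_free-energy} as a $\liminf_{L}$, and a $\liminf$ of convex functions need not be convex (a $\limsup$ is, a $\liminf$ is not). So the step ``this convexity passes through the thermodynamic limit'' requires first establishing that the $L$-limit actually exists, which you do not do. Moreover, Theorem~\ref{thm:comparison_with_GP} is itself proved in the paper from the same two technical lemmas that yield this proposition, so invoking it here is at best indirect. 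The paper's route (Lemma~\ref{lem:normalized-measure_normalization}) is both simpler and stronger: it uses the elementary finite-$L$ bound~\eqref{eqn:partition-function-lower-bound} with $S=[\rho_0,\rho_0+\xi]$ to get a lower bound on $\frac{1}{\gamma^{-d}L^d}\log\Xi^{L,\per}_{\lambda,\gamma}$ that is \emph{uniform in $L$}, hence automatically bounds the $\liminf_L$, and then the uniformity in $\lambda\in K$ comes from the quantitative localization Claim~\ref{claim:box-model_pressure-localization} (continuity of $\phi_\lambda$ plus uniform convergence of $f_\gamma$ on a bounded $\rho$-interval), not from convexity. I would replace your pressure step with this direct single-site argument.
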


\begin{lemma}\label{lem:terminal rho}
There exists $\rho_{\mathrm{T}}<\infty$ such that
\begin{equation}
    \inf_{\substack{\lambda\in[\lambda_*-1,\lambda_*+1]\\\rho\in[\rho_{\mathrm{T}},\infty)}} \phi_{\lambda}(\rho) \ge m_*+1.
\end{equation}
\end{lemma}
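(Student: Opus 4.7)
The plan is to split into the hard-core and soft-core cases and exploit the asymptotic behavior of $f$ at infinity in each.

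In the hard-core case, the extension~\eqref{eq:infinite value in extension} gives $f(\rho)=\infty$ for all $\rho>\rho_{\cp}$, hence $\phi_{\lambda}(\rho)=\infty$ for such $\rho$ and any $\lambda\in\R$. So it suffices to take $\rho_{\mathrm{T}}:=\rho_{\cp}+1$ (or any value strictly larger than $\rho_{\cp}$), and the desired inequality holds trivially in a strong form.

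In the soft-core case, we use that by assumption $\alpha<\alpha_{\max}$, so we may fix some $\alpha'$ with $\alpha<\alpha'<\alpha_{\max}$. By~\eqref{eq:growth of f at infinity}, we have $\liminf_{\rho\to\infty} f(\rho)/(\tfrac12\rho^2)\ge \alpha_{\max}>\alpha'$, so there exists $\rho_1<\infty$ such that $f(\rho)\ge\tfrac12\alpha'\rho^2$ for all $\rho\ge\rho_1$. For such $\rho$ and any $\lambda\in[\lambda_*-1,\lambda_*+1]$ (on which $|\lambda|\le|\lambda_*|+1=:M$), we then have
\begin{equation}
\phi_\lambda(\rho)=-\lambda\rho-\tfrac{1}{2}\alpha\rho^2+f(\rho)\ge -M\rho+\tfrac{1}{2}(\alpha'-\alpha)\rho^2,
\end{equation}
and the right-hand side tends to $+\infty$ as $\rho\to\infty$. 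Therefore we may pick $\rho_{\mathrm{T}}\ge\rho_1$ large enough that $-M\rho_{\mathrm{T}}+\tfrac12(\alpha'-\alpha)\rho_{\mathrm{T}}^2\ge m_*+1$; since the right-hand side of the displayed bound is increasing on $[\rho_{\mathrm{T}},\infty)$ for such a choice, the lemma follows in this case as well.

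The argument is essentially routine once one separates the cases; there is no real obstacle, just the need to use the correct convergence assumption on $f$ at infinity (the blow-up of $f$ above $\rho_\cp$ in the hard-core case, and the quadratic lower bound~\eqref{eq:growth of f at infinity} in the soft-core case) to dominate the concave term $-\tfrac12\alpha\rho^2$ and the linear term $-\lambda\rho$ uniformly in the compact range $\lambda\in[\lambda_*-1,\lambda_*+1]$.
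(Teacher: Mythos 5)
Your proof is correct and follows essentially the same route as the paper: in the hard-core case you take $\rho_{\mathrm{T}}>\rho_{\cp}$ and use $f(\rho)=\infty$ there, and in the soft-core case you use the quadratic lower bound~\eqref{eq:growth of f at infinity} together with $\alpha<\alpha_{\max}$ to dominate the linear and $-\tfrac12\alpha\rho^2$ terms uniformly for $\lambda$ in a compact interval. The paper states the soft-core case in one line; you simply fill in the same straightforward details.
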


\begin{proof}
    In the hard-core case, $f(\rho)=\infty$ when $\rho>\rho_\cp$ by~\eqref{eq:infinite value in extension}, so we may take any $\rho_{\mathrm{T}}\in(\rho_\cp,\infty)$. 
    In the soft-core case, the claim follows from the quadratic lower bound~\eqref{eq:growth of f at infinity}, the fact that $\alpha<\alpha_\max$ and the definition~\eqref{eq:mean-field free energy density} of $\phi_{\lambda}$.
\end{proof}

\begin{proof}[Deduction of Proposition \ref{prop:phase-transition}]
Fix
\begin{equation}
    0<\delta_0<\begin{cases}\min\{\phi_{\lambda_*}(\rho_{*,0})-m_*, 1\}&\substack{\text{in the soft-core case,}\\\text{or the hard-core case with $\phi_{\lambda_*}(\rho_\cp)=m_*$}}\\
    \min\{\phi_{\lambda_*}(\rho_{*,0})-m_*, 1, \phi_{\lambda_*}(\rho_\cp)-m_*\}&\text{hard-core case with $\phi_{\lambda_*}(\rho_\cp)>m_*$}
    \end{cases}
\end{equation}
arbitrarily.
Fix the $0<\gamma_0\le 1$ of Lemma~\ref{lem:model is well defined}. Fix $0<\kappa<\min\{1, \frac{1}{\rho_{*,+}+\rho_{\mathrm{T}}}\}$ for the $\rho_{\mathrm{T}}$ of Lemma~\ref{lem:terminal rho}. 

Let $\delta\in(0,\delta_0)$. We first note that the sets $G_\pm(\delta)$ are closed. Recall that $\phi_{\lambda_*}$ is continuous in the soft-core case, and is continuous on $[0,\rho_\cp]$ (allowing it to be infinite at $\rho_\cp$) and is infinite on $(\rho_\cp,\infty)$ in the hard-core case.
Thus, as $m_*+\delta<\phi_{\lambda_*}(\rho_{*,0})$, we have that $G_-(\delta)= \phi_{\lambda_*}^{-1}([m_*, m_*+\delta])\cap [0,\rho_{*,0}-\eps]$ and $G_+(\delta) = (\phi_{\lambda_*}^{-1}([m_*, m_*+\delta])\cap [\rho_{*,0}+\eps,\infty))\cup I(\delta)$ for some $\eps>0$, whence $G_\pm(\delta)$ are closed (noting that $I(\delta)$ is closed).

Fix $K:=[\lambda_*-\kappa \delta, \lambda_*+\kappa\delta]$. Proposition~\ref{prop:normalized-measure} implies that for any set $B$ as in the proposition, there exists a function $\eps_{B}:(0,\gamma_0]\to[0,\infty)$ satisfying $\lim_{\gamma\downarrow0}\eps_{B}(\gamma)=0$ such that
\begin{equation}
\label{eqn:normalized measure with error}
    \tilde{\omega}_{\lambda,\gamma}(B)\le \exp{\gamma^{-d}\left(\eps_B(\gamma)-\beta\left[\inf_{\rho\in \overline{B}}\phi_{\lambda}(\rho)-\inf_{\rho\in[0,\infty)}\phi_{\lambda}(\rho)\right]\right)}\quad\text{for}\quad\substack{\lambda\in K,\\0<\gamma\le\gamma_0}.
\end{equation}

We start with~\eqref{eqn:good regions_separation} of Assumption~\ref{as:good regions}. We apply~\eqref{eqn:normalized measure with error} with $B=G_\pm(\delta)$, noting that $\inf_{\rho\in G_\pm(\delta)} f(\rho)<\infty$ by the definition of $G_\pm(\delta)$ and $m_*$. We obtain, for each $\lambda\in K$ and $0<\gamma\le\gamma_0$,
\begin{equation}
\begin{split}
    {}&e^{-\frac{1}{2}J_{\lambda,\gamma}\dist(G_-(\delta), G_+(\delta))^2}(\tilde\omega_{\lambda,\gamma}(G_-(\delta))\tilde\omega_{\lambda,\gamma}(G_+(\delta)))^{1/2}
    \\
    \le{}&
    \begin{multlined}[t]
        \exp\left\{-\frac{1}{2}J_{\lambda,\gamma}\dist(G_-(\delta), G_+(\delta))^2+\frac{1}{2}\gamma^{-d}\left(\epsilon_{G_-(\delta)}(\gamma)+\epsilon_{G_+(\delta)}(\gamma)
        \vphantom{\frac{1}{2}}\right.\right.
        \\
        \left.\left.-\beta\left[\inf_{\rho\in G_-(\delta)}\phi_{\lambda}(\rho)+\inf_{\rho\in G_+(\delta)}\phi_{\lambda}(\rho)-2\inf_{\rho\in [0,\infty)}\phi_{\lambda}(\rho))\right]\right)\right\}
    \end{multlined}
    \\
    \le{}&\exp{-\frac{1}{2}\gamma^{-d}\left[\frac{1}{2}J_2\beta\dist(G_-(\delta), G_+(\delta))^2-\epsilon_{G_-(\delta)}(\gamma)-\epsilon_{G_+(\delta)}(\gamma)\right]}.
\end{split}
\end{equation}
Since $G_-(\delta)$ and $G_+(\delta)$ are closed and disjoint we have that $\dist(G_-(\delta), G_+(\delta))>0$. Therefore,
\begin{equation}\label{eq:verifying the second property}
    \lim_{\gamma\downarrow 0}\sup_{\lambda\in K} e^{-\frac{1}{2}J_{\lambda,\gamma}\dist(G_-(\delta), G_+(\delta))^2}(\tilde\omega_{\lambda,\gamma}(G_-(\delta))\tilde\omega_{\lambda,\gamma}(G_+(\delta)))^{1/2} = 0.
\end{equation}

We continue with~\eqref{eqn:good regions_double-well} of Assumption~\ref{as:good regions}. We apply~\eqref{eqn:normalized measure with error} with $B=(G_-(\delta)\cup G_+(\delta))^{c}$, noting that $\inf_{\rho\in B} f(\rho)<\infty$ since $\rho_{*,0}\in B$ and, in the hard-core case, $\rho_{*,0}<\rho_{*,+}\le\rho_\cp$. We obtain, for each $\lambda\in K$ and $0<\gamma\le\gamma_0$,
\begin{equation}\label{eq:towards the first property}
\begin{multlined}
    \tilde\omega_{\lambda,\gamma}((G_-(\delta)\cup G_+(\delta))^{c})
    \\
    \le \exp{\gamma^{-d}\left(\eps_{(G_-(\delta)\cup G_+(\delta))^{c}}(\gamma)-\beta\left[\inf_{\rho\in \overline{(G_-(\delta)\cup G_+(\delta))^{c}}}\phi_{\lambda}(\rho)-\inf_{\rho\in[0,\infty)}\phi_{\lambda}(\rho)\right]\right)}.
\end{multlined}
\end{equation}
Now, for each $\lambda\in K$, on the one hand,
\begin{equation}
    \inf_{\rho\in[0,\infty)}\phi_{\lambda}(\rho)\le \phi_{\lambda}(\rho_{*,+})\le m_* + \kappa\delta\rho_{*,+}.
\end{equation}
On the other hand, recalling the definition of $G_\pm(\delta)$ from~\eqref{eq:G+- def} and applying Lemma~\ref{lem:terminal rho} using the fact that $K\subset[\lambda_*-1,\lambda_*+1]$ (since $\kappa,\delta<1$),
\begin{equation}\label{eq:infimum away from Gs}
    \inf_{\rho\in (G_-(\delta)\cup G_+(\delta))^{c}}\phi_{\lambda}(\rho)\ge m_* + \delta - \kappa\delta\rho_{\mathrm{T}} = m_* + (1 - \kappa\rho_{\mathrm{T}})\delta.
\end{equation}
Moreover, we claim that
\begin{equation}\label{eq:closure does not change inf}
    \inf_{\rho\in \overline{(G_-(\delta)\cup G_+(\delta))^{c}}}\phi_{\lambda}(\rho) = \inf_{\rho\in (G_-(\delta)\cup G_+(\delta))^{c}}\phi_{\lambda}(\rho).
\end{equation}
This is clear in the soft-core case since $\phi_\lambda$ is continuous. 
It follows in the hard-core case when $\phi_{\lambda_*}(\rho_\cp)>m_*$, since in this case $\rho_\cp$ belongs to the open set $(G_-(\delta)\cup G_+(\delta))^{c}$ by the definition of $G_\pm(\delta)$ from~\eqref{eq:G+- def} and our choice of $\delta$, whence the boundary $\partial (G_-(\delta)\cup G_+(\delta))$ consists only of continuity points of $\phi_\lambda$ by~\eqref{eq:continuity of phi lambda}. 
It also follows in the hard-core case when $\phi_{\lambda_*}(\rho_\cp)=m_*$ since a neighborhood of $\rho_\cp$ is contained in $G_+(\delta)$ by the definition~\eqref{eq:G+- def} (making use of $I(\delta)$), so again the boundary $\partial (G_-(\delta)\cup G_+(\delta))$ consists only of continuity points of $\phi_\lambda$ by~\eqref{eq:continuity of phi lambda}.
Therefore, since $\kappa<\frac{1}{\rho_{*,+}+\rho_{\mathrm{T}}}$,
\begin{equation}
    \inf_{\lambda\in K}\left(\inf_{\rho\in \overline{(G_-(\delta)\cup G_+(\delta))^{c}}}\phi_{\lambda}(\rho)-\inf_{\rho\in[0,\infty)}\phi_{\lambda}(\rho)\right)>0.
\end{equation}
Together with~\eqref{eq:towards the first property}, this implies
\begin{equation}\label{eq:verifying the first property}
    \lim_{\gamma\downarrow0}\sup_{\lambda\in K}\tilde\omega_{\lambda,\gamma}((G_-(\delta)\cup G_+(\delta))^{c}) = 0.
\end{equation}

Lastly, we check~\eqref{eqn:good regions_endpoints} of Assumption~\ref{as:good regions}.
Let $\#\in\set{-,+}$.
We apply~\eqref{eqn:normalized measure with error} with $B=G_\#(\delta)^{c}$, noting that $\inf_{\rho\in B} f(\rho)<\infty$ as, again, $\rho_{*,0}\in B$. We obtain, for each $0<\gamma\le\gamma_0$,
\begin{equation}\label{eq:towards the third property}
    \tilde\omega_{\lambda_\#,\gamma}(G_\#(\delta)^{c})\le \exp{\gamma^{-d}\left(\eps_{G_\#(\delta)^{c}}(\gamma)-\beta\left[\inf_{\rho\in \overline{G_\#(\delta)^{c}}}\phi_{\lambda_\#}(\rho)-\inf_{\rho\in[0,\infty)}\phi_{\lambda_\#}(\rho)\right]\right)}
\end{equation}
where we recall that $\lambda_\pm(\delta) = \lambda_*\pm \kappa \delta$. 
Now, on the one hand,
\begin{equation}
    \inf_{\rho\in G_{\mp}(\delta)}\phi_{\lambda_\pm}(\rho)
    \ge \inf_{\rho\in G_{\mp}(\delta)}\phi_{\lambda_\ast}(\rho)
    +\inf_{\rho\in G_{\mp}(\delta)}(-(\lambda_\pm-\lambda_\ast)\rho)
    \ge m_*-(\lambda_\pm-\lambda_*)\rho_{*,0},
\end{equation}
which, using that
\begin{equation}
    \overline{G_\pm(\delta)^c} 
    =G_\pm(\delta)^c\cup\partial G_\pm(\delta)
    = G_\mp(\delta)\cup \overline{(G_-(\delta)\cup G_+(\delta))^{c}}
\end{equation}
and using~\eqref{eq:closure does not change inf} and~\eqref{eq:infimum away from Gs}, implies that
\begin{equation}
\begin{multlined}
    \inf_{\rho\in \overline{G_\#(\delta)^{c}}}\phi_{\lambda_\#}(\rho)\ge m_*+\min\{-(\lambda_\#-\lambda_*)\rho_{*,0}, (1 - \kappa\rho_{\mathrm{T}})\delta\}
    \\
    =\begin{cases}m_*+\min\{\kappa\rho_{*,0},1 - \kappa\rho_{\mathrm{T}}\}\delta&\#=-\\
    m_*+\min\{-\kappa\rho_{*,0},1 - \kappa\rho_{\mathrm{T}}\}\delta&\#=+
    \end{cases}.
\end{multlined}
\end{equation}
On the other hand,
\begin{equation}
    \inf_{\rho\in[0,\infty)}\phi_{\lambda_\#}(\rho)\le \phi_{\lambda_\#}(\rho_{*,\#}) = m_* - (\lambda_\#-\lambda_*)\rho_{*,\#} = \begin{cases}m_*+\kappa\delta\rho_{*,-}&\#=-\\
    m_*-\kappa\delta\rho_{*,+}&\#=+
    \end{cases}.
\end{equation}
Therefore, using that $\rho_{*,-}<\rho_{*,0}<\rho_{*,+}$ and using again that $\kappa<\frac{1}{\rho_{*,+}+\rho_{\mathrm{T}}}$,
\begin{equation}
    \inf_{\rho\in \overline{G_\#(\delta)^{c}}}\phi_{\lambda_\#}(\rho)-\inf_{\rho\in[0,\infty)}\phi_{\lambda_\#}(\rho)>0.
\end{equation}
Together with~\eqref{eq:towards the third property}, this implies
\begin{equation}\label{eq:verifying the third property}
    \lim_{\gamma\downarrow0}\tilde\omega_{\lambda_\#,\gamma}(G_\#(\delta)^{c}) = 0.
\end{equation}

The proposition follows from~\eqref{eq:verifying the second property},~\eqref{eq:verifying the first property} and~\eqref{eq:verifying the third property}.
\end{proof}

\subsection{Technical lemmas}

We will deduce Proposition \ref{prop:normalized-measure} and Theorem \ref{thm:comparison_with_GP} from the two technical lemmas introduced in this section.
Recall that, in the discrete case, we extended $f_{\gamma}$ to $[0,\infty)$ by a linear interpolation. 

\begin{lemma}
\label{lem:normalized-measure_measure}
    For any compact $K\subset\R$ and non-empty, Borel $B\subseteq[0,\infty)$ such that $\inf_{\rho\in B} f(\rho)<\infty$,
    \begin{equation}
    \label{eqn:normalized-measure_measure}
    \limsup_{\gamma\downarrow 0}\sup_{\lambda\in K}\left\{\gamma^d\log\omega_{\lambda,\gamma}(B)+\beta\inf_{\rho\in \overline{B}}\phi_{\lambda}(\rho)\right\}\le 0
    \end{equation}    
\end{lemma}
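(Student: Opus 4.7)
The plan for Lemma~\ref{lem:normalized-measure_measure} is: given $\eps>0$ and a compact $K\subset\R$, I show that for all sufficiently small $\gamma>0$,
\begin{equation}
    \sup_{\lambda\in K}\left\{\gamma^d\log\omega_{\lambda,\gamma}(B)+\beta m_B(\lambda)\right\}\le\eps,
\end{equation}
where $m_B(\lambda):=\inf_{\rho\in\overline{B}}\phi_\lambda(\rho)$. The assumption $\inf_B f<\infty$ forces $m_B<\infty$, while the quadratic growth of $f$ at infinity (from \eqref{eq:growth of f at infinity} in the soft-core case, or the extension $f=\infty$ on $(\rho_\cp,\infty)$ together with continuity of $f$ on $[0,\rho_\cp]$ in the hard-core case) gives a uniform lower bound on $K$. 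The strategy is a decomposition $B=B_1\sqcup B_2$: in the soft-core case I take $B_1:=B\cap[0,R]$ and $B_2:=B\cap(R,\infty)$ for large $R$; in the hard-core case I take $B_1:=B\cap[0,\rho_\cp-\eta]$ and $B_2:=B\cap(\rho_\cp-\eta,\rho_{\max}]$ for small $\eta>0$, observing that $\omega_{\lambda,\gamma}(B\cap(\rho_{\max},\infty))=0$ since $f_\gamma=\infty$ there.

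On the main piece $B_1$, the uniform convergence $\sup_{B_1}|f_\gamma-f|=:\eps_\gamma\to 0$ (from \eqref{eq:uniform convergence below eta cp} or \eqref{eq:uniform convergence below infinity}) gives $\phi_{\lambda,\gamma}\ge\phi_\lambda-\eps_\gamma\ge m_B(\lambda)-\eps_\gamma$ on $B_1\subseteq\overline{B}$. Bounding $\nu_\gamma(B_1)\le R+1$ (in both the continuous and the discrete cases), one obtains
\begin{equation}
    \gamma^d\log\omega_{\lambda,\gamma}(B_1)\le-\beta m_B(\lambda)+\beta\eps_\gamma+\gamma^d\log(R+1)=-\beta m_B(\lambda)+o(1)
\end{equation}
uniformly in $\lambda\in K$.

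For the tail $B_2$ the goal is $\phi_{\lambda,\gamma}(\rho)\ge m_B(\lambda)-\eps$ uniformly there. In the soft-core case, Lemma~\ref{lem:model is well defined} with $\alpha_0\in(\alpha,\alpha_{\max})$ gives $\phi_{\lambda,\gamma}(\rho)\ge-|\lambda|\rho+\tfrac{1}{2}(\alpha_0-\alpha)\rho^2$ on $B_2$, and choosing $R$ large makes this dominate $m_B(\lambda)$ uniformly on $K$; a direct Gaussian-type tail estimate then yields $\gamma^d\log\omega_{\lambda,\gamma}(B_2)\le-cR^2+o(1)$. In the hard-core case, combining \eqref{eq:convergence at eta cp} and \eqref{eq:uniform convergence above eta cp} produces, for any $M<f(\rho_\cp)\in(-\infty,\infty]$, the uniform bound $f_\gamma\ge M$ on $(\rho_\cp-\eta,\rho_{\max}]$ for sufficiently small $\gamma$ and $\eta$. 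The main obstacle is the hard-core subcase $\rho_\cp\in\overline{B}$ with $f(\rho_\cp)<\infty$, where $M$ is bounded above by $f(\rho_\cp)$; here I would use $m_B(\lambda)\le\phi_\lambda(\rho_\cp)$ and shrink $\eta$ to keep $-\lambda\rho-\tfrac{\alpha}{2}\rho^2$ within $O(\eta)$ of its value at $\rho_\cp$ uniformly over $\lambda\in K$, producing $\phi_{\lambda,\gamma}(\rho)\ge m_B(\lambda)-C_K\eta-\eps'$ throughout $B_2$; when $\rho_\cp\notin\overline{B}$, the closedness of $\overline{B}$ forces $B_2$ to stay bounded away from $\rho_\cp$, so only the large-$M$ regime applies. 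Combining the $B_1$ and $B_2$ estimates via $\omega_{\lambda,\gamma}(B)\le\omega_{\lambda,\gamma}(B_1)+\omega_{\lambda,\gamma}(B_2)$ and letting $\gamma\downarrow 0$ then $\eps\downarrow 0$ concludes the proof.
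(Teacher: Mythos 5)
Your decomposition $B=B_1\sqcup B_2$ into a ``main'' bounded piece and a ``tail'' piece is exactly the structure of the paper's proof. Where you differ is in how the key lower bound on $\phi_{\lambda,\gamma}$ is established. The paper isolates this into Claim~\ref{claim:box-model_hard-core_lower-bound}, which is proved by a compactness/contradiction argument: if the bound fails, extract convergent subsequences $\lambda_j\to\lambda_\infty$, $\rho_j\to\rho_\infty\in\overline{B}$, then invoke the appropriate convergence hypothesis at $\rho_\infty$ (\eqref{eq:uniform convergence below eta cp}, \eqref{eq:convergence at eta cp}, or \eqref{eq:uniform convergence above eta cp}) depending on whether $\rho_\infty$ is below, at, or above $\rho_\cp$. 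You instead prove the same estimate directly, splitting the hard-core tail region into the transition zone $(\rho_\cp-\eta,\rho_\cp+\eta)$ (handled via \eqref{eq:convergence at eta cp}) and the zone $[\rho_\cp+\eta,\rho_\max]$ (handled via \eqref{eq:uniform convergence above eta cp}), and branching on whether $\rho_\cp\in\overline{B}$ and whether $f(\rho_\cp)$ is finite. Both routes work: the paper's compactness argument is shorter and avoids tracking the several subcases you enumerate (in particular it treats the ``$\rho_\cp\in\overline{B}$, $f(\rho_\cp)<\infty$'' scenario and the ``$\rho_\cp\notin\overline{B}$'' scenario uniformly), while your approach is more explicit and makes the source of each error term visible. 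Your soft-core tail estimate (quadratic lower bound on $\phi_{\lambda,\gamma}$ from Lemma~\ref{lem:model is well defined} plus a Gaussian-type tail sum) is essentially what the paper packages as Claim~\ref{clm:tail bound}. One item to spell out more carefully if you write this up: the claimed bound $\phi_{\lambda,\gamma}(\rho)\ge m_B(\lambda)-C_K\eta-\eps'$ ``throughout $B_2$'' in the hard-core case requires the two-zone split within $B_2$ mentioned above, since the $M$-bound from \eqref{eq:convergence at eta cp} alone does not control $\phi_{\lambda,\gamma}$ on $[\rho_\cp+\eta,\rho_\max]$ where the quadratic/linear parts of $\phi_{\lambda,\gamma}$ drift away from their values at $\rho_\cp$; there you need the divergence from \eqref{eq:uniform convergence above eta cp}.
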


\begin{claim}
\label{claim:box-model_hard-core_lower-bound}
For all non-empty, bounded Borel $B\subseteq[0,\infty)$ such that $\inf_{\rho\in B} f(\rho)<\infty$ and all compact $K\subset\R$,
\begin{equation}
    \liminf_{\gamma\downarrow0}\inf_{\lambda\in K}\left\{\inf_{\rho\in B}\phi_{\lambda,\gamma}(\rho)-\inf_{\rho\in \overline{B}}\phi_{\lambda}(\rho)\right\}\ge 0.
\end{equation}
\end{claim}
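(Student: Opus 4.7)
My plan is to reduce the claim to a joint lower semi-continuity property of $(\gamma,\lambda,\rho)\mapsto\phi_{\lambda,\gamma}(\rho)$ as $\gamma\downarrow 0$, and then use compactness to argue by contradiction. The key lemma I would establish first is: extending $f$ by $+\infty$ above $\rho_{\cp}$ in the hard-core case (as in~\eqref{eq:f extension to rho cp} and~\eqref{eq:infinite value in extension}), for every $(\lambda_0,\rho_0)\in K\times[0,\infty)$,
\begin{equation}
    \liminf_{\substack{\gamma\downarrow 0\\(\lambda,\rho)\to(\lambda_0,\rho_0)}}\phi_{\lambda,\gamma}(\rho)\ \ge\ \phi_{\lambda_0}(\rho_0).
\end{equation}
Since $-\lambda\rho-\frac{1}{2}\alpha\rho^2$ is jointly continuous, this reduces to $\liminf_{\gamma\downarrow 0,\,\rho\to\rho_0}f_\gamma(\rho)\ge f(\rho_0)$. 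In the soft-core case this is immediate from the local uniform convergence~\eqref{eq:uniform convergence below infinity}. In the hard-core case I would split into three regimes: $\rho_0<\rho_{\cp}$ is handled by~\eqref{eq:uniform convergence below eta cp}; $\rho_0=\rho_{\cp}$ is exactly the content of~\eqref{eq:convergence at eta cp}; and $\rho_0>\rho_{\cp}$ follows from~\eqref{eq:uniform convergence above eta cp} applied with any $\rho_1\in(\rho_{\cp},\rho_0)$, which forces $f_\gamma(\rho)\to\infty$ uniformly in a small neighborhood of $\rho_0$, consistent with $f(\rho_0)=+\infty$. The linear interpolation used to extend $f_\gamma$ in the discrete case preserves these local lower bounds, so no extra work is needed there. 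I expect the cleanest handling of the gluing near $\rho_{\cp}$ to be the main technical point.

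\textbf{Lipschitz step and contradiction.} Choose $M<\infty$ with $B\subseteq[0,M]$. Since $|\phi_\lambda(\rho)-\phi_{\lambda'}(\rho)|\le M|\lambda-\lambda'|$ uniformly in $\rho\in[0,M]$, the map $\lambda\mapsto\inf_{\rho\in\overline{B}}\phi_\lambda(\rho)$ is $M$-Lipschitz on $K$, and the hypothesis $\inf_{\rho\in B}f(\rho)<\infty$ together with the lower boundedness of $f$ on $[0,M]$ guarantees it is real-valued. Now suppose the claim fails: then there exist $\eps>0$ and sequences $\gamma_k\downarrow 0$, $\lambda_k\in K$, $\rho_k\in B$ with
\begin{equation}
    \phi_{\lambda_k,\gamma_k}(\rho_k)\ <\ \inf_{\rho\in\overline{B}}\phi_{\lambda_k}(\rho)-\eps.
\end{equation}
By compactness of $K$ and of $[0,M]$, I would pass to a subsequence along which $\lambda_k\to\lambda_*\in K$ and $\rho_k\to\rho_*\in\overline{B}$. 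Applying the key lemma at $(\lambda_*,\rho_*)$ gives $\phi_{\lambda_*}(\rho_*)\le\liminf_k\phi_{\lambda_k,\gamma_k}(\rho_k)$, while the Lipschitz bound gives $\inf_{\rho\in\overline{B}}\phi_{\lambda_k}(\rho)\to\inf_{\rho\in\overline{B}}\phi_{\lambda_*}(\rho)$. Combined with $\phi_{\lambda_*}(\rho_*)\ge\inf_{\rho\in\overline{B}}\phi_{\lambda_*}(\rho)$ (since $\rho_*\in\overline{B}$), this yields
\begin{equation}
    \inf_{\rho\in\overline{B}}\phi_{\lambda_*}(\rho)\ \le\ \inf_{\rho\in\overline{B}}\phi_{\lambda_*}(\rho)-\eps,
\end{equation}
the desired contradiction, as this infimum is finite.
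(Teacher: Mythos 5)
Your proposal is correct and follows essentially the same route as the paper's proof: a compactness/contradiction argument reducing to a joint lower semi-continuity statement for $\phi_{\lambda,\gamma}$ as $\gamma\downarrow 0$, verified by the same case split (soft-core; hard-core with $\rho_0<\rho_\cp$, $\rho_0=\rho_\cp$, $\rho_0>\rho_\cp$ via \eqref{eq:uniform convergence below infinity}, \eqref{eq:uniform convergence below eta cp}, \eqref{eq:convergence at eta cp}, \eqref{eq:uniform convergence above eta cp} respectively). Your explicit $M$-Lipschitz argument for the continuity of $\lambda\mapsto\inf_{\rho\in\overline{B}}\phi_\lambda(\rho)$ is a welcome clarification of a step the paper attributes more tersely to "the boundedness of $B$ and the continuity of $\phi_\lambda$" (which is slightly delicate near $\rho_\cp$), but it does not change the substance of the argument.
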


\begin{proof}
Suppose, to obtain a contradiction, that the claim does not hold. Therefore, there exist $\epsilon>0$ and sequences $(\gamma_j)\subset(0,1]$, $(\lambda_j)\subset K$, and $(\rho_j)\subset B$, with $\gamma_j\downarrow 0$, such that $\phi_{\lambda_j,\gamma_j}(\rho_j)<\infty$ and
\begin{equation}
    \phi_{\lambda_j,\gamma_j}(\rho_j)\le \inf_{\rho\in \overline{B}}\phi_{\lambda_j}(\rho)-\epsilon
\end{equation}
for all $j$.
By compactness, we may further assume that $\lambda_j\rightarrow\lambda_\infty\in K$ and $\rho_j\rightarrow\rho_\infty\in \overline{B}$.
To obtain a contradiction, it suffices to show that 
\begin{equation}\label{eq:limiting phi value}
    \liminf_{j\rightarrow\infty}\phi_{\lambda_j,\gamma_j}(\rho_j)\ge\phi_{\lambda_\infty}(\rho_\infty),
\end{equation}
using that $\lim_{j\rightarrow\infty}\inf_{\rho\in \overline{B}}\phi_{\lambda_j}(\rho)=\inf_{\rho\in \overline{B}}\phi_{\lambda_\infty}(\rho)$ by the boundedness of $B$ and the  continuity of $\phi_\lambda$.
In the soft-core case,~\eqref{eq:limiting phi value} follows from~\eqref{eq:uniform convergence below infinity}. In the hard-core case,~\eqref{eq:limiting phi value} follows from~\eqref{eq:uniform convergence below eta cp} if $\rho_\infty<\rho_\cp$, follows from~\eqref{eq:convergence at eta cp} (and~\eqref{eq:f extension to rho cp}) if $\rho_\infty=\rho_\cp$, and follows from~\eqref{eq:uniform convergence above eta cp} if $\rho_\infty>\rho_\cp$.
\end{proof}

\begin{claim}
\label{clm:tail bound}
For any compact $K\subset\R$, 
\begin{equation}
    \lim_{\rho_1\to\infty}\limsup_{\gamma\downarrow0}\sup_{\lambda\in K}\left\{\gamma^d\log\omega_{\lambda,\gamma}([\rho_1,\infty))\right\}=-\infty.
\end{equation}
\end{claim}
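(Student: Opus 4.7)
The plan is to exploit the quadratic lower bound on $f_\gamma$ furnished by Lemma~\ref{lem:model is well defined} to obtain Gaussian-type tails for $\omega_{\lambda,\gamma}$. I first dispose of the hard-core case, where the claim is immediate: Assumption 1(c) of Section~\ref{sec:convergence_assumptions} gives $f_\gamma(\rho)=\infty$ for every $\rho>\rho_{\max}$ and every $0<\gamma\le 1$, so $\omega_{\lambda,\gamma}([\rho_1,\infty))=0$ as soon as $\rho_1>\rho_{\max}$, and the $\limsup$ equals $-\infty$ for all such $\rho_1$.

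In the soft-core case, I would first pick an auxiliary $\alpha_0$ with $\alpha<\alpha_0<\alpha_{\max}$; this is possible since $\alpha<\alpha_{\max}$ is fixed throughout Section~\ref{sec:phase co-existence in box model}. Lemma~\ref{lem:model is well defined} then supplies $\gamma_{\alpha_0},\rho_{\alpha_0}>0$ such that $f_\gamma(\rho)\ge\tfrac{1}{2}\alpha_0\rho^2$ for $\rho\ge\rho_{\alpha_0}$ and $0<\gamma\le\gamma_{\alpha_0}$. Setting $M:=\sup_{\lambda\in K}|\lambda|<\infty$, this yields, for all such $\rho,\gamma$ and all $\lambda\in K$,
\begin{equation*}
  \phi_{\lambda,\gamma}(\rho) \;=\; -\lambda\rho-\tfrac{1}{2}\alpha\rho^2+f_\gamma(\rho) \;\ge\; \tfrac{1}{2}(\alpha_0-\alpha)\rho^2 - M\rho,
\end{equation*}
which is at least $c\rho^2$, where $c:=\tfrac{1}{4}(\alpha_0-\alpha)>0$, once $\rho\ge\rho_\ast:=\max\!\bigl(\rho_{\alpha_0},\,4M/(\alpha_0-\alpha)\bigr)$, uniformly in $\lambda\in K$ and $\gamma\in(0,\gamma_{\alpha_0}]$.

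Given $\rho_1\ge\rho_\ast$, I would linearize the exponent on $[\rho_1,\infty)$ via $\rho^2\ge\rho_1\rho$ to obtain
\begin{equation*}
  \omega_{\lambda,\gamma}([\rho_1,\infty)) \;\le\; \int_{\rho_1}^\infty e^{-\beta\gamma^{-d}c\rho_1\rho}\,d\nu_\gamma(\rho).
\end{equation*}
The right-hand side evaluates explicitly to $(\beta c\rho_1\gamma^{-d})^{-1}e^{-\beta c\rho_1^2\gamma^{-d}}$ in the continuous case, and is dominated by a geometric series with the same leading exponential factor and a prefactor polynomial in $\gamma^d$ and $\rho_1^{-1}$ in the discrete case. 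Taking $\gamma^d\log$ and using uniformity in $\lambda\in K$ therefore gives
\begin{equation*}
  \sup_{\lambda\in K}\gamma^d\log\omega_{\lambda,\gamma}([\rho_1,\infty)) \;\le\; -\beta c\rho_1^2 + o_{\gamma\downarrow 0}(1),
\end{equation*}
whence $\limsup_{\gamma\downarrow 0}\sup_{\lambda\in K}\gamma^d\log\omega_{\lambda,\gamma}([\rho_1,\infty))\le-\beta c\rho_1^2\to-\infty$ as $\rho_1\to\infty$. The only point requiring any care is the uniformity in $\lambda\in K$, which follows immediately from the boundedness of $|\lambda|$ on the compact set $K$; everything else is a routine tail estimate, so I do not expect a serious obstacle here.
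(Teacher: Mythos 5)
Your proof is correct and follows essentially the same route as the paper: both rely on the quadratic lower bound on $f_\gamma$ supplied by Lemma~\ref{lem:model is well defined} to get a superlinear lower bound on $\phi_{\lambda,\gamma}$, and then bound the tail of $\omega_{\lambda,\gamma}$. The only difference is a cosmetic one in how the tail integral is estimated (your linearization $\rho^2\ge\rho_1\rho$ followed by an explicit Gaussian/geometric computation, versus the paper's reduction to $e^{-\beta\gamma^{-d}(\rho+1)}$ and use of monotonicity together with the near-periodicity of $\nu_\gamma$), and your separate, trivial disposal of the hard-core case, which the paper absorbs uniformly.
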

\begin{proof}
    Fix $\alpha_0\in(\alpha,\alpha_\max)$. 
    Lemma~\ref{lem:model is well defined} shows that there exist $\gamma_{\alpha_0}>0$ and $\rho_{\alpha_0}\in[0,\infty)$ such that $f_\gamma(\rho)\ge \frac{1}{2}\alpha_0\rho^2$ for $\rho\ge\rho_{\alpha_0}$ and $0<\gamma\le \gamma_{\alpha_0}$. 
    Therefore, for all $\lambda\in K$, and taking $0<\gamma\le \gamma_{\alpha_0}$ sufficiently small and $\rho_1\ge\rho_{\alpha_0}$ sufficiently large, it holds that
    \begin{multline}
        \int_{[\rho_1,\infty)}e^{-\beta\gamma^{-d}\phi_{\lambda,\gamma}(\rho)}\dd{\nu}_\gamma(\rho) \le \int_{[\rho_1,\infty)}e^{-\beta\gamma^{-d}(-\rho\min K + \frac{1}{2}(\alpha_0-\alpha)\rho^2)}\dd{\nu}_\gamma(\rho)\\
        \le\int_{[\rho_1,\infty)}e^{-\beta\gamma^{-d}(\rho+1)}\dd{\nu}_\gamma(\rho) \le e^{-\beta\gamma^{-d}\rho_1} \nu_\gamma([\rho_1,\rho_1+1))\sum_{k=1}^\infty e^{-\beta\gamma^{-d}k} \le e^{-\beta\gamma^{-d}\rho_1},
    \end{multline}
    where the second inequality uses that $-\rho\min K + \frac{1}{2}(\alpha_0-\alpha)\rho^2\ge \rho+1$ for sufficiently large $\rho$, the third inequality uses the monotonicity of the integrand and the $1$-periodicity of $\nu_\gamma$, and the final inequality uses that $\gamma$ is sufficiently small and the definition of $\nu_\gamma$. 
    The claim follows.
\end{proof}

\begin{proof}[Proof of Lemma \ref{lem:normalized-measure_measure}]
Let $K\subset\R$ be compact and $B\subseteq[0,\infty)$ be non-empty and Borel, satisfying that $\inf_{\rho\in B} f(\rho)<\infty$.
To prove \eqref{eqn:normalized-measure_measure}, we first use Claim \ref{clm:tail bound} to find $\rho_1>0$ such that
\begin{equation}
\label{eqn:normalized-measure_rho1-second-property}
    \limsup_{\gamma\downarrow 0}\sup_{\lambda\in K}\left\{\gamma^d\log\int_{[\rho_1,\infty)}e^{-\beta\gamma^{-d}\phi_{\lambda,\gamma}(\rho)}\dd{\nu}_\gamma(\rho)\right\}
    \le -\beta\sup_{\lambda\in K}\inf_{\rho\in\overline{B}}\phi_\lambda(\rho).
\end{equation}
Let $B_1:=B\cap[0,\rho_1]$ and $B_2:=B\cap(\rho_1,\infty)$.
Splitting $\omega_{\lambda,\gamma}(B)=\omega_{\lambda,\gamma}(B_1)+\omega_{\lambda,\gamma}(B_2)$ and using the elementary inequality $\log(a+b)\le\log2+\max\set{\log a,\log b}$, we bound the LHS of \eqref{eqn:normalized-measure_measure} by
\begin{equation}
\label{eqn:normalized-measure_measure_max-bound}
\begin{multlined}
    \limsup_{\gamma\downarrow 0}\sup_{\lambda\in K}\left\{\gamma^d\log2+\max\set{
    \log\omega_{\lambda,\gamma}(B_1),
    \log\omega_{\lambda,\gamma}(B_2)
    }+\beta\inf_{\rho\in \overline{B}}\phi_{\lambda}(\rho)\right\}
    \\
    =\max
    \left\{\limsup_{\gamma\downarrow 0}\sup_{\lambda\in K}\left\{\gamma^d
    \log\omega_{\lambda,\gamma}(B_1)
    +\beta\inf_{\rho\in \overline{B}}\phi_{\lambda}(\rho)\right\},
    \right.
    \\
    \left.\limsup_{\gamma\downarrow 0}\sup_{\lambda\in K}\left\{\gamma^d
    \log\omega_{\lambda,\gamma}(B_2)
    +\beta\inf_{\rho\in \overline{B}}\phi_{\lambda}(\rho)\right\}
    \right\}.
\end{multlined}
\end{equation}
On the one hand, using Claim \ref{claim:box-model_hard-core_lower-bound} and that $\inf_{\rho\in B_1}\phi_{\lambda,\gamma}(\rho)\ge \inf_{\rho\in B}\phi_{\lambda,\gamma}(\rho)$,
\begin{equation}
\label{eqn:normalized-measure_measure_main-part}
\begin{multlined}
    \limsup_{\gamma\downarrow 0}\sup_{\lambda\in K}\left\{\gamma^d
    \log\omega_{\lambda,\gamma}(B_1)
    +\beta\inf_{\rho\in \overline{B}}\phi_{\lambda}(\rho)\right\}
    \\
    \le \limsup_{\gamma\downarrow 0}\gamma^d\log\nu_\gamma(B_1)
    -\beta\liminf_{\gamma\downarrow 0}\inf_{\lambda\in K}\left\{\inf_{\rho\in B}\phi_{\lambda,\gamma}(\rho)-\inf_{\rho\in \overline{B}}\phi_{\lambda}(\rho)\right\}
    \le 0.
\end{multlined}
\end{equation}
On the other hand, using \eqref{eqn:normalized-measure_rho1-second-property},
\begin{equation}
\label{eqn:normalized-measure_measure_tail}
\begin{multlined}
    \limsup_{\gamma\downarrow 0}\sup_{\lambda\in K}\left\{\gamma^d
    \log\omega_{\lambda,\gamma}(B_2)
    +\beta\inf_{\rho\in \overline{B}}\phi_{\lambda}(\rho)\right\}
    \\
    \le\limsup_{\gamma\downarrow 0}\sup_{\lambda\in K}\left\{\gamma^d\log\int_{[\rho_1,\infty)} e^{-\beta\gamma^{-d}\phi_{\lambda,\gamma}(\rho)}\dd{\nu}_\gamma(\rho)
    \right\}+\beta\sup_{\lambda\in K}\inf_{\rho\in \overline{B}}\phi_{\lambda}(\rho)
    \le 0.
\end{multlined}
\end{equation}
Combining \eqref{eqn:normalized-measure_measure_max-bound}, \eqref{eqn:normalized-measure_measure_main-part}, and \eqref{eqn:normalized-measure_measure_tail}, we get \eqref{eqn:normalized-measure_measure}.
\end{proof}

\begin{lemma}
    \label{lem:normalized-measure_normalization}
    For any compact $K\subset\R$,
    \begin{equation}
    \label{eqn:normalized-measure_normalization}
    \liminf_{\gamma\downarrow 0}\inf_{\lambda\in K}\inf_{L\ge 1}\left\{\frac{1}{\gamma^{-d}\abs{\Lambda_L}}\log\Xi^{L,\per}_{\lambda,\gamma}+\beta\inf_{\rho\in[0,\infty)}\phi_\lambda(\rho)\right\}\ge 0,
    \end{equation}
    where we introduced the shorthand $\psi_{\lambda,\gamma}:=\psi_{J_{\lambda,\gamma},\omega_{\lambda,\gamma}}$.
\end{lemma}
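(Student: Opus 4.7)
The plan is a standard ``restricted configuration'' lower bound on the partition function: restrict the integral defining $\Xi^{L,\per}_{\lambda,\gamma}$ to configurations where every spin lies in a short interval around an approximate minimizer of $\phi_\lambda$. For any $\rho^{\ast} \in [0, \infty)$ with $f(\rho^{\ast}) < \infty$ (and, in the hard-core case, $\rho^{\ast} + \delta < \rho_{\cp}$), and any small $\delta > 0$, set $B := [\rho^{\ast}, \rho^{\ast} + \delta]$. Restricting to the event $\{\eta_v \in B \text{ for all } v \in \Lambda_L\}$ and using $|E(\Lambda_L)| = d|\Lambda_L|$ together with $\diam(B) = \delta$ gives
\begin{equation*}
\Xi^{L,\per}_{\lambda,\gamma} \ge \nu_\gamma(B)^{|\Lambda_L|} \exp\left(-\beta \gamma^{-d}|\Lambda_L|\left[\sup_{\rho \in B} \phi_{\lambda,\gamma}(\rho) + \tfrac{1}{2}dJ_2\delta^2\right]\right),
\end{equation*}
equivalently
\begin{equation*}
\frac{\gamma^d}{|\Lambda_L|} \log \Xi^{L,\per}_{\lambda,\gamma} \ge \gamma^d \log \nu_\gamma(B) - \beta \sup_{\rho \in B} \phi_{\lambda,\gamma}(\rho) - \tfrac{1}{2}\beta dJ_2 \delta^2,
\end{equation*}
a bound independent of $L$, so $\inf_{L\ge 1}$ is handled trivially.

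The limit $\gamma \downarrow 0$ of this bound rests on two elementary observations. First, $\gamma^d \log \nu_\gamma(B) \to 0$: in the continuous case $\nu_\gamma(B) = \delta$, and in the discrete case $\nu_\gamma(B) = \gamma^d \cdot |\gamma^d \Z \cap B| \to \delta$ as $\gamma \downarrow 0$. Second, since $\phi_{\lambda,\gamma}(\rho) - \phi_\lambda(\rho) = f_\gamma(\rho) - f(\rho)$ does not depend on $\lambda$, the uniform convergence \eqref{eq:uniform convergence below eta cp} (in the hard-core case, applicable since $B \subset [0, \rho_{\cp})$) or \eqref{eq:uniform convergence below infinity} (in the soft-core case) yields $\sup_{\rho \in B}\phi_{\lambda,\gamma}(\rho) \to \sup_{\rho \in B} \phi_\lambda(\rho)$ uniformly in $\lambda \in K$.

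Uniformity over $\lambda \in K$ is then obtained by compactness. The function $m(\lambda) := \inf_{\rho \in [0,\infty)} \phi_\lambda(\rho)$ is concave in $\lambda$ (an infimum of affine-in-$\lambda$ maps), hence continuous. Fix $\varepsilon > 0$ and pick $\delta > 0$ small enough that $\tfrac{1}{2}\beta dJ_2 \delta^2 < \varepsilon/3$. For each $\lambda_0 \in K$, choose $\rho^{\ast}(\lambda_0)$ with $f(\rho^{\ast}(\lambda_0)) < \infty$, $\rho^{\ast}(\lambda_0) + \delta < \rho_{\cp}$ (hard-core case), and $\phi_{\lambda_0}(\rho^{\ast}(\lambda_0)) \le m(\lambda_0) + \varepsilon/(6\beta)$; such a $\rho^{\ast}$ always exists, the only delicate case being the hard-core one in which the minimizer of $\phi_{\lambda_0}$ is at $\rho_{\cp}$, handled by approximating $\phi_{\lambda_0}(\rho_{\cp})$ from below via the extension \eqref{eq:f extension to rho cp}. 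Since $\lambda \mapsto \sup_{\rho \in [\rho^{\ast}(\lambda_0), \rho^{\ast}(\lambda_0) + \delta]} \phi_\lambda(\rho)$ (convex in $\lambda$) and $\lambda \mapsto m(\lambda)$ (concave) are both continuous, $\sup_{\rho \in B} \phi_\lambda(\rho) - m(\lambda) \le \varepsilon/(3\beta)$ on some open neighborhood of $\lambda_0$ in $K$. A finite subcover of $K$ combined with the uniform convergence from the previous paragraph then delivers, for all sufficiently small $\gamma$,
\begin{equation*}
\frac{\gamma^d}{|\Lambda_L|}\log \Xi^{L,\per}_{\lambda,\gamma} + \beta m(\lambda) \ge -\varepsilon
\end{equation*}
uniformly in $\lambda \in K$ and $L$, which is the desired conclusion.

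The argument is essentially routine and presents no serious obstacle; the main place where care is needed is the hard-core case in which the minimizer of $\phi_\lambda$ sits at $\rho_{\cp}$, which is resolved by approaching $\rho_{\cp}$ from below as permitted by the extension \eqref{eq:f extension to rho cp}.
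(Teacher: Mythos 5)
Your skeleton is the same as the paper's (restrict the partition function to configurations with all spins in a short interval near a near-minimizer, use $\gamma^d\log\nu_\gamma(B)\to 0$, the energy cost $\tfrac12\beta dJ_2\delta^2$, and the fact that $\phi_{\lambda,\gamma}-\phi_\lambda=f_\gamma-f$ is $\lambda$-independent), and those parts are fine. The gap is in the uniformity step. You fix $\delta$ in advance using only the energy constraint $\tfrac12\beta dJ_2\delta^2<\varepsilon/3$, choose $\rho^\ast(\lambda_0)$ with $\phi_{\lambda_0}(\rho^\ast)\le m(\lambda_0)+\varepsilon/(6\beta)$, and then assert that $\sup_{\rho\in B}\phi_\lambda(\rho)-m(\lambda)\le\varepsilon/(3\beta)$ on a neighborhood of $\lambda_0$. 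Continuity in $\lambda$ only transfers this inequality from $\lambda_0$ to nearby $\lambda$; at $\lambda_0$ itself you have controlled $\phi_{\lambda_0}$ only at the single point $\rho^\ast$, not its supremum over the whole length-$\delta$ interval, and nothing in the assumptions prevents the near-minimizing set of $\phi_{\lambda_0}$ from having diameter much smaller than $\delta$. For instance, if $\phi_{\lambda_0}(\rho)=m(\lambda_0)+\min\{1,100|\rho-\rho_{\min}|/\delta\}$ (a perfectly admissible continuous $f$), then \emph{every} interval of length $\delta$ contains a point at distance at least $\delta/2$ from $\rho_{\min}$, so $\sup_B\phi_{\lambda_0}\ge m(\lambda_0)+1$ no matter where you place $B$, and your final bound degrades to roughly $-\beta-\varepsilon$ instead of $-\varepsilon$. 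The same premature fixing of $\delta$ undermines your existence claim in the hard-core case: if all $\varepsilon/(6\beta)$-near-minimizers of $\phi_{\lambda_0}$ lie within distance $\delta$ of $\rho_{\cp}$ (e.g.\ $\phi_{\lambda_0}$ strictly decreasing on $[0,\rho_{\cp})$), there is no $\rho^\ast$ with both $\phi_{\lambda_0}(\rho^\ast)\le m(\lambda_0)+\varepsilon/(6\beta)$ and $\rho^\ast+\delta<\rho_{\cp}$; approximating $\phi_{\lambda_0}(\rho_{\cp})$ from below via \eqref{eq:f extension to rho cp} produces points that violate the second requirement, so ``such a $\rho^\ast$ always exists'' is not justified.

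The missing ingredient is that the interval length must be sent to zero (or chosen via a uniform-continuity argument) rather than fixed from the energy cost alone, and this is exactly what the paper's Claim \ref{claim:box-model_pressure-localization} supplies: it first localizes near-minimizers of $\phi_\lambda$, uniformly over $\lambda\in K$, in a compact set $[0,\rho_1]$ (with $\rho_1<\rho_{\cp}$ in the hard-core case, and using the quadratic growth \eqref{eq:growth of f at infinity} with $\alpha<\alpha_{\max}$ in the soft-core case), then uses uniform continuity of $(\lambda,\rho)\mapsto\phi_\lambda(\rho)$ on $K\times[0,\rho_1+\xi]$ together with the uniform convergence $f_\gamma\to f$, taking $\gamma\downarrow 0$ \emph{before} $\xi\downarrow 0$. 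If you amend your argument accordingly (choose $\delta=\delta(\varepsilon,K)$ small enough that the oscillation of $\phi_\lambda$ over length-$\delta$ subintervals of the localization region is at most $\varepsilon/(6\beta)$ for all $\lambda\in K$, and place the interval at a minimizer of $\phi_\lambda$ on $[0,\rho_1]$, which in particular keeps it strictly below $\rho_{\cp}$), your restricted-configuration bound does yield the lemma, and it then coincides in substance with the paper's proof.
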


\begin{claim}
\label{claim:box-model_pressure-localization}
For any compact $K\subset\R$,
\begin{equation}
    \lim_{\xi\downarrow0}\limsup_{\gamma\downarrow0}\sup_{\lambda\in K}\inf_{\rho_0\in[0,\infty)}\left\{\sup_{\rho\in[\rho_0,\rho_0+\xi]}\phi_{\lambda,\gamma}(\rho)-\inf_{\rho\in[0,\infty)}\phi_\lambda(\rho)\right\}\le 0.
\end{equation}
\end{claim}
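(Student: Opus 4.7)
The plan is to construct, for each $\lambda\in K$ and $\xi>0$, a specific $\rho_0=\rho_0(\lambda,\xi)$ located near an (approximate) minimizer of $\phi_\lambda$ so that $[\rho_0,\rho_0+\xi]$ lies in a region where $f_\gamma\to f$ uniformly. Writing $\phi_{\lambda,\gamma}=\phi_\lambda+(f_\gamma-f)$ will then yield the three-term bound
\begin{equation*}
\sup_{\rho\in[\rho_0,\rho_0+\xi]}\phi_{\lambda,\gamma}(\rho)-\inf_{[0,\infty)}\phi_\lambda\le\bigl(\phi_\lambda(\rho_0)-\inf_{[0,\infty)}\phi_\lambda\bigr)+\omega_K(\xi)+\sup_{[0,R-\xi]}|f_\gamma-f|,
\end{equation*}
and the goal is to arrange that the first error tends to $0$ as $\xi\downarrow 0$ uniformly in $\lambda$, that the modulus $\omega_K(\xi)\to 0$, and that the last term $\to 0$ as $\gamma\downarrow 0$ for each fixed $\xi$.

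First I would fix $R>0$ such that $\inf_{[0,\infty)}\phi_\lambda=\inf_{[0,R]}\phi_\lambda$ for all $\lambda\in K$. In the soft-core case this comes from Lemma~\ref{lem:model is well defined} together with the compactness of $K$ (the quadratic growth at infinity with $\alpha<\alpha_{\max}$ confines minimizers to a set bounded uniformly in $\lambda\in K$). In the hard-core case one takes $R=\rho_\cp$, since $f\equiv+\infty$ on $(\rho_\cp,\infty)$. I would then let $\rho_0(\lambda,\xi)$ be a minimizer of $\phi_\lambda$ on the compact interval $[0,R-2\xi]$, available by continuity of $\phi_\lambda$ there (the interval sits inside $[0,\rho_\cp)$ in the hard-core case and inside $[0,R]$ in the soft-core case). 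The inclusion $[\rho_0,\rho_0+\xi]\subseteq[0,R-\xi]$, combined with the appropriate uniform convergence assumption~\eqref{eq:uniform convergence below eta cp} or~\eqref{eq:uniform convergence below infinity}, supplies $\sup_{[0,R-\xi]}|f_\gamma-f|\to 0$ as $\gamma\downarrow 0$ for each fixed $\xi$, while a joint modulus $\omega_K(\xi)\to 0$ exists because the short-scale variations of $\phi_\lambda$ come from the smooth piece $-\lambda\rho-\frac12\alpha\rho^2$ (with Lipschitz constant $\max_K|\lambda|+\alpha R$) and from $f$, which is uniformly continuous on the relevant closed bounded piece of its domain.

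The first error term requires some care. Pointwise convergence $\phi_\lambda(\rho_0(\lambda,\xi))\to\inf\phi_\lambda$ as $\xi\downarrow 0$ is immediate: in the soft-core case and in the hard-core subcase $f(\rho_\cp)=+\infty$ the actual minimizer lies in $[0,R-2\xi]$ for all sufficiently small $\xi$, while in the hard-core subcase $f(\rho_\cp)<+\infty$ the left-continuity of the extension $f(\rho_\cp):=\lim_{\rho\uparrow\rho_\cp}f(\rho)$ makes $\inf_{[0,\rho_\cp-2\xi]}\phi_\lambda\to\inf\phi_\lambda$. I would then promote this to uniformity in $\lambda\in K$ by noting that $\lambda\mapsto\phi_\lambda(\rho_0(\lambda,\xi))-\inf\phi_\lambda$ is $2R$-Lipschitz uniformly in $\xi$: indeed $\phi_\lambda(\rho)-\phi_{\lambda'}(\rho)=(\lambda'-\lambda)\rho$ is linear in $\rho$ on $[0,R]$, which makes both $\inf_{[0,R-2\xi]}\phi_\lambda$ and $\inf\phi_\lambda$ Lipschitz in $\lambda$. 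An equicontinuous family on the compact set $K$ converging pointwise to $0$ converges uniformly, closing the argument.

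The main obstacle is the hard-core subcase with $f(\rho_\cp)<\infty$ and the global minimum attained at $\rho_\cp$, because there the uniform convergence $f_\gamma\to f$ does not extend up to $\rho_\cp$ and $f_\gamma$ could a priori be large near $\rho_\cp$. The device of shrinking the working interval to $[0,R-2\xi]$, so that $\rho_0+\xi\le R-\xi<\rho_\cp$, and appealing to the left-continuity of $f$ at $\rho_\cp$ is what sidesteps this issue; the Lipschitz-in-$\lambda$ structure of $\phi_\lambda$ then upgrades pointwise control to the uniform control in $\lambda\in K$ that the claim demands.
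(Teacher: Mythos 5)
Your approach is broadly correct and takes a genuinely different route from the paper's. You choose the base point $\rho_0(\lambda,\xi)$ as a minimizer on a $\xi$-dependent interval $[0,R-2\xi]$, show the first error term tends to zero pointwise in $\lambda$, and then upgrade to uniformity via equicontinuity (the $2R$-Lipschitz bound in $\lambda$ is correct, since both $\inf_{[0,R-2\xi]}\phi_\lambda$ and $\inf_{[0,\infty)}\phi_\lambda=\inf_{[0,R]}\phi_\lambda$ are infima over bounded $\rho$ of functions affine in $\lambda$ with slope $-\rho$). The paper instead uses an $\epsilon$-first scheme: it fixes $\epsilon>0$, chooses a fixed $\rho_1$ (independent of $\xi$) on which the infimum is $\epsilon$-close to the global one, takes $\rho_0(\lambda)$ as a minimizer on the fixed interval $[0,\rho_1]$, and then sends $\gamma\downarrow 0$, $\xi\downarrow 0$, $\epsilon\downarrow 0$ in order. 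This keeps the entire argument on a fixed compact set, which turns out to matter.

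There is one genuine gap in your version, concentrated in the hard-core subcase $f(\rho_\cp)=+\infty$: the assertion that a joint modulus $\omega_K(\xi)\to 0$ exists for the oscillation of $\phi_\lambda$ over $[\rho_0,\rho_0+\xi]$, because ``$f$ is uniformly continuous on the relevant closed bounded piece of its domain.'' The working interval $[0,R-\xi]=[0,\rho_\cp-\xi]$ is not a fixed compact set; it fills out $[0,\rho_\cp)$ as $\xi\downarrow 0$, and $f$ is \emph{not} uniformly continuous on $[0,\rho_\cp)$ when it diverges at $\rho_\cp$. Concretely, if $f(\rho)\sim(\rho_\cp-\rho)^{-1}$ then the oscillation of $f$ over $[\rho_\cp-2\xi,\rho_\cp-\xi]\subset[0,\rho_\cp-\xi]$ is of order $\xi^{-1}$, so the modulus of continuity on $[0,\rho_\cp-\xi]$ at scale $\xi$ blows up rather than vanishing. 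What saves the argument is that the minimizer $\rho_0(\lambda,\xi)$ is in fact bounded away from $\rho_\cp$ \emph{uniformly} in $\lambda\in K$ and $\xi$: one has $\phi_\lambda(\rho_0)\le\phi_\lambda(0)=f(0)$, bounded over $\lambda\in K$, whereas $\phi_\lambda(\rho)\to+\infty$ uniformly in $\lambda\in K$ as $\rho\uparrow\rho_\cp$, so there is a $\delta>0$ with $\rho_0(\lambda,\xi)\le\rho_\cp-\delta$ for all $\lambda\in K,\xi>0$. Hence $[\rho_0,\rho_0+\xi]$ lies in the fixed compact set $[0,\rho_\cp-\delta/2]$ for $\xi<\delta/2$, and a genuine modulus does exist there. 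You observe that the true minimizer lies in $[0,R-2\xi]$ for small $\xi$, but you invoke this only for the first error term; it must also be used to pin down the interval over which the modulus is taken. Once that is added, your proof is complete.
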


\begin{proof}
    Let $\epsilon>0$. In the soft-core case, choose $\rho_1<\infty$ such that
    \begin{equation}
        \inf_{\rho\in[0,\rho_1]}\phi_\lambda(\rho)=\inf_{\rho\in[0,\infty)}\phi_\lambda(\rho)\quad\text{for all $\lambda\in K$.}
    \end{equation}
    This is possible since $K$ is bounded and using the quadratic growth~\eqref{eq:growth of f at infinity} of $f$, together with our choice $\alpha<\alpha_\max$ (and the definition~\eqref{eq:mean-field free energy density} of $\phi_\lambda$). In the hard-core case, choose $\rho_1<\rho_\cp$ such that
    \begin{equation}
        \sup_{\lambda\in K}\left\{\inf_{\rho\in[0,\rho_1]}\phi_\lambda(\rho)-\inf_{\rho\in[0,\rho_\cp]}\phi_\lambda(\rho)\right\}\le\epsilon.
    \end{equation}
    This is possible since $K$ is bounded and as $f$ is continuous on $[0,\rho_\cp]$ (at $\rho_\cp$, we mean this in the generalized sense~\eqref{eq:f extension to rho cp} if $f(\rho_\cp)=\infty$).
        
    As $\phi_\lambda$ is continuous, for each $\lambda\in K$, there exists $\rho_0(\lambda)\in[0,\rho_1]$ such that $\phi_\lambda(\rho_0(\lambda))=\inf_{\rho\in[0,\rho_1]}\phi_\lambda(\rho)$.
    Then, for all small enough $\xi,\gamma>0$,
    \begin{equation}
        \begin{split}
            {}&\sup_{\lambda\in K}\inf_{\rho_0\in[0,\infty)}\left\{\sup_{\rho\in[\rho_0,\rho_0+\xi]}\phi_{\lambda,\gamma}(\rho)-\inf_{\rho\in[0,\infty)}\phi_\lambda(\rho)\right\}
            \\
            \le{}&\sup_{\lambda\in K}\left\{\sup_{\rho\in[\rho_0(\lambda),\rho_0(\lambda)+\xi]}\phi_{\lambda,\gamma}(\rho)-\inf_{\rho\in[0,\infty)}\phi_\lambda(\rho)\right\}
            \\
            \le{}&\sup_{\lambda\in K}\left\{\sup_{\rho\in[\rho_0(\lambda),\rho_0(\lambda)+\xi]}\phi_{\lambda}(\rho)-\inf_{\rho\in[0,\infty)}\phi_\lambda(\rho)\right\}+\epsilon
            \\
            \le{}&\sup_{\lambda\in K}\left\{\phi_{\lambda}(\rho_0(\lambda))-\inf_{\rho\in[0,\infty)}\phi_\lambda(\rho)\right\}+2\epsilon
            \\
            \le{}&3\epsilon,
        \end{split}
    \end{equation}
    where we used the uniform convergence assumption~\eqref{eq:uniform convergence below eta cp} (hard-core case) or~\eqref{eq:uniform convergence below infinity} (soft-core case) in the second inequality, the uniform continuity of $(\lambda,\rho)\mapsto\phi_\lambda(\rho)$ on $K\times[0,\rho_1+\xi]$ in the third, and the definition of $\rho_0(\lambda)$ and $\rho_1$ in the last.
    The proof is complete after taking $\gamma\downarrow 0$, $\xi\downarrow 0$, and $\epsilon\downarrow 0$.
\end{proof}

\begin{proof}[Proof of Lemma \ref{lem:normalized-measure_normalization}]
To prove~\eqref{eqn:normalized-measure_normalization}, we bound, using \eqref{eqn:partition-function-lower-bound},
\begin{equation}
    \inf_{L\ge 1}\frac{1}{\gamma^{-d}\abs{\Lambda_L}}\log\Xi^{L,\per}_{\lambda,\gamma}
    \ge \sup_{S}\left\{-dJ_{\lambda,\gamma}\diam(S)^2+\log\omega_{\lambda,\gamma}(S)\right\}.
\end{equation}
Let $\xi>0$. 
By restricting to sets $S$ of the form $[\rho_0,\rho_0+\xi]$, where $\rho_0\in [0,\infty)$, we bound the LHS of \eqref{eqn:normalized-measure_normalization} below by
\begin{equation}
\begin{multlined}
    \liminf_{\gamma\downarrow 0}\inf_{\lambda\in K}\sup_{\rho_0\in[0,\infty)}\left\{-\gamma^d dJ_{\lambda,\gamma}\xi^2+\gamma^d\log\omega_{\lambda,\gamma}(S)+\beta\inf_{\rho\in[0,\infty)}\phi_\lambda(\rho)\right\}
    \\
    =-\frac{1}{2}\beta d J_2\xi^2
    +\liminf_{\gamma\downarrow 0}\inf_{\lambda\in K}\sup_{\rho_0\in[0,\infty)}\left\{\gamma^d\log\omega_{\lambda,\gamma}(S)+\beta\inf_{\rho\in[0,\infty)}\phi_\lambda(\rho)\right\},
\end{multlined}
\end{equation}
where $S$ is the shorthand for $[\rho_0,\rho_0+\xi]$.
Thus, it suffices to show that
\begin{equation}
    \lim_{\xi\downarrow 0}\liminf_{\gamma\downarrow 0}\inf_{\lambda\in K}\sup_{\rho_0\in[0,\infty)}\left\{\gamma^d\log\int_{[\rho_0,\rho_0+\xi]}e^{-\beta\gamma^{-d}\phi_{\lambda,\gamma}(\rho)}\dd{\nu}_\gamma(\rho)+\beta\inf_{\rho\in[0,\infty)}\phi_\lambda(\rho)\right\}\ge 0.
\end{equation}
We bound
\begin{equation}
\begin{split}
    {}&\sup_{\rho_0\in[0,\infty)}\left\{\gamma^d\log\int_{[\rho_0,\rho_0+\xi]}e^{-\beta\gamma^{-d}\phi_{\lambda,\gamma}(\rho)}\dd{\nu}_\gamma(\rho)+\beta\inf_{\rho\in[0,\infty)}\phi_\lambda(\rho)\right\}
    \\
    \ge{}&\sup_{\rho_0\in[0,\infty)}\left\{\gamma^d\log\nu_\gamma([\rho_0,\rho_0+\xi])-\beta\sup_{\rho\in[\rho_0,\rho_0+\xi]}\phi_{\lambda,\gamma}(\rho)+\beta\inf_{\rho\in[0,\infty)}\phi_\lambda(\rho)\right\}
    \\
    \ge{}&\inf_{\rho_0\in[0,\infty)}\left\{\gamma^d\log\nu_\gamma([\rho_0,\rho_0+\xi])\right\}
    -\beta\inf_{\rho_0\in[0,\infty)}\left\{\sup_{\rho\in[\rho_0,\rho_0+\xi]}\phi_{\lambda,\gamma}(\rho)-\inf_{\rho\in[0,\infty)}\phi_\lambda(\rho)\right\}.
\end{split}
\end{equation}
Since
\begin{equation}
    \liminf_{\gamma\downarrow 0}\inf_{\rho_0\in[0,\infty)}\left\{\gamma^d\log\nu_\gamma([\rho_0,\rho_0+\xi])\right\}=0,
\end{equation}
we deduce~\eqref{eqn:normalized-measure_normalization} using Claim~\ref{claim:box-model_pressure-localization}.
\end{proof}

\subsection{Deduction of Proposition~\ref{prop:normalized-measure} from Lemma~\ref{lem:normalized-measure_measure} and Lemma~\ref{lem:normalized-measure_normalization}}
\label{sec:proof of normalized measure estimate}

Let $K\subset\R$ be compact and $B\subseteq[0,\infty)$ be non-empty and Borel, satisfying that $\inf_{\rho\in B} f(\rho)<\infty$.
By \eqref{normalized_omega},
\begin{equation}
    \gamma^d\log\tilde{\omega}_{\lambda,\gamma}(B)
    =\gamma^d\log\omega_{\lambda,\gamma}(B)
    -\gamma^d\psi_{\lambda,\gamma},
\end{equation}
so the LHS of \eqref{eqn:normalized-measure} is bounded above by
\begin{equation}
\begin{multlined}
    \limsup_{\gamma\downarrow 0}\sup_{\lambda\in K}\left\{\gamma^d\log\omega_{\lambda,\gamma}(B)+\beta\inf_{\rho\in \overline{B}}\phi_{\lambda}(\rho)\right\}
    \\
    +\limsup_{\gamma\downarrow 0}\sup_{\lambda\in K}\left\{-\gamma^d\psi_{\lambda,\gamma}-\beta\inf_{\rho\in[0,\infty)}\phi_\lambda(\rho)\right\}.
\end{multlined}
\end{equation}
The proposition now follows from Lemmas \ref{lem:normalized-measure_measure} and \ref{lem:normalized-measure_normalization} and the definition \eqref{eqn:spin-model_free-energy} of $\psi_{\lambda,\gamma}$.

\subsection{Proof of Theorem~\ref{thm:comparison_with_GP}}

Let $\lambda\in\R$.
By Lemma~\ref{lem:normalized-measure_normalization},
\begin{equation}
\label{eqn:comparison_with_GP-lower_bound}
    \liminf_{\substack{L\to\infty \\ \gamma\downarrow 0}}\left\{\frac{1}{\gamma^{-d}|\Lambda_L|}\log\Xi^{L, \per}_{\lambda,\gamma}+\beta\inf_{\rho}\phi_\lambda(\rho)\right\}\ge 0.
\end{equation}
For an upper bound, we use the trivial bound ${H^L_{J_{\lambda,\gamma}}(\eta)}\ge 0$ in \eqref{partition_function} to obtain
\begin{equation}
    \sup_{L\ge 1}\left\{\frac{1}{\gamma^{-d}|\Lambda_L|}\log\Xi^{L, \per}_{\lambda,\gamma}\right\}\le\log\omega_{\lambda,\gamma}([0,\infty)).
\end{equation}
Applying Lemma~\ref{lem:normalized-measure_measure} with $B=[0,\infty)$, we get that
\begin{equation}
\label{eqn:comparison_with_GP-upper_bound}
    \limsup_{\substack{L\to\infty\\\gamma\downarrow 0}}\left\{\frac{1}{\gamma^{-d}|\Lambda_L|}\log\Xi^{L, \per}_{\lambda,\gamma}+\beta\inf_{\rho}\phi_{\lambda}(\rho)\right\}
    \le\limsup_{\gamma\downarrow 0}\left\{\gamma^d\log\omega_{\lambda,\gamma}(B)+\beta\inf_{\rho}\phi_{\lambda}(\rho)\right\}
    \le 0.
\end{equation}
The theorem follows from \eqref{eqn:comparison_with_GP-lower_bound} and \eqref{eqn:comparison_with_GP-upper_bound}.

\subsection*{Acknowledgements}
JLL thanks Roman Koteck\'y for emphasizing the importance of the problem of rigorously establishing the liquid-vapor phase transition in a 2022 IHES meeting.

QH is supported by an SAS fellowship at Rutgers University.
The research of RP is partially supported by the Israel Science Foundation grants 1971/19 and 2340/23, by the European Research Council Consolidator grant 101002733 (Transitions) and by the National Science Foundation grant DMS-2451133.
Part of this work was completed while RP was a visiting fellow at the Mathematics Department of Princeton University, a visitor of the Institute for Advanced Study and a consultant at Rutgers University. RP is grateful for their support.
IJ gratefully acknowledges support through NSF Grant DMS-2349077, and the Simons Foundation, Grant Number 825876.

\bibliographystyle{plain}
\bibliography{bibliography}

\begin{thebibliography}{10}

\bibitem{andrews1869xviii}
Thomas Andrews.
\newblock {XVIII. The Bakerian Lecture.---On the continuity of the gaseous and
  liquid states of matter}.
\newblock {\em Philosophical Transactions of the Royal Society of London},
  (159):575--590, 1869.

\bibitem{Bernal33}
John~D Bernal and Ralph~H Fowler.
\newblock A theory of water and ionic solution, with particular reference to
  hydrogen and hydroxyl ions.
\newblock {\em The Journal of Chemical Physics}, 1(8):515--548, 1933.

\bibitem{billingsley2013convergence}
Patrick Billingsley.
\newblock {\em Convergence of probability measures}.
\newblock John Wiley \& Sons, 2013.

\bibitem{biskup2009reflection}
Marek Biskup.
\newblock Reflection positivity and phase transitions in lattice spin models.
\newblock {\em Methods of contemporary mathematical statistical physics}, pages
  1--86, 2009.

\bibitem{biskup2003rigorous}
Marek Biskup and Lincoln Chayes.
\newblock Rigorous analysis of discontinuous phase transitions via mean-field
  bounds.
\newblock {\em Communications in mathematical physics}, 238(1):53--93, 2003.

\bibitem{biskup2006mean}
Marek Biskup, Lincoln Chayes, and Nicholas Crawford.
\newblock Mean-field driven first-order phase transitions in systems with
  long-range interactions.
\newblock {\em Journal of Statistical Physics}, 122(6):1139--1193, 2006.

\bibitem{bodineau1997phase}
T~Bodineau and E~Presutti.
\newblock Phase diagram of {I}sing systems with additional long range forces.
\newblock {\em Communications in mathematical physics}, 189:287--298, 1997.

\bibitem{bovier1997low}
Anton Bovier and Milo{\v{s}} Zahradn{\'\i}k.
\newblock The low-temperature phase of {K}ac-{I}sing models.
\newblock {\em Journal of statistical physics}, 87:311--332, 1997.

\bibitem{cassandro1996phase}
M~Cassandro and E~Presutti.
\newblock Phase transitions in {I}sing systems with long but finite range
  interactions.
\newblock {\em Markov Process. Related Fields}, 2(2):241--262, 1996.

\bibitem{Maxwell75}
J.~Clerk-Maxwell.
\newblock On the dynamical evidence of the molecular constitution of bodies.
\newblock {\em Nature}, 11(279):357--359, 1875.

\bibitem{dobruschin1968description}
PL~Dobruschin.
\newblock The description of a random field by means of conditional
  probabilities and conditions of its regularity.
\newblock {\em Theory of Probability \& Its Applications}, 13(2):197--224,
  1968.

\bibitem{dobrushin1981phases}
RL~Dobrushin and SB~Shlosman.
\newblock Phases corresponding to minima of the local energy.
\newblock {\em Selecta Math. Soviet}, 1(4):317--338, 1981.

\bibitem{ethier2009markov}
Stewart~N Ethier and Thomas~G Kurtz.
\newblock {\em Markov processes: characterization and convergence}.
\newblock John Wiley \& Sons, 2009.

\bibitem{friedli2017statistical}
Sacha Friedli and Yvan Velenik.
\newblock {\em Statistical mechanics of lattice systems: a concrete
  mathematical introduction}.
\newblock Cambridge University Press, 2017.

\bibitem{frohlich1978phase}
J{\"u}rg Fr{\"o}hlich and Elliott~H Lieb.
\newblock Phase transitions in anisotropic lattice spin systems.
\newblock {\em Communications in Mathematical Physics}, 60(3):233--267, 1978.

\bibitem{gates1969vani}
DJ~Gates and O~Penrose.
\newblock The van der {W}aals limit for classical systems. {I}. {A} variational
  principle.
\newblock {\em Communications in Mathematical Physics}, 15:255--276, 1969.

\bibitem{gates1970vaniii}
DJ~Gates and O~Penrose.
\newblock {The van der {W}aals limit for classical systems: {III}. {D}eviation
  from the van der {W}aals-{M}axwell theory}.
\newblock {\em Communications in Mathematical Physics}, 17(3):194--209, 1970.

\bibitem{gates1970vanii}
DJ~Gates and O~Penrose.
\newblock The van der {W}aals limit for classical systems: {II}. {E}xistence
  and continuity of the canonical pressure.
\newblock {\em Communications in Mathematical Physics}, 16(3):231--237, 1970.

\bibitem{georgii2011gibbs}
Hans-Otto Georgii.
\newblock {\em Gibbs measures and phase transitions}.
\newblock Walter de Gruyter GmbH \& Co. KG, Berlin, 2011.

\bibitem{hadas2025columnar}
Daniel Hadas and Ron Peled.
\newblock Columnar order in random packings of $2\times2$ squares on the square
  lattice.
\newblock {\em Journal of the European Mathematical Society}, 2025.

\bibitem{hansen1969phase}
Jean-Pierre Hansen and Loup Verlet.
\newblock Phase transitions of the {L}ennard-{J}ones system.
\newblock {\em physical Review}, 184(1):151, 1969.

\bibitem{johansson1995separation}
Kurt Johansson.
\newblock On separation of phases in one-dimensional gases.
\newblock {\em Communications in mathematical physics}, 169:521--561, 1995.

\bibitem{kac1963van}
M~Kac, GE~Uhlenbeck, and PC~Hemmer.
\newblock On the van der {W}aals theory of the vapor-liquid equilibrium. {I}.
  {D}iscussion of a one-dimensional model.
\newblock {\em Journal of Mathematical Physics}, 4(2):216--228, 1963.

\bibitem{lanford1969observables}
Oscar~E Lanford~III and David Ruelle.
\newblock Observables at infinity and states with short range correlations in
  statistical mechanics.
\newblock {\em Communications in Mathematical Physics}, 13(3):194--215, 1969.

\bibitem{lebowitz1999liquid}
JL~Lebowitz, A~Mazel, and E~Presutti.
\newblock Liquid-vapor phase transitions for systems with finite-range
  interactions.
\newblock {\em Journal of statistical physics}, 94:955--1025, 1999.

\bibitem{lebowitz1998rigorous}
JL~Lebowitz, AE~Mazel, and E~Presutti.
\newblock Rigorous proof of a liquid-vapor phase transition in a continuum
  particle system.
\newblock {\em Physical review letters}, 80(21):4701, 1998.

\bibitem{lebowitz1966rigorous}
Joel~L Lebowitz and Oliver Penrose.
\newblock Rigorous treatment of the van der {W}aals-{M}axwell theory of the
  liquid-vapor transition.
\newblock {\em Journal of Mathematical Physics}, 7(1):98--113, 1966.

\bibitem{mcdonald1972equation}
IR~McDonald and K~Singer.
\newblock An equation of state for simple liquids.
\newblock {\em Molecular Physics}, 23(1):29--40, 1972.

\bibitem{Millot92}
Claude Millot and Anthony~J Stone.
\newblock Towards an accurate intermolecular potential for water.
\newblock {\em Molecular Physics}, 77(3):439--462, 1992.

\bibitem{peled2019lectures}
Ron Peled and Yinon Spinka.
\newblock Lectures on the spin and loop $o(n)$ models.
\newblock In {\em Sojourns in Probability Theory and Statistical Physics-I:
  Spin Glasses and Statistical Mechanics, A Festschrift for Charles M. Newman},
  pages 246--320. Springer, 2019.

\bibitem{presutti2008scaling}
Errico Presutti.
\newblock {\em Scaling limits in statistical mechanics and microstructures in
  continuum mechanics}.
\newblock Springer Science \& Business Media, 2008.

\bibitem{ruelle1963classical}
D~Ruelle.
\newblock Classical statistical mechanics of a system of particles.
\newblock {\em Helvetica Physica Acta (Switzerland)}, 36, 1963.

\bibitem{Ruelle69}
David Ruelle.
\newblock {\em Statistical mechanics: Rigorous results}.
\newblock World Scientific, 1969.

\bibitem{ruelle1971existence}
David Ruelle.
\newblock Existence of a phase transition in a continuous classical system.
\newblock {\em Physical Review Letters}, 27(16):1040, 1971.

\bibitem{shlosman1986method}
Semen~Bensionovich Shlosman.
\newblock The method of reflection positivity in the mathematical theory of
  first-order phase transitions.
\newblock {\em Russian Mathematical Survefys}, 41(3):83, 1986.

\bibitem{van73}
Johannes~Diderik Van~der Waals.
\newblock {\em Over de Continuiteit van den Gas- en Vloeistoftoestand},
  volume~1.
\newblock Sijthoff, 1873.

\end{thebibliography}

\appendix

\section{Convergence assumptions for particle systems}\label{app:ruelle}

The convergence assumptions in Section \ref{sec:convergence_assumptions} are satisfied for systems of particles interacting via {\it well-behaved} pair potentials.
These results were proved by Ruelle \cite{ruelle1963classical, Ruelle69}, and are recalled in this appendix.

Consider a continuum particle system in the box $[0,\gamma^{-1}]^d$, interacting via the Hamiltonian
\begin{equation}
  H(x_1,\cdots,x_N)=\sum_{i<j}\phi(x_i-x_j)
\end{equation}
where $\phi$ is an even function of $\mathbb R^d$.
We say that $\phi$ is {\it stable} \cite[Definition 3.2.1]{Ruelle69} if there exists $B \geqslant 0$ such that
\begin{equation}
  H(x_1,\cdots,x_N) \geqslant -NB
\end{equation}
and {\it tempered} \cite[(1.12)]{Ruelle69} if
\begin{equation}
  \phi(x) \leqslant A|x|^{-\lambda}
  \quad \mathrm{for}\quad
  |x| \geqslant R_0
\end{equation}
for some $\lambda>d$ and $A,R_0>0$.
Let $f_\gamma(\rho)$ denote the canonical free energy
\begin{equation}
  f_\gamma(N \gamma^{d}):=-\frac1{\beta \gamma^{-d}}\log \frac1{N!}\int \dd{x}_1\cdots \dd{x}_Ne^{-\beta H(x_1,\cdots,x_N)}
  \label{fparticle}
\end{equation}
($\beta$ is the inverse temperature, which we consider fixed, and so do not make it explicit in the notation.)

\begin{theorem}{\rm(\cite[Theorem 3.4.4]{Ruelle69})}:\label{theo:particle}
  If the potential is stable and tempered, then there exists $\rho_{\mathrm{cp}}\in[0,\infty]$ and a convex (and thus continuous) function $f:[0,\rho_{\mathrm{cp}})\to \mathbb R$ such that, for all $\rho \geqslant 0$ and any $\rho_\gamma$ such that $\rho_\gamma\to\rho$, the following hold.
  If $\rho<\rho_{\mathrm{cp}}$
  \begin{equation}
    \lim_{\gamma\downarrow 0}f_\gamma(\rho_\gamma)=f(\rho)
    \label{ruelle_lim1}
  \end{equation}
  if $\rho=\rho_{\mathrm{cp}}$
  \begin{equation}
    \liminf_{\gamma\downarrow 0}f_\gamma(\rho_\gamma)\geqslant \lim_{\rho\uparrow\rho_{\mathrm{cp}}}f(\rho)
    \label{ruelle_lim2}
  \end{equation}
  and if $\rho>\rho_{\mathrm{cp}}$ then
  \begin{equation}
    \lim_{\gamma\downarrow0}f_\gamma(\rho_\gamma)=\infty
    .
    \label{ruelle_lim3}
  \end{equation}
\end{theorem}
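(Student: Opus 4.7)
The statement is Ruelle's classical thermodynamic-limit theorem for stable, tempered pair potentials, so the plan is to reconstruct a streamlined version of the argument from~\cite[Ch.~3]{Ruelle69}. The central object is the close-packing density
\begin{equation}
    \rho_{\mathrm{cp}}:=\sup\set{\rho\ge 0\mid \limsup_{\gamma\downarrow 0}f_\gamma(\rho)<\infty},
\end{equation}
which lies in $[0,\infty]$; for a purely soft-core potential one finds $\rho_{\mathrm{cp}}=\infty$, whereas a hard core of diameter $r_0$ forces $\rho_{\mathrm{cp}}\le (\text{sphere-packing density})/r_0^d$. The overall strategy is to establish, in order: (a)~a uniform lower bound on $f_\gamma$ from stability, (b)~an approximate subadditivity of the box partition function from temperedness, and (c)~an approximate mid-point concavity of $\rho\mapsto -\beta\rho f_\gamma(\rho)$ (i.e.\ convexity of $f_\gamma$ up to vanishing error) from a mixing construction. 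These three ingredients, together with Arzel\`a--Ascoli on bounded density intervals, yield the limit function $f$ and the three convergence statements.

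For (a), stability gives $H(x_1,\dots,x_N)\ge -NB$, so inserting this into~\eqref{fparticle} produces the crude bound $f_\gamma(\rho)\ge -B\rho -\rho\beta^{-1}(1+\log(e/\rho))$, uniform in $\gamma$; in particular $f_\gamma$ is bounded below on every bounded density interval. For (b), given a large box $[0,\gamma^{-1}]^d$ I would partition it into $k^d$ congruent sub-cubes of side $\gamma^{-1}/k$, separated by corridors of width $\ell$; for the lower bound on $Z_\gamma(N,\Lambda)$ restrict to configurations that place $\lfloor N/k^d\rfloor$ particles in each sub-cube and use temperedness to show the cross interactions between distinct sub-cubes are $O(N\ell^{-(\lambda-d)})$, hence negligible. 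This yields the key super-multiplicativity
\begin{equation}
    -\beta\gamma^{-d}f_\gamma(\rho)\ge k^d\left(-\beta(k\gamma)^{-d}f_{k\gamma}(\rho)\right)-C\gamma^{-d}\rho\,\ell^{-(\lambda-d)},
\end{equation}
and sending $\gamma\downarrow 0$ along suitable subsequences with $\ell=\ell(\gamma)\to\infty$ slowly produces a limit $f(\rho):=\lim_\gamma f_\gamma(\rho)$ on each bounded interval $[0,\rho_0]\subset [0,\rho_{\mathrm{cp}})$.

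For (c), I would glue two configurations of densities $\rho_1,\rho_2$ placed in the two halves of a box separated by a corridor of width $\ell$; temperedness again controls the cross term, so
\begin{equation}
    f_\gamma\!\left(\tfrac{\rho_1+\rho_2}{2}\right)\le \tfrac12 f_\gamma(\rho_1)+\tfrac12 f_\gamma(\rho_2)+o_\gamma(1),
\end{equation}
establishing the convexity of the limit $f$. Convex functions on an open interval are automatically continuous and uniformly continuous on compact sub-intervals, which upgrades pointwise convergence on a dense set to~\eqref{ruelle_lim1} for sequences $\rho_\gamma\to\rho\in[0,\rho_{\mathrm{cp}})$. Then~\eqref{ruelle_lim2} follows from Fatou applied to the uniform-on-compacts convergence on $[0,\rho_{\mathrm{cp}})$, together with monotonicity of $\lim_{\rho\uparrow\rho_{\mathrm{cp}}}f(\rho)$ guaranteed by convexity. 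Finally,~\eqref{ruelle_lim3} follows from the definition of $\rho_{\mathrm{cp}}$: for $\rho>\rho_{\mathrm{cp}}$ one can show $f_\gamma(\rho_\gamma)\to\infty$ by combining the stability bound with a monotonicity/convexity argument that any cluster point of $f_\gamma(\rho_\gamma)$ would contradict the definition of $\rho_{\mathrm{cp}}$ as the right endpoint of finiteness.

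The main obstacle, and the reason this appears as a theorem rather than a remark, is the careful bookkeeping in (b)--(c) needed to ensure that the corridor width $\ell(\gamma)$ and the sub-cube scale $k(\gamma)$ can be chosen so that (i)~the boundary error $\ell^{-(\lambda-d)}$ vanishes, (ii)~the number of ``missing'' particles in the corridors is of lower order than $\gamma^{-d}$, and (iii)~the stability constant $B$ does not get amplified when reassembling the sub-configurations. Once these parameters are tuned, the rest is standard convex-analysis bookkeeping.
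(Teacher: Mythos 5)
The paper does not prove this theorem; it is explicitly cited from Ruelle's book (\cite[Theorem~3.4.4]{Ruelle69}), so there is no ``paper's own proof'' to match against. Your sketch captures the standard blueprint of Ruelle's argument---lower bound from stability, approximate super-multiplicativity from temperedness via subdivision into sub-cubes with corridors, approximate mid-point convexity from gluing, and convexity/uniform-continuity bootstrapping to get the limit. That much is in the right direction and broadly aligned with the reference.

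However, several steps as written would not close.
First, your definition $\rho_{\mathrm{cp}}:=\sup\{\rho:\limsup_{\gamma\downarrow 0}f_\gamma(\rho)<\infty\}$ is circular with respect to~\eqref{ruelle_lim3}: it bakes the divergence into the definition. Ruelle defines the close-packing density geometrically (via the maximal number of particles that fit in a box with finite energy, normalized by volume), and \emph{then} proves that the free energy diverges beyond it. With the geometric definition, \eqref{ruelle_lim3} is essentially immediate in the hard-core case, because for $\rho>\rho_{\mathrm{cp}}+\epsilon$ the integral in~\eqref{fparticle} vanishes identically for small $\gamma$; this is not a ``stability plus convexity'' argument (stability only gives a \emph{lower} bound on $f_\gamma$, which can never force blow-up).
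Second, Fatou's lemma is the wrong tool for~\eqref{ruelle_lim2}. The uniform convergence holds only on compacts $[0,\rho_0]\subset[0,\rho_{\mathrm{cp}})$, and the approximating sequence $\rho_\gamma$ may approach $\rho_{\mathrm{cp}}$ from above, where you have no control. What is actually needed is the approximate convexity of $f_\gamma$ itself: fix $\rho_1<\rho_2<\rho_{\mathrm{cp}}$, use the slope inequality for (approximately) convex functions to bound $f_\gamma(\rho_\gamma)$ from below by a secant through $(\rho_1,f_\gamma(\rho_1))$ and $(\rho_2,f_\gamma(\rho_2))$ extrapolated to $\rho_{\mathrm{cp}}$, pass to the limit in $\gamma$ using~\eqref{ruelle_lim1}, and then send $\rho_1,\rho_2\uparrow\rho_{\mathrm{cp}}$. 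This works whether $\rho_\gamma$ approaches $\rho_{\mathrm{cp}}$ from below or above.
Third, the bookkeeping in your steps (b) and (c) needs more care: the cross-interaction bound should have a density factor and scale as $N\rho\,\ell^{-(\lambda-d)}$, and the ``corridor loss'' of particles needs to be re-accommodated (Ruelle places the extra particles carefully, and this is where the temperedness constant and the stability constant $B$ interact). These are fixable, but as stated the estimates are not quite right; I recommend consulting the actual proof in Ruelle rather than reconstructing it from scratch.
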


The limit in (\ref{ruelle_lim1}) is actually uniform, as mentioned (in a slightly different context) in \cite[Remark 3.3.13]{Ruelle69}.
We give the argument here for the sake of completeness.

\begin{corollary}\label{cor:uniform}
  The limit in (\ref{ruelle_lim1}) is uniform.
\end{corollary}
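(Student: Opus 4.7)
The plan is to prove uniform convergence on compact subsets of $[0,\rho_{\mathrm{cp}})$ by a standard compactness/diagonalization argument, exploiting the fact that Theorem~\ref{theo:particle} gives convergence along \emph{any} sequence $\rho_\gamma\to\rho$, not just pointwise convergence at a single $\rho$. This sequential form of convergence is well-known to be equivalent to uniform convergence on compact sets once the limit is continuous.

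Concretely, I would argue by contradiction. Fix any compact $K\subset[0,\rho_{\mathrm{cp}})$ and suppose that $f_\gamma\to f$ does not converge uniformly on $K$. Then there exist $\epsilon>0$, a sequence $\gamma_n\downarrow 0$, and points $\rho_n\in K$ such that
\begin{equation}
    |f_{\gamma_n}(\rho_n) - f(\rho_n)| > \epsilon \quad \text{for all } n.
\end{equation}
By compactness of $K$, after passing to a subsequence (which I still denote by $n$), we may assume $\rho_n\to\rho_\infty\in K\subset[0,\rho_{\mathrm{cp}})$. Now apply Theorem~\ref{theo:particle}, equation~\eqref{ruelle_lim1}, to the sequence $(\rho_n)$ converging to $\rho_\infty<\rho_{\mathrm{cp}}$: this yields $f_{\gamma_n}(\rho_n)\to f(\rho_\infty)$. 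Since $f$ is continuous on $[0,\rho_{\mathrm{cp}})$ (being convex there), we also have $f(\rho_n)\to f(\rho_\infty)$. Subtracting, $|f_{\gamma_n}(\rho_n)-f(\rho_n)|\to 0$, contradicting the choice of $(\rho_n)$.

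There is really no main obstacle: the only subtle point is that the convergence statement in Theorem~\ref{theo:particle} is phrased for an arbitrary approximating sequence $\rho_\gamma\to\rho$, and this sequential form is exactly what powers the compactness argument. One should note that $f_\gamma$ itself need not be continuous (in the discrete case it is defined only on $\gamma^d\mathbb{N}$), but this causes no issue because we are comparing $f_{\gamma_n}(\rho_n)$ directly with $f(\rho_n)$ and only using continuity of the limit $f$. The argument adapts verbatim if one prefers the linearly interpolated extension of $f_\gamma$ to $[0,\infty)$ used elsewhere in the paper, since the extension still satisfies $f_{\gamma_n}(\rho_n)\to f(\rho_\infty)$ whenever $\rho_n\to\rho_\infty<\rho_{\mathrm{cp}}$ (by applying \eqref{ruelle_lim1} to the two lattice points straddling $\rho_n$ and using continuity of $f$).
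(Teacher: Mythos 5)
Your proof is correct and follows essentially the same route as the paper: argue by contradiction, extract a sequence $\rho_n$ witnessing the failure of uniformity, and note that Theorem~\ref{theo:particle} applies to \emph{any} approximating family $\rho_\gamma$, so $f_{\gamma_n}(\rho_n)\to f(\rho_\infty)$, contradicting the assumed $\epsilon$-deviation. You merely spell out the compactness subsequence extraction and the use of continuity of $f$, which the paper's proof leaves implicit.
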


\begin{proof}
  Suppose the limit were not uniform, then there would be $\epsilon>0$ and a sequence $\gamma_i\to0$ such that $|f_{\gamma_i}(n_{\gamma_i})-f(\rho)| \geqslant \epsilon$.
  This contradicts Theorem \ref{theo:particle} since it applies to {\it any} $\rho_\gamma$ and in particular to $n_{\gamma_i}$.
\end{proof}

As we will see below, these results allow us to prove  assumptions 1(a), 1(b), 1(c), and 2(a) of Section \ref{sec:convergence_assumptions}.
To obtain assumption 2(b), we will need to impose a stronger stability condition.
The potential is said to be {\it superstable} if \cite[Section 3.2.9]{Ruelle69}
\begin{equation}
  H(x_1,\cdots,x_N) \geqslant N(C N \gamma^{d}-D)
  \label{superstability}
\end{equation}
for some $C>0$ and $D \geqslant 0$.
(Note that superstability trivially implies stability.)

\begin{lemma}\label{lemma:superstable}
  If the potential is superstable, then
  \begin{equation}
    f_\gamma(\rho)
    \geqslant \rho\left(\rho C-D
    +\frac 1{\beta}(\log\rho-1)
    \right)
    .
  \end{equation}
\end{lemma}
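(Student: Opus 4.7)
The plan is to feed the superstability bound~\eqref{superstability} directly into the definition~\eqref{fparticle} of $f_\gamma$, and then use Stirling's inequality to handle the factorial.

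First, I would note that the superstability assumption gives the uniform pointwise bound
\begin{equation*}
  e^{-\beta H(x_1,\ldots,x_N)} \leq e^{-\beta N(CN\gamma^d - D)}
\end{equation*}
for every configuration $(x_1,\ldots,x_N)\in ([0,\gamma^{-1}]^d)^N$. Since the single-particle volume is $\gamma^{-d}$, integrating termwise yields
\begin{equation*}
  \int \dd{x}_1 \cdots \dd{x}_N\, e^{-\beta H(x_1,\ldots,x_N)} \leq \gamma^{-dN}\, e^{-\beta N(CN\gamma^d - D)}.
\end{equation*}

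Next, I would combine this with Stirling's lower bound $N! \geq (N/e)^N$, which gives $\log(1/N!) \leq -N\log N + N$. Substituting both estimates into~\eqref{fparticle},
\begin{equation*}
  f_\gamma(N\gamma^d) \geq \frac{1}{\beta\gamma^{-d}}\Bigl[N\log N - N - Nd\log\gamma^{-1} + \beta N(CN\gamma^d - D)\Bigr].
\end{equation*}
Finally, I would set $\rho = N\gamma^d$, so that $N/\gamma^{-d} = \rho$ and $\log N = \log\rho + d\log\gamma^{-1}$. The two $d\log\gamma^{-1}$ terms cancel, and the remaining expression simplifies to
\begin{equation*}
  f_\gamma(\rho) \geq \rho\left(C\rho - D + \frac{1}{\beta}(\log\rho - 1)\right),
\end{equation*}
which is the claimed inequality.

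There is essentially no obstacle here: the entire argument is a one-line application of superstability plus Stirling, and the cancellation of the $\log\gamma^{-1}$ terms is automatic once the density variable $\rho = N\gamma^d$ is introduced. The only mild point to be careful about is that the identity $f_\gamma(\rho)$ is defined only for $\rho \in \gamma^d\N$, so the stated inequality should be read on this discrete set (or on its linear interpolation, as used elsewhere in the paper).
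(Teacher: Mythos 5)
Your argument is exactly the paper's proof: plug the superstability lower bound on $H$ into the definition \eqref{fparticle}, bound the integral by $\gamma^{-dN}$ times the constant, apply Stirling's lower bound $N!\ge N^N e^{-N}$, and rewrite in terms of $\rho=N\gamma^d$. The computation and the final cancellation of the $d\log\gamma^{-1}$ terms are correct, and your closing caveat about the domain of $f_\gamma$ is consistent with the linear-interpolation convention used in the paper.
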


\begin{proof}
  Plugging (\ref{superstability}) into (\ref{fparticle}), we find
  \begin{equation}
    f_\gamma(N \gamma^{d})
    \geqslant -\frac1{\beta \gamma^{-d}}\log\frac{\gamma^{-Nd}}{N!}e^{-\beta N^2\gamma^{d}C+\beta ND}
  \end{equation}
  and we conclude using $N! \geqslant N^N e^{-N}$.
\end{proof}

Superstability will thus allow us to ensure assumption 2(b) of Section \ref{sec:convergence_assumptions}.
However, the condition (\ref{superstability}) is not very explicit.
Ruelle derived \cite{ruelle1963classical} a more elementary condition on the potential that implies superstability.
Since in \cite{ruelle1963classical}, this condition is formulated only in three dimensions, and is not separated clearly from the rest of the discussion, we state Ruelle's result here and give a proof, following Ruelle's original.

Ruelle proved \cite{ruelle1963classical} that, if $\phi$ is superstable, then $f_\gamma$ grows at least quadratically at infinity.
This result is not written explicitly in \cite{ruelle1963classical} as a theorem, so we reproduce its proof here.

\begin{lemma}{\rm(\cite{ruelle1963classical})}\label{lemma:super}
  If $\phi(x) \geqslant \phi_0(x)$ where $\phi_0$ is continuous, Lebesgue-integrable, $\int \phi_0(x)\dd{x}>0$ and the Fourier transform of $\phi_0$ is non-negative, then $\phi$ is superstable for sufficiently small $\gamma$.
\end{lemma}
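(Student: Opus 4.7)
The plan is to reduce the superstability bound \eqref{superstability} to a Fourier-space estimate exploiting the two hypotheses on $\phi_0$: non-negativity of its Fourier transform and strict positivity of $\hat{\phi}_0(0)=\int\phi_0\,dx>0$. Since $\phi\ge\phi_0$ and $\phi_0$ may be assumed even (otherwise symmetrize, which preserves all hypotheses),
\begin{equation*}
H(x_1,\ldots,x_N)\ge\sum_{i<j}\phi_0(x_i-x_j) = \tfrac12\sum_{i,j=1}^N\phi_0(x_i-x_j) - \tfrac{N}{2}\phi_0(0),
\end{equation*}
so it suffices to produce $c>0$, independent of $\gamma$ and $N$, with $\sum_{i,j}\phi_0(x_i-x_j)\ge c\,N^2\gamma^d$ for every configuration in $[0,\gamma^{-1}]^d$, whenever $\gamma$ is small enough; the superstability constants $C=c/2$ and $D=\phi_0(0)/2$ then follow at once. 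The starting point is the Plancherel-type identity
\begin{equation*}
\sum_{i,j=1}^N\phi_0(x_i-x_j) = \int_{\R^d}\hat{\phi}_0(k)\,\abs{\hat\rho(k)}^2\,\frac{dk}{(2\pi)^d},\qquad\hat\rho(k):=\sum_{i=1}^N e^{-ik\cdot x_i},
\end{equation*}
which is valid because $\hat\phi_0\in L^1$ (forced by $\hat\phi_0\ge 0$ together with continuity of $\phi_0$ at zero, via a Gaussian approximation-to-the-identity plus monotone convergence), so Fourier inversion gives $\phi_0(x)=\int\hat\phi_0(k)e^{ik\cdot x}dk/(2\pi)^d$ pointwise and one sums over $i,j$. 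The integrand on the right is pointwise non-negative, so any lower bound obtained by restricting the integration region to a neighborhood of $k=0$ is still valid.

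The key steps are: (i) Since $\phi_0\in L^1$, $\hat{\phi}_0$ is continuous, and since $\hat{\phi}_0(0)>0$ there is $\delta_1>0$ with $\hat{\phi}_0(k)\ge\hat{\phi}_0(0)/2$ for $\abs{k}\le\delta_1$. (ii) Translate the configuration so that all $x_i$ lie in $[-\gamma^{-1}/2,\gamma^{-1}/2]^d$; this leaves $\sum_{i,j}\phi_0(x_i-x_j)$ invariant. Set $c_d:=2\pi/(3\sqrt d)$ and $\delta_\gamma:=c_d\gamma$. For any $\abs{k}\le\delta_\gamma$ the bound $\abs{k\cdot x_i}\le\abs{k}\abs{x_i}\le c_d\sqrt d/2=\pi/3$ gives $\cos(k\cdot x_i)\ge 1/2$ uniformly in $i$, hence $\abs{\hat\rho(k)}^2\ge(\operatorname{Re}\hat\rho(k))^2\ge N^2/4$ throughout the ball $\abs{k}\le\delta_\gamma$. (iii) Set $\gamma_0:=\delta_1/c_d$. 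For $\gamma\le\gamma_0$ the ball $\{\abs{k}\le\delta_\gamma\}$ sits inside $\{\abs{k}\le\delta_1\}$, so
\begin{equation*}
\sum_{i,j=1}^N\phi_0(x_i-x_j)\ge\int_{\abs{k}\le\delta_\gamma}\hat{\phi}_0(k)\,\abs{\hat\rho(k)}^2\,\frac{dk}{(2\pi)^d}\ge\frac{\hat{\phi}_0(0)}{2}\cdot\frac{N^2}{4}\cdot\frac{\omega_d\,c_d^{d}\gamma^d}{(2\pi)^d}=:c\,N^2\gamma^d,
\end{equation*}
where $\omega_d$ is the volume of the unit ball in $\R^d$. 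Substituting into the first display yields $H\ge N(CN\gamma^d-D)$ with constants $C,D>0$ independent of $N$ and $\gamma$, which is exactly \eqref{superstability}.

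The conceptual crux is a scale matching dictated by the uncertainty principle: the Fourier cutoff $\delta$ must be small enough ($\delta\le\delta_1$) so that $\hat{\phi}_0$ stays close to $\hat{\phi}_0(0)$, yet large enough ($\delta\gtrsim 1/L=\gamma$) so that the phases $e^{-ik\cdot x_i}$ remain coherent across the box of side $L=\gamma^{-1}$ and deliver the full $N^2$ factor. The unique compromise $\delta\sim\gamma$ produces the volume factor $\abs{B_\delta}\sim\gamma^d$ that matches the $N^2\gamma^d$ growth required by \eqref{superstability}, and it is precisely what forces the restriction to small $\gamma$: one needs $c_d\gamma$ to fit inside the continuity window $\delta_1$ of $\hat{\phi}_0$ around the origin. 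Once this scale matching is arranged, the remaining estimates are routine bookkeeping.
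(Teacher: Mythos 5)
Your proof is correct and, while it shares the paper's basic Fourier strategy (lower bound $H$ by $\tfrac12\sum_{i,j}\phi_0(x_i-x_j)-\tfrac N2\phi_0(0)$, pass to Fourier space via the non-negativity of $\hat\phi_0$, and extract the $N^2\gamma^d$ factor from the region $|k|\lesssim\gamma$), the crucial small-$k$ estimate is handled in a genuinely different and cleaner way. The paper bounds $\big|\sum_i e^{ik\cdot x_i}\big|^2$ for \emph{all} $k$ via a Taylor expansion of the real part that yields the compactly supported minorant $f(p)=\max\{0,\,1-\tfrac12(\cosh(p\sqrt d)-\cos(p\sqrt d))\}$, and then identifies the constant only as a $\gamma\downarrow 0$ limit through dominated convergence. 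You instead translate the configuration to be centered, observe that on the ball $|k|\le c_d\gamma$ (with $c_d=2\pi/(3\sqrt d)$) every phase satisfies $\cos(k\cdot x_i)\ge 1/2$, hence $|\hat\rho(k)|^2\ge N^2/4$ uniformly there, and simply integrate $\hat\phi_0\ge \hat\phi_0(0)/2$ over that ball. This buys you an explicit superstability constant $C$ and an explicit threshold $\gamma_0=\delta_1/c_d$, with no limiting argument, at the mild price of needing the continuity window $\delta_1$ of $\hat\phi_0$ around $0$ (which you correctly obtain from Riemann--Lebesgue and $\hat\phi_0(0)>0$). Your justification that $\hat\phi_0\in L^1$ (Gaussian approximate identity plus monotone convergence, using $\hat\phi_0\ge 0$ and continuity of $\phi_0$ at $0$) is standard and correct, and also supplies $\phi_0(0)=\int\hat\phi_0\,dk/(2\pi)^d\ge 0$, so $D=\phi_0(0)/2\ge 0$ automatically. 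One small remark: since $\hat\phi_0\ge 0$ is a statement about a real-valued Fourier transform, $\phi_0$ is already forced to be even, so the ``otherwise symmetrize'' parenthetical is unnecessary (though harmless).
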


\begin{proof}
  We bound
  \begin{equation}
    H(x_1,\cdots,x_N) \geqslant
    \sum_{i<j}\phi_0(x_i-x_j)
    =
    \frac12\sum_{i,j}\phi_0(x_i-x_j)-\frac N2\phi_0(0)
    .
  \end{equation}
  Now, the Fourier transform of $\phi_0$ is defined as
  \begin{equation}
    \hat\phi_0(k):=\frac1{(2\pi)^{\frac d2}}\int \dd{x}e^{ikx}\phi_0(x)
  \end{equation}
  in terms of which
  \begin{equation}
    \sum_{i,j}\phi_0(x_i-x_j)=
    \frac1{(2\pi)^{\frac d2}}\int \dd{k}\hat\phi_0(k)\sum_{i,j}e^{i(x_i-x_j)k}
    .
    \label{phi2_fourier}
  \end{equation}
  Now, given any unit vector $u$, $\sum_{i,j}e^{ip(x_i-x_j)u}$ is entire in $p$, and it is real, so, expanding the exponential will only yield the even terms:
  \begin{equation}
    \sum_{i,j}e^{ip(x_i-x_j)u}
    =\sum_{n=0}^\infty \frac{(-1)^np^{2n}}{2n!}\sum_{i,j}((x_i-x_j)u)^{2n}
  \end{equation}
  and since $|x_i-x_j|\leqslant \gamma^{-1}\sqrt d$,
  \begin{equation}
    \begin{array}{>\displaystyle l}
    \sum_{i,j}e^{ip(x_i-x_j)u}
    \geqslant\sum_{i,j}\left(1-\sum_{n=1}^\infty \frac{p^{2(2n-1)}}{(2(2n-1))!}(\gamma^{-1}\sqrt d)^{2(2n-1)}\right)
    =\\[0.5cm]\hfill
    =N^2\left(1-\frac{\cosh(\gamma^{-1}p\sqrt d)-\cos(\gamma^{-1}p\sqrt d)}2\right)
    .
  \end{array}
  \end{equation}
  Defining
  \begin{equation}
    f(p):=\max\left\{0,\left(1-\frac{\cosh(p\sqrt d)-\cos(p\sqrt d)}2\right)\right\}
  \end{equation}
  we thus have
  \begin{equation}
    \sum_{i,j}e^{ip(x_i-x_j)u}
    \geqslant
    N^2f(\gamma^{-1}p)
    .
  \end{equation}
  Plugging this into (\ref{phi2_fourier}) we have, recalling that $\hat\phi_0 \geqslant 0$,
  \begin{equation}
    \sum_{i,j}\phi_0(x_i-x_j)\geqslant
    \frac{N^2}{(2\pi)^{\frac d2}}\int \dd{k}\hat\phi_0(k)f(\gamma^{-1}|k|)
    =
    \frac{N^2}{\gamma^d(2\pi)^{\frac d2}}\int \dd{k}\hat\phi_0(\gamma k)f(|k|)
    .
  \end{equation}
  Note that $f$ has compact support and $\hat\phi_0$ is bounded (since $\phi_0$ is integrable) so, by the dominated convergence theorem,
  \begin{equation}
    \lim_{\gamma\to0}\int \dd{k}\hat\phi_0(\gamma k)f(|k|)
    =\hat\phi_0(0)\int \dd{k}f(|k|)
  \end{equation}
  and so, for any $A <\hat\phi_0(0)\int \dd{k}f(|k|)$, there exists $\gamma_0$ such that, if $\gamma \geqslant \gamma_0$, then
  \begin{equation}
    \int \dd{k}\hat\phi_0(\gamma k)f(|k|) \geqslant A
    .
  \end{equation}
  Putting all this together, we find that
  \begin{equation}
    H(x_1,\cdots,x_N)\geqslant \frac{N^2}{2 \gamma^{-d}(2\pi)^{\frac d2}}A-\frac N2\phi_0(0)
    .
  \end{equation}
  
  In summary, $H$ is superstable with $D\equiv \max\{0,\frac12\phi_0(0)\}$ and $C$ can be chosen arbitrarily close to $\frac12(2\pi)^{-\frac d2}\hat\phi_0(0)\int \dd{k}f(|k|)$ (at the expense of making $\gamma$ smaller).
\end{proof}

We now have all the ingredients to easily prove the following proposition.

\begin{proposition}\label{prop:convergence_assumptions}
  If the potential is superstable and tempered, then the conditions of Section \ref{sec:convergence_assumptions} are satisfied.
\end{proposition}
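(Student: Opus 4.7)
The plan is to feed Theorem~\ref{theo:particle} (together with Corollary~\ref{cor:uniform} and Lemma~\ref{lemma:superstable}) directly into the conditions of Section~\ref{sec:convergence_assumptions}, splitting into cases according to the value of $\rho_{\mathrm{cp}}\in[0,\infty]$ produced by Theorem~\ref{theo:particle}. When $\rho_{\mathrm{cp}}<\infty$, I would identify this with the hard-core case of Section~\ref{sec:convergence_assumptions} and set $f$ to be the limit function of Theorem~\ref{theo:particle} (which is convex, hence continuous, on $[0,\rho_{\mathrm{cp}})$). When $\rho_{\mathrm{cp}}=\infty$, I would identify this with the soft-core case, setting $f$ equal to the global limit.

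\textbf{Hard-core case.} Condition 1(a), namely the uniform convergence $f_\gamma\to f$ on $[0,\rho_0]$ for $\rho_0<\rho_{\mathrm{cp}}$, is exactly the content of Corollary~\ref{cor:uniform}. Condition 1(b), the liminf bound at $\rho_{\mathrm{cp}}$, is precisely~\eqref{ruelle_lim2}; the existence of $\lim_{\rho\uparrow\rho_{\mathrm{cp}}}f(\rho)\in(-\infty,\infty]$ that is part of the assumption follows from the convexity of $f$ asserted by Theorem~\ref{theo:particle}. For condition 1(c), the existence of $\rho_{\max}$ with $f_\gamma(\rho)=\infty$ for $\rho>\rho_{\max}$ and all $0<\gamma\le 1$ comes from the hard-core part of the pair potential: if $v(r)=\infty$ for $|r|<r_0$, then the close-packing density in any cube $[0,\gamma^{-1}]^d$ is bounded by a constant $\rho_{\max}$ depending only on $r_0$ and $d$, independent of $\gamma$. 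The uniform divergence~\eqref{eq:uniform convergence above eta cp} is then obtained from~\eqref{ruelle_lim3} via a compactness argument: if $\inf_{\rho\ge\rho_1}f_\gamma(\rho)$ were bounded along a sequence $\gamma_i\downarrow 0$, the minimizers would lie in the compact interval $[\rho_1,\rho_{\max}]$, so a Bolzano--Weierstrass extraction would produce $\rho_{\gamma_i}\to \rho^\ast\in[\rho_1,\rho_{\max}]$ with $\rho^\ast>\rho_{\mathrm{cp}}$, contradicting~\eqref{ruelle_lim3}.

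\textbf{Soft-core case.} Condition 2(a) again follows directly from Corollary~\ref{cor:uniform} (now applied to arbitrary $\rho_0\in[0,\infty)$ since $\rho_{\mathrm{cp}}=\infty$). For condition 2(b), I use Lemma~\ref{lemma:superstable}: superstability gives
\begin{equation}
    \frac{f_\gamma(\rho)}{\tfrac{1}{2}\rho^2}\ge 2C -\frac{2D}{\rho}+\frac{2(\log\rho-1)}{\beta\rho},
\end{equation}
whose liminf as $\gamma\downarrow 0$ and $\rho\to\infty$ is at least $2C>0$, so $\alpha_{\max}\ge 2C\in(0,\infty]$.

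The only genuinely nontrivial step is the uniform divergence~\eqref{eq:uniform convergence above eta cp} in the hard-core case, since the tools in the appendix provide only pointwise convergence along sequences $\rho_\gamma\to\rho$. The compactness-plus-hard-core argument sketched above is what I expect to carry it; everything else is a direct transcription of Theorem~\ref{theo:particle}, Corollary~\ref{cor:uniform}, and Lemma~\ref{lemma:superstable} into the language of Section~\ref{sec:convergence_assumptions}.
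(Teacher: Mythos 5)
Your proposal is correct and follows essentially the same route as the paper: 1(a), 2(a) from Corollary~\ref{cor:uniform}, 1(b) from~\eqref{ruelle_lim2}, 2(b) from Lemma~\ref{lemma:superstable}, and 1(c) from the existence of $\rho_{\max}$ (hard core) combined with~\eqref{ruelle_lim3} via a compactness argument on the minimizer of $f_\gamma$ over $[\rho_1,\rho_{\max}]$. You merely spell out the subsequence extraction for 1(c) and the explicit $\alpha_{\max}\ge 2C$ bound for 2(b), which the paper leaves implicit.
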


\begin{proof}
  Conditions 1(a), and 2(a) follow immediately from (\ref{ruelle_lim1}) and Corollary \ref{cor:uniform}.
  Condition 1(b) follows from (\ref{ruelle_lim2}).
  Condition 2(b) an immediate consequence of Lemma \ref{lemma:superstable}.
  We are left with condition 1(c).

  Suppose the potential has a hard-core: $\phi(x)=\infty$ for $|x|<R$.
  The existence of $\rho_{\mathrm{max}}$ is then obvious.
  Now, $f_\gamma$ is continuous, as it is obtained from the discrete function $f(N \gamma^{d})$ using a linear interpolation, so the infimum of $f_\gamma(\rho)$ over $\rho\in[\rho_1,\rho_{\mathrm{max}}]$ is reached, say at $\rho_\gamma$.
  Condition 1(c) then follows from (\ref{ruelle_lim3}).
\end{proof}

Finally, using Lemma \ref{lemma:super}, we can find many examples of superstable, tempered potentials, and thus find particle models that satisfy the conditions of Section \ref{sec:convergence_assumptions}.

\begin{proposition}{\rm\cite[Appendix]{ruelle1963classical}}\label{prop:superstable_examples}
  The following potentials are superstable and tempered:
  \begin{itemize}
    \item
    In any dimension,
    $\phi(x)\geqslant 0$, $\phi$ is compactly supported, and $\phi(x) \geqslant a>0$ in the vicinity of the origin.

    \item
    In three dimensions, the
    Lennard--Jones potential: $\phi(x)=4 \epsilon((R/|x|)^{12}-(R/|x|)^6)$.

    \item
    In three dimensions, the
    Morse potential: $\phi(x)=\epsilon(e^{-2 \alpha(|x|-R)}-2e^{-\alpha(|x|-R)})$ for $e^{\alpha R}>16$.
  \end{itemize}
\end{proposition}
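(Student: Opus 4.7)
The plan is to verify, for each of the three families, the hypotheses of Lemma~\ref{lemma:super} (which yields superstability) and to check temperedness directly. Temperedness is immediate: in case (i) the potential has compact support; the Lennard--Jones potential decays like $|x|^{-6}$; and the Morse potential decays exponentially. So the crux is the construction of an admissible minorant $\phi_0$ satisfying continuity, Lebesgue-integrability, positivity of its integral, and non-negativity of its Fourier transform.

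For case (i), pick $r,a>0$ with $\phi\ge a$ on $\{|x|\le r\}$. Let $\psi\colon\R^d\to[0,\infty)$ be a smooth, even, radial bump supported in $\{|x|\le r/2\}$, and set $\phi_0 := c\,\psi*\psi$ for a constant $c>0$ to be chosen. Then $\phi_0$ is continuous, compactly supported in $\{|x|\le r\}$ (hence integrable), with $\hat\phi_0 = c|\hat\psi|^2\ge 0$ and $\int\phi_0 = c(\int\psi)^2>0$. Choosing $c$ small enough that $\|\phi_0\|_\infty\le a$ gives $\phi_0\le \phi$ pointwise (using $\phi\ge 0$ outside $\{|x|\le r\}$), and Lemma~\ref{lemma:super} delivers superstability.

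For cases (ii) and (iii), we follow Ruelle's original construction and take $\phi_0$ to be a difference of two Gaussians in $\R^3$, of the form $\phi_0(x) = A e^{-\alpha_1 |x|^2} - C e^{-\alpha_2 |x|^2}$ with $A,C>0$ and $\alpha_1>\alpha_2>0$. Its Fourier transform $\hat\phi_0(k) = A(\pi/\alpha_1)^{3/2}e^{-|k|^2/(4\alpha_1)} - C(\pi/\alpha_2)^{3/2}e^{-|k|^2/(4\alpha_2)}$ is again a difference of Gaussians; since $\alpha_1>\alpha_2$, the first Gaussian decays more slowly in $|k|$, and an elementary monotonicity argument in $|k|^2$ shows that $\hat\phi_0$ attains its infimum at $k=0$. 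The condition $\hat\phi_0\ge 0$ is therefore equivalent to $A\alpha_1^{-3/2}\ge C\alpha_2^{-3/2}$, which simultaneously implies $\int\phi_0\ge 0$; we take strict inequality. It remains to choose the four parameters so that $\phi_0\le \phi$ holds pointwise. The divergent short-range core of $\phi$ (the $|x|^{-12}$ term of Lennard--Jones, or the $e^{-2\alpha(|x|-R)}$ term of Morse) dominates the narrow positive Gaussian $Ae^{-\alpha_1|x|^2}$ when $\alpha_1$ is taken sufficiently large, while the attractive well is minorized by the negative Gaussian $-Ce^{-\alpha_2|x|^2}$ when $C$ slightly exceeds the depth of the well and $\alpha_2$ is tuned to the width of the well.

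The main obstacle is balancing these two constraints: enlarging the negative Gaussian (larger $C$, smaller $\alpha_2$) helps the pointwise bound on the attractive tail but tightens $A\alpha_1^{-3/2}\ge C\alpha_2^{-3/2}$, while a very sharp repulsive Gaussian (large $\alpha_1$) needed to fit under the short-range core forces $A$ to be very large in the same inequality. Carrying this optimization out case by case yields admissible parameters for Lennard--Jones unconditionally, and for Morse precisely in the regime $e^{\alpha R}>16$, which is the threshold at which the depth of the Morse well and the range of its tail can be matched while preserving Fourier-positivity. Once such parameters are fixed, Lemma~\ref{lemma:super} delivers superstability and completes the proof.
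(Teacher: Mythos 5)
The paper does not actually prove this proposition: the text after it simply cites the Appendix of Ruelle's 1963 paper for superstability and observes that temperedness is obvious. So your proposal is a genuine attempt to reconstruct the content of that citation, and its overall architecture (minorize $\phi$ by an admissible $\phi_0$ and invoke Lemma~\ref{lemma:super}) is exactly Ruelle's and exactly what the paper relies on.

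Your treatment of the first bullet is complete and correct. The choice $\phi_0 = c\,\psi*\psi$ with $\psi$ a real, even, nonnegative bump supported in $\{|x|\le r/2\}$ gives a continuous, compactly supported, integrable function with nonnegative Fourier transform $\propto \hat\psi^2$ and strictly positive integral, and the cutoff $\|\phi_0\|_\infty\le a$ forces $\phi_0\le\phi$ pointwise because $\phi\ge a$ on $\{|x|\le r\}$ and $\phi\ge 0=\phi_0$ outside. This construction is dimension-independent, which nicely supplies the ``extends trivially to arbitrary dimensions'' remark made in the paper for this bullet.

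The Lennard--Jones and Morse cases, however, contain a genuine gap. Your reduction of the Fourier-positivity of $\phi_0(x)=Ae^{-\alpha_1|x|^2}-Ce^{-\alpha_2|x|^2}$ to the inequality $A\alpha_1^{-3/2}\ge C\alpha_2^{-3/2}$ is correct (the ratio of the two Gaussians in $k$-space is monotone in $|k|^2$ when $\alpha_1>\alpha_2$), and you correctly observe that this is the same condition as $\int\phi_0\ge 0$. But the substantive step --- exhibiting parameters $(A,C,\alpha_1,\alpha_2)$ for which $\phi_0\le\phi$ holds pointwise, compatibly with the Fourier-positivity constraint --- is asserted rather than verified. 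In particular, the claim that the optimization ``yields admissible parameters for Lennard--Jones unconditionally, and for Morse precisely in the regime $e^{\alpha R}>16$'' is exactly the computational heart of Ruelle's appendix and is the only place the threshold $e^{\alpha R}>16$ can appear; without carrying it out, your argument does not establish the Morse case, nor does it explain why that constant and not some other emerges. To close the gap you would need to make the pointwise comparison explicit: bound the repulsive core of $\phi$ below by the narrow positive Gaussian on a ball around the origin, bound the attractive tail of $\phi$ below by the wide negative Gaussian on the complement, and then check that the resulting inequalities on $(A,C,\alpha_1,\alpha_2)$ are consistent with $A\alpha_1^{-3/2}>C\alpha_2^{-3/2}$, tracking how the Morse parameters $(\epsilon,\alpha,R)$ enter so that the condition $e^{\alpha R}>16$ materializes.
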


It is proved in \cite{ruelle1963classical} that these potentials are superstable (the first bullet point is only stated in three dimensions, but its proof extends trivially to arbitrary dimensions).
The fact that they are tempered is obvious.

\section{Continuity results}

We prove here a useful continuity property of Gibbs measures of the $(J,\omega)$-spin models, used in the proof of Theorem \ref{thm:DS theorem}, namely that any (subsequential) limit of these measures in distribution along a convergent sequence $((J_j,\omega_j))_{j\ge 1}$, $(J_j,\omega_j)\rightarrow(J,\omega)$, is a \emph{Gibbs measure} of the $(J,\omega)$-spin model.
We note that a result of a similar flavor is proven in \cite[Theorem 4.17]{georgii2011gibbs}, although it does not apply directly to our situation.

\begin{proposition}
\label{prop:convergence-of-Gibbs-measures}
Let $J_j,J\ge 0$, and $\omega_j,\omega$, $j\ge 1$, be Borel measures on $\R^n$ with finite, positive total measure.
Suppose that $J_j\rightarrow J$ and $\omega_j\rightarrow\omega$ in the sense of \eqref{eqn:good regions_continuity}.
For each~$j$, let $\P_j$ be a Gibbs measures of the $(J_j,\omega_j)$-spin model, and suppose that $(\P_j)_{j\ge 1}$ converges in distribution to $\P$.
Then, $\P$ is a Gibbs measure of the $(J, \omega)$-spin model.
\end{proposition}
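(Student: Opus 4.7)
The strategy is to verify the DLR equations for $\P$ by passing to the limit in those of $\P_j$. For each finite $\Lambda\subset\Z^d$ and each bounded continuous function $g$ depending on finitely many spins, let
\begin{equation*}
(\gamma_\Lambda^{J,\omega}g)(\eta) := \frac{\int g(\eta'_\Lambda,\eta_{\Lambda^c})\, e^{-H^{(J)}_\Lambda(\eta'_\Lambda,\eta_{\Lambda^c})}\prod_{v\in\Lambda}\omega(\dd{\eta}'_v)}{\int e^{-H^{(J)}_\Lambda(\eta'_\Lambda,\eta_{\Lambda^c})}\prod_{v\in\Lambda}\omega(\dd{\eta}'_v)}
\end{equation*}
denote the Gibbs specification, where $H^{(J)}_\Lambda(\eta'_\Lambda,\eta_{\Lambda^c}):=J\sum_{\{v,w\}\cap\Lambda\ne\emptyset}\|\eta'_v-\eta'_w\|^2$ involves only finitely many edges. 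Since bounded continuous cylinder functions separate probability measures on $(\R^n)^{\Z^d}$, $\P$ being Gibbs for $(J,\omega)$ follows from $\int g\,d\P=\int (\gamma_\Lambda^{J,\omega}g)\,d\P$ for all such $\Lambda,g$. The DLR property of $\P_j$ gives this identity with $(J_j,\omega_j)$ in place of $(J,\omega)$; the left side $\int g\,d\P_j$ converges to $\int g\,d\P$ by weak convergence of $\P_j\to\P$ applied to the bounded continuous $g$, so it suffices to show
\begin{equation*}
    \int h_j\,d\P_j\;\longrightarrow\;\int h\,d\P,\qquad h_j:=\gamma_\Lambda^{J_j,\omega_j}g,\quad h:=\gamma_\Lambda^{J,\omega}g.
\end{equation*}

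The main step is to establish that $h_j\to h$ uniformly on each compact subset $C$ of $(\R^n)^{\partial\Lambda}$, where $\partial\Lambda\subset\Lambda^c$ is the finite set of $\Lambda^c$-neighbors of $\Lambda$ (noting $h_j,h$ depend only on these coordinates, and $\|h_j\|_\infty\le\|g\|_\infty$ since $H^{(J_j)}_\Lambda\ge 0$). The quotient defining $h_j$ is invariant under rescaling $\omega_j$, so I may replace $\omega_j$ throughout by the probability measure $\bar\omega_j$ of~\eqref{eqn:normalized probability measure}; hypothesis~\eqref{eqn:good regions_continuity} is then precisely $\bar\omega_j\to\bar\omega$ weakly, and Prokhorov's theorem on $\R^n$ yields tightness of $(\bar\omega_j)$. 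Writing $h_j=N_j/D_j$ with numerator and denominator given by the above integrals (against $\bar\omega_j^{\otimes\Lambda}$), the uniform convergence $N_j\to N$ on $C$ (and analogously $D_j\to D$) proceeds in two stages: (i) using $|e^{-a}-e^{-b}|\le|a-b|$ for $a,b\ge 0$ and $J_j\to J$, replace $e^{-H^{(J_j)}_\Lambda}$ by $e^{-H^{(J)}_\Lambda}$ with an error controlled, after restricting to the high-$\bar\omega_j^{\otimes\Lambda}$-probability region $\{\eta'_v\in L\ \forall v\in\Lambda\}$ for a tightness-supplied compact $L$, by $\|g\|_\infty|J_j-J|\cdot\sup_{\eta'\in L^{|\Lambda|},\eta\in C}\sum_{\{v,w\}}\|\eta'_v-\eta'_w\|^2$, which is uniform in $j$ and vanishes as $j\to\infty$; (ii) for the remaining integral against $\bar\omega_j^{\otimes\Lambda}$ with fixed exponent $e^{-H^{(J)}_\Lambda}$, pass to the weak limit $\bar\omega_j^{\otimes\Lambda}\to\bar\omega^{\otimes\Lambda}$ using uniform continuity of the bounded integrand $g(\cdot,\eta)e^{-H^{(J)}_\Lambda(\cdot,\eta)}$ on $L^{|\Lambda|}\times C$ and a finite $\eta$-net argument. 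Continuity and strict positivity of $D$ (hence $\inf_C D>0$) then transfer the uniform convergence to the ratio.

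The final step combines this uniform convergence with the (Prokhorov-automatic) tightness of $(\P_j)$: given $\epsilon>0$, pick compact $K\subset(\R^n)^{\partial\Lambda}$ with $\P_j(K^c),\P(K^c)<\epsilon$ for all $j$ (using tightness of the marginals on $\partial\Lambda$), and estimate $|\int h_j\,d\P_j-\int h\,d\P|\le\sup_K|h_j-h|+2\|g\|_\infty\epsilon+|\int h\,d\P_j-\int h\,d\P|$; the first term vanishes in $j$ by uniform convergence on $K$, the last by weak convergence applied to the bounded continuous $h$. The main obstacle is the uniform convergence $h_j\to h$ on compacts, where one must couple the delicate (no-moment) weak convergence of $\bar\omega_j$ with the $J$-convergence of the exponent; tightness of $(\bar\omega_j)$ is the key ingredient that allows localization to a compact spin region on which the bound $|e^{-J_jQ}-e^{-JQ}|\le|J_j-J|\cdot Q$ becomes uniformly effective.
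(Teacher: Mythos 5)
Your proposal is correct, and it follows the same overall strategy as the paper: verify the DLR equations for $\P$ by passing to the limit in those of $\P_j$, after normalizing $\omega_j$ to a probability measure $\bar\omega_j$, splitting the exponential via a Lipschitz estimate in $J$, and then using weak convergence of the product measures $\bar\omega_j^{\otimes\Lambda}\to\bar\omega^{\otimes\Lambda}$. Two small differences are worth noting, both in your favor. First, you encode the DLR property as stationarity of $\P$ under the specification kernel $\gamma_\Lambda^{J,\omega}$ rather than as an identity of conditional expectations tested against $\mathscr{F}_{\Lambda^c}$-measurable $g$'s; this is legitimate because the Gibbsian specification is proper, so $\P\gamma_\Lambda^{J,\omega}=\P$ is equivalent to the DLR condition, and agreement on bounded continuous cylinder functions determines the measures on the Polish product space. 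Second, and more substantively, the paper's proof dispatches the term $\E_j\!\left[g\cdot(\E^{\Lambda,\eta_{\Lambda^c}}_{J,\omega}[f]-\E^{\Lambda,\eta_{\Lambda^c}}_{J_j,\omega_j}[f])\right]$ by appealing to Lemma~\ref{lem:convergence-of-Gibbs-measures_continuity}, but that lemma only gives pointwise convergence in $\tau$, which is not by itself enough to pass to the limit against the changing measures $\P_j$; one needs exactly what you supply, namely uniform convergence of $\tau\mapsto\E^{\Lambda,\tau}_{J_j,\omega_j}[f]$ on compact boundary configurations together with tightness of $(\P_j)$ (available from Prokhorov since $\P_j\to\P$). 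Your localization to a compact spin region $L^{|\Lambda|}$ before applying the elementary Lipschitz bound $|e^{-a}-e^{-b}|\le|a-b|$ is exactly what makes the $J$-perturbation estimate uniform (the paper instead uses the sharper fact that $x\mapsto e^{-tx}$ is Lipschitz on $[a,b]\subset(0,\infty)$ uniformly in $t\ge 0$, which avoids localization at that step but assumes $J>0$); and when you pass $\bar\omega_j^{\otimes\Lambda}\to\bar\omega^{\otimes\Lambda}$ on $L^{|\Lambda|}$ you should choose $L$ with $\bar\omega(\partial L)=0$ (or use a continuous cutoff) so that the restricted weak convergence is valid, a standard point. Modulo such routine details the argument is complete and, if anything, more careful than the paper's own exposition.
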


To prove Proposition \ref{prop:convergence-of-Gibbs-measures}, it is necessary to make the notion of prescribed boundary conditions, alluded to after \eqref{eq:box model finite-volume partition function periodic}, more explicit.
Given a finite, non-empty $\Lambda\subset\Z^d$ and a configuration $\tau:\Lambda^c\to\R^n$, the set of configurations with prescribed boundary conditions $\tau$ is
\begin{equation}
    \Omega^{\Lambda,\tau}:=\set{\eta:\Z^d\to\R^n\mid\eta_v=\tau_v\text{ for all }v\in\Lambda^c},
\end{equation}
the corresponding finite-volume Hamiltonian $H^{\Lambda,\tau}_J:\Omega^{\Lambda,\tau}\to\R$ is defined by
\begin{equation}
    H^{\Lambda,\tau}_J(\eta) := J\sum_{\substack{v\sim w\\\set{v,w}\cap\Lambda\ne\emptyset}}\norm{\eta_v-\eta_w}^2,
\end{equation}
and the corresponding finite-volume Gibbs measure is the probability measure $\P_{J,\omega}^{\Lambda,\tau}$ on $\Omega^{\Lambda,\tau}$ given by
\begin{equation}
    \P_{J,\omega}^{\Lambda,\tau}(\dd{\eta}) := \frac1{\Xi^{\Lambda,\tau}_{J,\omega}} e^{-H^{\Lambda,\tau}_J(\eta)}\prod_{v\in\Lambda} \omega(\dd{\eta}_v)
\end{equation}
where $\Xi^{\Lambda,\tau}_{J,\omega}$ is the normalization constant which makes $\P^{\Lambda,\tau}_{J,\omega}$ into a probability measure.

We start by making the following observation.
\begin{lemma}
\label{lem:convergence-of-Gibbs-measures_continuity}
    Let $\Lambda\subset\Z^d$ be finite and $f:\Omega\rightarrow\R$ be bounded and continuous.
    Then, $\E_{J,\omega}^{\Lambda,\tau}[f]$ is continuous respectively in $(J,\omega)$ and in $\tau$.
\end{lemma}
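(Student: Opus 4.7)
Write the expectation as a ratio
\begin{equation*}
\E^{\Lambda,\tau}_{J,\omega}[f]=\frac{N(J,\omega,\tau)}{D(J,\omega,\tau)},\qquad N:=\int f(\eta_\Lambda\sqcup\tau)\,e^{-H^{\Lambda,\tau}_J(\eta_\Lambda\sqcup\tau)}\prod_{v\in\Lambda}\omega(\dd\eta_v),
\end{equation*}
with $D$ the same integral taken with $f\equiv 1$. Both integrands are bounded by $\|f\|_\infty$ (using $H^{\Lambda,\tau}_J\ge 0$), and $D>0$ because $\omega(\R^n)>0$ and the Boltzmann factor is strictly positive. Moreover, $N$ and $D$ are each homogeneous of degree $|\Lambda|$ in $\omega$, so the ratio depends on $\omega$ only through the normalized probability measure $\bar\omega:=\omega/\omega(\R^n)$; in particular, the hypothesis~\eqref{eqn:good regions_continuity} on the family $\omega_\lambda$ is exactly weak convergence of the $\bar\omega_\lambda$. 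It therefore suffices to prove separate convergence of $N$ and $D$ under each of the two continuity statements, and then apply the ratio rule using $D>0$.

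For continuity in $(J,\omega)$, fix $\tau$ and take $(J_k,\omega_k)\to(J,\omega)$, replacing each $\omega_\bullet$ by $\bar\omega_\bullet$. Regarding the integrand of $N$ as a function of $\eta_\Lambda\in(\R^n)^\Lambda$,
\begin{equation*}
g_k(\eta_\Lambda):=f(\eta_\Lambda\sqcup\tau)\,e^{-H^{\Lambda,\tau}_{J_k}(\eta_\Lambda\sqcup\tau)},
\end{equation*}
this sequence is uniformly bounded by $\|f\|_\infty$ and converges to its $J$-counterpart $g$ uniformly on every compact subset of $(\R^n)^\Lambda$, by joint continuity of $H^{\Lambda,\tau}_J$ in $(J,\eta_\Lambda)$ together with continuity of $f$. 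The weak convergence $\bar\omega_k\to\bar\omega$ on $\R^n$ tensorizes to weak convergence of the product measures $\bar\omega_k^{\otimes\Lambda}\to\bar\omega^{\otimes\Lambda}$ on $(\R^n)^\Lambda$, and in particular the family $\{\bar\omega_k^{\otimes\Lambda}\}$ is automatically tight. A standard estimate---split at a compact set $K$ carrying all but $\eps$ of each $\bar\omega_k^{\otimes\Lambda}$-mass, exploit uniform-on-compacts convergence on $K$, and use the uniform bound $\|f\|_\infty$ on $K^c$---then yields $\int g_k\,\dd\bar\omega_k^{\otimes\Lambda}\to\int g\,\dd\bar\omega^{\otimes\Lambda}$; the same argument with $f\equiv 1$ handles $D$.

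For continuity in $\tau$, fix $(J,\omega)$ and let $\tau^{(k)}\to\tau$ in the product topology on $(\R^n)^{\Lambda^c}$. The Hamiltonian $H^{\Lambda,\tau}_J$ depends on $\tau$ only through the finite restriction $\tau|_{\partial\Lambda}$, making this purely finite-dimensional continuity; and $f$ is continuous on $\Omega$ in the product topology, so for each fixed $\eta_\Lambda$ the integrand of $N$ converges pointwise as $k\to\infty$. Since it remains bounded by $\|f\|_\infty$, dominated convergence with respect to the finite product measure $\omega^{\otimes\Lambda}$ delivers the required convergence of $N$, and identically of $D$. The only mildly non-routine step is the $(J,\omega)$-case, where both the underlying measure and the integrand vary simultaneously; it is resolved by the tightness/uniform-on-compacts argument above.
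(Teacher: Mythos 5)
Your proof is correct, and the $\tau$-continuity part coincides with the paper's (both are a direct application of bounded/dominated convergence, using continuity of $H^{\Lambda,\tau}_J$ and $f$ in $\tau$). For the $(J,\omega)$-continuity, however, you take a genuinely different route. The paper isolates the $J$-dependence via an elementary Lipschitz estimate: for $0<a\le b$ there is $C$ with $\abs{e^{-tx}-e^{-ty}}\le C\abs{x-y}$ for all $x,y\in[a,b]$, $t\ge 0$; it then writes
\begin{equation*}
\bar{\Xi}^{\Lambda,\tau}_{J_j,\omega_j}
=\int\prod_{v\in\Lambda}\bar\omega_j(\dd\eta_v)\bigl[e^{-H^{\Lambda,\tau}_{J_j}}-e^{-H^{\Lambda,\tau}_J}\bigr]
+\int\prod_{v\in\Lambda}\bar\omega_j(\dd\eta_v)\,e^{-H^{\Lambda,\tau}_J},
\end{equation*}
bounds the first integral by $C\abs{J_j-J}$, and sends the second to $\bar\Xi^{\Lambda,\tau}_{J,\omega}$ via weak convergence of the product measures $\bar\omega_j^{\otimes\Lambda}\to\bar\omega^{\otimes\Lambda}$ applied to the fixed bounded continuous integrand $e^{-H^{\Lambda,\tau}_J}$. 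You instead handle the joint dependence at once: uniform bound $\|f\|_\infty$ on the integrands $g_k$, uniform-on-compacts convergence $g_k\to g$ from joint continuity of $H$ in $(J,\eta_\Lambda)$, and tightness of $\{\bar\omega_k^{\otimes\Lambda}\}$ (automatic from their weak convergence) to split at a compact set carrying all but $\eps$ of the mass. Both arguments are valid; yours is somewhat more robust. In particular, the Lipschitz estimate as stated in the paper needs $x,y$ bounded away from $0$, so the paper's decomposition implicitly requires $J$ (and the $J_j$ eventually) to be bounded below by a positive constant---harmless in context since $J_{\lambda,\gamma}=\tfrac12\beta J_2\gamma^{-d}>0$, but a point the paper glosses over. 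Your tightness/locally-uniform argument does not rely on such a lower bound and would go through even with $J=0$.
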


\begin{proof}
    Recall the normalization $\bar{\omega}$ of $\omega$ from \eqref{eqn:normalized probability measure}.
    Write
    \begin{align}
        \E_{J,\omega}^{\Lambda,\tau}[f]
        {}&=\frac{1}{\bar{\Xi}_{J,\omega}^{\Lambda,\tau}}\int\prod_{v\in\Lambda}\bar{\omega}(\dd{\eta}_v)e^{-H_{J}^{\Lambda,\tau}(\eta)}f(\eta),
        \label{eqn:convergence-of-Gibbs-measures_auxiliary_expectation}
        \\
        \bar{\Xi}_{J,\omega}^{\Lambda,\tau}{}&:=
        \int\prod_{v\in\Lambda}\bar{\omega}(\dd{\eta}_v)e^{-H_{J}^{\Lambda,\tau}(\eta)}.
    \end{align}
    The continuity of $\E_{J,\omega}^{\Lambda,\tau}[f]$ in $\tau$ follows from the bounded convergence theorem and the continuity of $H_{J}^{\Lambda,\tau}$ and $f$ in $\tau$.
    For the continuity of $\E_{J,\omega}^{\Lambda,\tau}[f]$ in $(J,\omega)$, we rely on the following elementary observation: given $0<a\le b$, there exists a constant $C>0$ such that for all $x,y\in[a,b]$ and $t\ge 0$, $\abs{e^{-tx}-e^{-ty}}\le C\abs{x-y}$.
    Let $((J_j,\omega_j))_{j\ge 1}$ be a sequence converging to $(J,\omega)$ as $j\rightarrow\infty$.
    We write
    \begin{equation}
    \label{eqn:convergence-of-Gibbs-measures_auxiliary_partition-function}
        \bar{\Xi}_{J_j,\omega_j}^{\Lambda,\tau}
        =\int\prod_{v\in\Lambda}\bar{\omega}_{j}(\dd{\eta}_v)\left[e^{-H_{J_j}^{\Lambda,\tau}(\eta)}-e^{-H_{J}^{\Lambda,\tau}(\eta)}\right]
        +\int\prod_{v\in\Lambda}\bar{\omega}_{j}(\dd{\eta}_v)e^{-H_{J}^{\Lambda,\tau}(\eta)}.
    \end{equation}
    We bound the first integral as follows.
    Recalling the form \eqref{Jw_ham} of the Hamiltonian and using that the sequence $(J_j)_{j\ge 1}$ is necessarily bounded, we deduce using the earlier observation that there exists a constant $C>0$ such that 
    \begin{equation}
        \abs{\exp{-H_{J_j}^{\Lambda,\tau}(\eta)}-\exp{-H_{J}^{\Lambda,\tau}(\eta)}}\le C\abs{J_{j}-J},
    \end{equation}
    so the first integral of \eqref{eqn:convergence-of-Gibbs-measures_auxiliary_partition-function} is bounded in modulus by $C\abs{J_j-J}$, which vanishes as $j\rightarrow\infty$.
    For the second integral of \eqref{eqn:convergence-of-Gibbs-measures_auxiliary_partition-function}, we note that $\bar{\omega}_{j}\rightarrow\bar{\omega}$ in distribution implies the convergence in distribution of the corresponding product measures \cite[Theorem 2.8]{billingsley2013convergence}: $\prod_{v\in\Lambda}\bar{\omega}_{j}\rightarrow\prod_{v\in\Lambda}\bar{\omega}$.
    By the continuity and non-negativity of the Hamiltonian, we conclude that the second integral converges to $\bar{\Xi}_{J,\omega}^{\Lambda,\tau}$ as $j\rightarrow\infty$.
    Having thus shown the continuity of $\bar{\Xi}_{J,\omega}^{\Lambda,\tau}$ in $(J,\omega)$, we note that the same argument applies to the integral in \eqref{eqn:convergence-of-Gibbs-measures_auxiliary_expectation}, which completes the proof.
\end{proof}

We now deduce Proposition \ref{prop:convergence-of-Gibbs-measures}.

\begin{proof}[Proof of Proposition \ref{prop:convergence-of-Gibbs-measures}]
Our goal is to prove that $\P$ verifies the DLR condition \cite[Definition 2.9]{georgii2011gibbs} with the Gibbsian specifications of the $(J,\omega)$-spin model.
By \cite[Chapter 3, Proposition 4.6(b)]{ethier2009markov}, it suffices to show that, for all finite $\Lambda\subset\Z^d$ and bounded, continuous $f:\Omega\rightarrow\R$,
\begin{equation}
    \E[f\mid\eta_{\Lambda^c}]=\E^{\Lambda,\eta_{\Lambda^c}}_{J,\omega}[f],
\end{equation}
which is, in turn, verified if for all bounded, continuous, and $\mathscr{F}_{\Lambda^c}$-measurable $g:\Omega\rightarrow\R$,
\begin{equation}
\label{eqn:convergence-of-Gibbs-measures_goal}
    \E[g\cdot\E[f\mid\eta_{\Lambda^c}]]
    =\E[g\cdot\E^{\Lambda,\eta_{\Lambda^c}}_{J,\omega}[f]].
\end{equation}

As $g$ is $\mathscr{F}_{\Lambda^c}$-measurable, the LHS of \eqref{eqn:convergence-of-Gibbs-measures_goal} reduces to $\E[f\cdot g]$.
In the meantime, we write the RHS of \eqref{eqn:convergence-of-Gibbs-measures_goal} as
\begin{equation}
\label{eqn:convergence-of-Gibbs-measures_rhs}
\begin{multlined}
    \E[g\cdot\E^{\Lambda,\eta_{\Lambda^c}}_{J,\omega}[f]]
    =\left(\E[g\cdot\E^{\Lambda,\eta_{\Lambda^c}}_{J,\omega}[f]]
    -\E_{j}[g\cdot\E^{\Lambda,\eta_{\Lambda^c}}_{J,\omega}[f]]\right)
    \\
    +\E_{j}[g\cdot(\E^{\Lambda,\eta_{\Lambda^c}}_{J,\omega}[f]-\E^{\Lambda,\eta_{\Lambda^c}}_{J_j,\omega_j}[f])]
    +\E_{j}[g\cdot\E^{\Lambda,\eta_{\Lambda^c}}_{J_j,\omega_j}[f]],
\end{multlined}
\end{equation} 
where the last term further reduces to
\begin{equation}
    \E_{j}[g\cdot\E^{\Lambda,\eta_{\Lambda^c}}_{J_j,\omega_j}[f]]
    =\E_{j}[g\cdot\E_{j}[f\mid\eta_{\Lambda^c}]]
    =\E_{j}[f\cdot g].
\end{equation}
By Lemma \ref{lem:convergence-of-Gibbs-measures_continuity}, the first two terms on the RHS of \eqref{eqn:convergence-of-Gibbs-measures_rhs} both vanish as $j\rightarrow\infty$, so
\begin{equation}
    \E[g\cdot\E[f\mid\eta_{\Lambda^c}]]
    =\E[f\cdot g]
    =\lim_{j\rightarrow\infty}\E_{j}[f\cdot g]
    =\E[g\cdot\E^{\Lambda,\eta_{\Lambda^c}}_{J,\omega}[f]],
\end{equation}
as required.
\end{proof}

\end{document}